\renewcommand\expandafter\subsection\expandafter
  \newcommand\@fb@secFB{\FloatBarrier
    \gdef\@fb@afterHHook{\@fb@topbarrier \gdef\@fb@afterHHook{}}}%
  \g@addto@macro\@afterheading{\@fb@afterHHook}%
  \gdef\@fb@afterHHook{}%
\newcommand{\later}[1]{}
\newcommand{\cA}{\mathcal{A}}
\newcommand{\cQ}{\mathcal{Q}}
\newcommand{\mA}{\mathcal{A}} 
\newcommand{\mH}{\mathcal{H}}
\newcommand{\mO}{\mathcal{O}}
\newcommand{\fn}{\mathfrak{n}}
\newcommand{\poly}{\mathop{poly}}
\DeclareMathOperator*{\argmax}{arg\,max}
\DeclareMathOperator{\avg}{avg}
\newtheorem{theorem}{Theorem}
\newtheorem{lemma}{Lemma}
\DeclarePairedDelimiter\ceil{\lceil}{\rceil}
\newcommand{\ccb}{\cellcolor[HTML]{DAE8FC}}
\newcommand{\ccr}{\cellcolor[HTML]{FFCCC9}}
\newcommand{\ccrr}{\cellcolor[HTML]{FF8F8C}}
\newcommand{\ccrrr}{\cellcolor[HTML]{F35955}}
\newcommand{\ccg}{\cellcolor[HTML]{9AFF99}}
\newcommand{\ccgg}{\cellcolor[HTML]{87DC86}}
\newcommand{\ccggg}{\cellcolor[HTML]{4EBC4E}}
\newcommand{\ZQ}{$\textbf{QSearch}_{\text{Zalka}}$ }
\newcommand{\qmi}{\textbf{QMax}_{\infty}}
\begin{document}

\setstretch{1}

\title{Quantum Algorithms for Community Detection and their Empirical Run-times}
\author{Chris Cade, Marten Folkertsma, Ido Niesen, and Jordi Weggemans}
\date{\today}

\captionsetup{width=400pt}

\date{\today}

\maketitle

\begin{abstract}
\noindent We apply our recent work~\cite{ourotherpaper} on empirical estimates of quantum speedups to the practical task of community detection in complex networks. We design several quantum variants of a popular classical algorithm -- the \textit{Louvain algorithm} for community detection -- and first study their complexities in the usual way, before analysing their complexities empirically across a variety of artificial and real inputs. We find that this analysis yields insights not available to us via the asymptotic analysis, further emphasising the utility in such an empirical approach. In particular, we observe that a complicated quantum algorithm with a large asymptotic speedup might not be the fastest algorithm in practice, and that a simple quantum algorithm with a modest speedup might in fact be the one that performs best. Moreover, we repeatedly find that overheads such as those arising from the need to amplify the success probabilities of quantum sub-routines such as Grover search can nullify any speedup that might have been suggested by a theoretical worst- or expected-case analysis. 
\end{abstract}

\setcounter{tocdepth}{2}
\tableofcontents

\section{Introduction}\label{sec:introduction}
Estimating the impact that a quantum computer could have for a given computational problem requires an honest assessment of the potential improvement in speed\footnote{or, indeed, accuracy.} that a quantum algorithm might achieve over the state-of-the-art classical one. This is a tricky task made harder by the current level of maturity of quantum hardware: quantum algorithms can only be implemented for very small problem instances, and even then the output is so marred by noise that it is often difficult to assess even the correctness of the computation. As such, one often resorts to theoretical analyses of quantum algorithms and proves, rigorously, that they are likely to achieve a speedup over an equivalent classical algorithm, assuming perhaps that overheads such as those from error correction are suitably modest. Such analyses usually yield upper bounds on the worst-case run-times of the quantum algorithms -- very often for artificially constructed, `difficult' problem instances -- and a speedup is concluded whenever these bounds scale better than those obtained via an analysis of the best-known classical algorithm. 

Whilst this approach offers valuable insight, it does not tell the entire story. For instance, it can be the case that a provable run-time is not available, or that it is but ends up being uninformative, as is commonly the case for heuristic algorithms that have large worst-case run-times, but often perform well on instance of practical interest. Moreover, for a particular computational problem it might be possible to design several variants of a quantum algorithm, all with run-times that scale differently in different situations, and it can be difficult to decide which one will be fastest in practice. Often the performance of such algorithms will depend on the particular inputs on which they are run, something that can be somewhat difficult to account for in a mathematical study. Moreover, it can be the case that a quantum version of a classical algorithm only speeds-up part of the algorithm, and how much of an overall speedup can be obtained is input dependent, and not clear from the asymptotic behaviour. 

As such, it is likely that it will be necessary to study the performance of quantum algorithms from a more empirical standpoint in order to assess their usefulness in the time before large, fault tolerant quantum computers become widespread. In~\cite{ourotherpaper}, we suggested a general framework for doing so that was based on the combination of carefully derived upper bounds on the complexities of quantum sub-routines with classical simulation of the entire algorithm, in a way that allowed for the estimation of the quantum run-time. We gave evidence for the utility of this approach by studying the potential quantum speed-ups that might arise from quantum versions of classical heuristic algorithms for solving {\sc maxsat}, an optimisation problem that generalises the {\sc sat} problem. 

In this paper, we apply the methodology developed in~\cite{ourotherpaper} to a problem of more practical interest. We consider quantum speedups of a popular heuristic algorithm that goes by the name of the \emph{Louvain algorithm}\footnote{The algorithm takes the name of the city in which it was developed. The original paper describing the method has been cited over 15,000 times, and the algorithm itself can be found in all popular graph/network analysis software packages.} which forms one of the main tools for tackling a problem ubiquitous in the study of complex networks: that of \emph{community detection}. Together with its descendants, the Louvain algorithm has successfully been used to study large sparse networks with millions of vertices~\cite{blondel08,de2011generalized,que2015scalable,Traag2019}. Taking this as a use-case, we demonstrate further the usefulness of such a `semi-empirical' approach in both estimating the potential for quantum speedups, as well as in the \textit{design} of quantum algorithms for a particular problem. In particular, we show how a numerical study can unveil significant performance differences between various quantum algorithms for the same problem, that were not obvious a priori from an asymptotic analysis alone.

\paragraph{Community detection}
One of the main topics in the study of complex networks is whether the nodes in the network form densely connected clusters, called communities. Uncovering the community structure of a network allows for a better understanding of the network as a whole. The task of partitioning the network into communities is known as community detection. Community detection plays an important role in a variety of topics, such as (but by no means limited to) social networks~\cite{Ozer16}, recommendation systems~\cite{Ahmed13}, E-commerce~\cite{Rios14}, scientometrics~\cite{Ji16}, biological systems and healthcare~\cite{Wang2005}, and economics~\cite{Gui2014}.

Community detection falls under the broader category of graph partitioning: formally, community detection is the task of partitioning the vertex set of a graph by maximizing a particular function that expresses the quality of the partition. The most commonly used metric is the \emph{modularity}, which assigns a value between -1 and 1 to each partition of the vertex set of the graph~\cite{Girvan02} by assigning an effective weight\footnote{The actual edge weight minus the expected weight in the so-called \emph{configuration model}.} to each edge and then summing the effective weights of all edges connecting vertices that are in the same community. The number of communities is not fixed ahead of time, and hence finding how many partitions are needed to maximise the modularity is included as a part of the problem.

Complexity-wise, the problem of finding the partition that exactly maximises the modularity is an $\mathsf{NP}$-hard problem~\cite{Brandes08}, and therefore it is common (indeed, necessary) to resort to heuristic methods for community detection. Commonly used heuristics are those based on hierarchical agglomeration~\cite{Clauset04} extremal optimisation~\cite{Duch05}, simulated annealing ~\cite{Reichardt06,Guimera05}, spectral algorithms~\cite{Newman06}, and the Louvain method~\cite{blondel08}. The Louvain algorithm has been found to be one of the fastest and best performing algorithms in various comparative analyses \cite{Lancichinetti09,Yang08}, and it is this algorithm that we choose as the basis for our quantum algorithms for community detection.

\

\subsection{Summary of results}
We design several quantum algorithms for community detection on graphs by building quantum versions of the Louvain algorithm, and analyse their complexities in the usual way, finding, as is not uncommon for heuristic quantum algorithms, a per-step speedup. We then apply the bounds and methodology from~\cite{ourotherpaper} to estimate the complexities of the algorithms on practical inputs, and investigate whether the speedups promised by the asymptotic analyses manifest in practice. 

Our results are summarised in Table~\ref{tab:summary_results}, in which we compare the expected complexities of the classical Louvain algorithm, as well as three quantum variants of it: `\textbf{QLouvain}' (a direct speedup of the classical algorithm), `\textbf{SimpleQLouvain}' (a simplification of the preceding algorithm), and `\textbf{EdgeQLouvain}' (a quantum version of a significant simplification of the Louvain algorithm). We also consider `sparse graph' versions of the algorithms (indicated by a `\textbf{SG}' suffix)), whose asymptotic run-times are worse than their original versions, but whose run-times in practice are likely to be faster when the input graph is sparse. The two right-most columns provide information about the \textit{empirically observed} speedups obtained by the algorithms when they were simulated, in the sense discussed above. 

\begin{table}[htb]
\centering
\begin{tabular}{l|l|l|l}
 & \makecell{Query complexity \\ per step $k$} &   \makecell{Absolute\\ speed-up \\ observed?} &   \makecell{Empirically observed\\ range of polynomial \\ speed-ups}\\ \hline
 \makecell{\textbf{Louvain}} & $ \mO\left( \delta_{\max}\tau_k \right)$ & -- & -- \\ \hline
\makecell{\textbf{QLouvain}}& $\tilde{\mO}(\sqrt{\delta_{\max}\tau_k} )$ & \ccrr No  &  \ccrr $0.85$ - $0.99$  \\ 
\makecell{\textbf{QLouvainSG}}& $\tilde{\mO}( \delta_{\max}\sqrt{\tau_k} )$ &\ccrr No  &  \ccrr $0.70$ - $0.86$  \\ 
\makecell{\textbf{SimpleQLouvain}} & $\tilde{\mO}\left(\sqrt{\frac{\delta_{\max}}{f_k}} \right)$  & \ccrr No & \ccgg $1.04$ - $1.25$ \\ 
\makecell{\textbf{SimpleQLouvainSG}} & $\tilde{\mO}({\frac{\delta_{\max}}{\sqrt{f_k}}} )$  & \ccrr No & \ccgg $1.13$ - $1.55$ \\ 
\makecell{\textbf{VTAA QLouvain}}  & $\tilde{\mO} (t_\text{avg}^q \sqrt{\tau_k} )$  & --  & --\\ 
\makecell{\textbf{EdgeQLouvain}} & $\tilde{\mO} \left(\frac{1}{\sqrt{h_k}} \right)$  &  \ccgg Yes  & \ccgg $1.18$ - $1.49$ \\ 
% \makecell{
% \textbf{NodeComQLouvain}} & $\tilde{\mO} \left(\frac{1}{\sqrt{\tilde{h}_k}} \right)$  &  \ccgg Yes  & \ccgg $1.19$ - $1.67$
\end{tabular}
\caption{Overview of the main results obtained by applying our techniques to the proposed quantum versions of the Louvain algorithm. The second column shows upper bounds on the expected number of queries when performing a single step. The third column indicates whether we observed an absolute query count speed-up by the quantum algorithm over Louvain on our artificially generated networks up to size $n=10^5$, and the fourth column shows the estimated range of polynomial speed-ups based on the same data. Here, $\delta_{\max}$ is the maximum number of communities adjacent to any single vertex, $f_k$ is the fraction of vertices in the graph that are `good' during step $k$, $\tau_k$ is the number of vertices inspected by the classical algorithm during step $k$, $h_k$ is the fraction of \textit{edges} and $\tilde{h}_k$ the fraction of node-neighbouring community pairs that yield good moves for vertices during step $k$. $t_\text{avg}^q$ is defined in Section~\ref{sec:QVTAA}. For a definition of all these terms, we defer to Section~\ref{sec:quantum} of the paper.}
\label{tab:summary_results}
\end{table}

One source of overhead that we find contributed significantly to the overall complexities were logarithmic overheads due to success probability amplification of subroutines -- for the algorithms to work correctly, we often require that \emph{all} calls to quantum subroutines succeed with high probability, which often yields quite large overheads in practice. Because of this, we found that in general the more complicated quantum algorithms offered less of a speedup (or none at all) in practice, despite indicating a generic square-root speedup per step over the original classical algorithm. These observations suggest that `greedily' favouring a larger asymptotic speedup might actually lead to \emph{slower} run-times in practice, and that a more nuanced analysis is required if we are to maximise quantum speedups in practice. 

For us, these findings underscore the need to investigate and consider the \textit{actual} quantum speedup that might be achieved in practice on realistic data sets, rather than concluding that a speedup will be obtained from an asymptotic analysis alone. As we show in this paper, such an approach can also be quite useful for comparing different quantum versions of the same algorithm, something that could facilitate the future design of quantum algorithms for practical tasks. 

\subsection{Methodology}
In~\cite{ourotherpaper}, we introduced the necessary tools and methodology for obtaining accurate numerical estimates of the complexities of quantum algorithms with a reasonably generic form that is common to many quantum speedups of classical heuristic algorithms. In particular, we considered algorithms with the form shown in Algorithm~\ref{alg:general}. 
\begin{algorithm}[!htb]
  \caption{Generic quantum algorithm structure}
  \label{alg:general}
\begin{algorithmic}[1]
    \State \textbf{Input} $X$, \textbf{Memory} $M$
    \For{$k = 1, \dots, T$}
        \State Do some classical processing on $X$ and $M$, resulting in some list $L_k$ containing $t_k$ marked items.
        \State Perform either one or more (perhaps nested) Grover searches with an unknown number of marked items on $L_k$, or run quantum maximum-finding on the list $L_k$, to obtain some item $x_k$.
        \State Do some more classical processing given $x_k$, update $M$.
    \EndFor
\end{algorithmic}
\end{algorithm}
Our approach was to run the algorithm classically, by replacing Step 4 (the call to a quantum sub-routine) with a classical procedure that gives the same output behaviour whilst collecting the information required to estimate what the quantum run-time complexity would have been if it had been used. The quantum complexity estimates themselves were obtained via tight bounds, including all constants, of two important sub-routines: Grover search with an unknown number of marked items, and quantum maximum finding. In cases where the information required to calculate the quantum complexities could not be computed exactly (e.g. because the input sizes were too large to make a such a computation infeasible), we gave methods for estimating them whilst retaining guarantees on the complexities produced. 

In all cases, `complexity' refers to a particular choice of measure, which for us was (and will be) the number of times a particular function is called by the classical or quantum algorithm. Of course, this does not represent the true run-time, and in particular does not include overheads such as those from quantum error correction. Nevertheless, this choice of complexity enables a clean comparison between classical and quantum algorithms, as well as between different quantum algorithms for the same task. It might be that a quantum speedup suggested by our (empirical) analysis does not manifest in practice due to such overheads, but that is not our main focus -- our goal is to study whether a quantum speedup could manifest \textit{at all}, even assuming zero overhead from the likes of error correction or noise. If the algorithms fall short at this level of analysis, then a quantum speedup can already be ruled out without taking the time and effort to compile the quantum algorithm for a particular piece of hardware. In addition, this complexity measure is independent of the details of quantum hardware, which is likely to change over the coming years.

\subsection*{Organization}
In Section~\ref{sec:prelims} we introduce the practical task of community detection, describe the popular classical Louvain algorithm for it (Section~\ref{sec:Louvain}), and analyse its asymptotic complexity (Section~\ref{sec:classical}). In Section~\ref{sec:quantum} we proceed to construct several quantum variants of the Louvain algorithm, with the aim of comparing the classical and quantum performances empirically. In Section~\ref{part:three}, we give tight bounds on the complexities of our main quantum sub-routines (Section~\ref{sec:complexity_bounds}), and describe our approach to simulating these quantum algorithms (Section~\ref{sec:simulation_details}). Finally, in Section~\ref{sec:numerics} we analyse their complexities numerically, comparing their performances to the original classical algorithm, and amongst each other.

Appendix~\ref{app:number_of_moves_main_section} discusses the number of vertices moved by the Louvain algorithm, both from a theoretical and an empirical perspective; Appendix~\ref{sec:runtime_OL_LL} provides some numerical results comparing the performance of the original Louvain algorithm with our implementation thereof that contains and additional data structure; Appendix~\ref{app:VTAA} gives details of a slightly more efficient quantum algorithm for community detection based on the technique of variable time amplitude amplification; and finally Appendix~\ref{sec:random_graphs} describes the algorithm we use to generate FCS-type random graphs.

\section{Community detection}
\label{sec:prelims}

In this section, we formally introduce the problem of community detection in graphs and describe the  Louvain algorithm. We begin by introducing some notation, and then proceed to define the \textit{modularity} function, which serves as a measure of quality for community assignments, before describing the Louvain algorithm itself.

\paragraph{Notation} In this manuscript, $G = (V,E)$ is a graph with vertex set $V$ and edge set $E$. We write $n = |V|$ for the number of vertices and denote the $n \times n$ (weighted) adjacency matrix of the graph by $A$, which we assume to be symmetric, $A = A^T$, real-valued with non-negative entries, and without self-loops: $A_{vv} = 0$ for all $v\in V$. We write $d_u$ for the degree of a vertex $u \in V$ and $s_u = \sum_{v} A_{uv}$ for the \emph{strength} of vertex $u$, defined as sum of the weights of all edges incident to $u$. We denote the neighborhood of $u$ by $N_u := \{v\in V: A_{uv} > 0\}$. Furthermore, write $d_{\max} = \max_{u\in V} d_u$ for the maximum degree, and let $W = \frac{1}{2}\sum_{uv} A_{uv}$ be the sum of all weights. Finally, for any positive integer $k$, we write $[k]$ for the set $\{1,\dots,k\}$.

\paragraph{Access to the input graph}
We assume we have \textit{adjacency list access} to the graph $G$. That is, for each $u\in V$, we have access to the list of neighbors of $u$ through the function $\fn_u : [d_u] \rightarrow V$. Specifically, given $j\in [d_u]$, we can query the $j$-th neighbor $\fn_u(j) \in N_u \subseteq V$ of $u$, as well as the weight $A_{u\fn_u(j)}$ on the edge connecting $u$ and $\fn_u(j)$. We assume that we know the degrees of each vertex ahead of time, or otherwise that we compute them during the pre-processing step (see below).

Finally, we assume that the vertices have some arbitrary but fixed ordering, and that the adjacency lists are sorted according to this ordering, so that, given any vertex $u\in V$ and a neighbor $v \in N_u$, we can in $\mO(\log d_u)$ time find the index $j\in [d_u]$ such that $\fn_u(j) = v$ using binary search. If the adjacency lists are not sorted when they are given to use, then instead we can sort them all in time $\tilde{O}(nd_{\max})$ before continuing.

\subsection{Modularity}
Formally, a community partitioning of $V$ is given by a label function $\ell:V \rightarrow [n]$ that assigns to every vertex $v\in V$ a label $\ell(v) \in [n]$. All vertices with the same label are said to be in the same community, and we denote the community of a given vertex $v\in V$ by $C_{\ell(v)} \subset V$. Likewise, for any label $\alpha \in [n]$, $C_\alpha := \ell^{-1}(\alpha)$ denotes the set of all vertices contained in the community labelled $\alpha$. For clarity we will use Roman characters (e.g. $u$, $v$) to refer to vertices, and Greek letters (e.g. $\alpha$, $\beta$) for community labels. 

\ 

\noindent Given a community assignment $\ell$, the \emph{modularity} is defined as
\begin{equation}
	Q := \frac{1}{2W}\sum_{u,v\in V} \left(A_{uv} - \frac{s_u s_v}{2W}\right) \delta^\ell(u,v) = \frac{1}{2} \sum_{u,v\in V} Q_{uv} \delta^\ell(u,v),
	\label{eq:modularity}
\end{equation}
where 
\begin{equation*}
    \delta^\ell(u,v) =
    \begin{cases}
        1 \quad \text{if} \quad \ell(u) = \ell(v) \\
        0 \quad \text{otherwise}
    \end{cases}
\end{equation*}
and we write
\begin{equation*}
	Q_{uv} := \frac{1}{W}\left(A_{uv} - \frac{s_u s_v}{2W} \right).
\end{equation*}
Note that, like $A$, $Q$ is also symmetric: $Q_{uv} = Q_{vu}$. $Q = Q(\ell)$ in Eq.~\eqref{eq:modularity} should be thought of as a function of $\ell$; however, we will suppress the $\ell$ dependence of $Q$ unless it is ambiguous as to which $\ell$ we are referring. 

For our purposes it will be more convenient to express $Q$ as
\begin{equation}
    Q = \sum_{u<v} Q_{uv} \delta^\ell(u,v) + \frac{1}{2} \sum_{u\in V} Q_{uu}
    \label{eq:modularity2}
\end{equation}
where the second term is a constant independent of the label function $\ell$. Since our objective is to find an $\ell$ that maximizes $Q$, we can safely ignore the second (constant) term in Eq.~(\ref{eq:modularity2}).

\ 

\noindent For a vertex $u \in V$, we call a community $C_{\alpha}$ a \emph{neighboring community} of $u$ if $C_{\alpha} \cap N_u \neq \emptyset$. The Louvain algorithm only moves vertices to neighboring communities. For a vertex $u$ we write
\[
    \zeta_u := \{\alpha \in [n]: C_{\alpha} \cap N_u \neq \emptyset\}
\]
for the set of labels of communities that neighbour $u$, and
\[
    \delta_u = |\zeta_u|
\]
for the number of neighboring communities of $u$. In addition,let
\begin{eqnarray*}
	S^{\alpha}_u := \sum_{v \in C_{\alpha}} A_{uv}, \quad \text{and} \quad
	\Sigma_{\alpha} := \sum_{v \in C_{\alpha}} s_v,
\end{eqnarray*}
i.e. $S^{\alpha}_u$ is the sum of all weights on edges from vertex $u$ to vertices in community $C_{\alpha}$, and $\Sigma_{\alpha}$ is the sum of all weights on edges incident to vertices contained in community $C_{\alpha}$. (Note that in the expression for $S^{\alpha}_u$ we can actually restrict the sum over all $C_{\alpha}$ to $N_{u} \cap C_{\alpha}$, since $A_{uv} = 0$ for all $v\in V$ not neighboring $u$.)

\

The Louvain algorithm attempts to move vertices from one community to the next in a greedy way by only making moves that strictly increase the modularity. Suppose that a vertex $u$ currently in community $C_{\ell(u)}$ is moved to neighboring community $C_{\alpha}$. Then the change in the modularity $\Delta_u^{\alpha}$ resulting from this move is given by
\begin{align}
	\Delta_u^{\alpha} &= \sum_{w \in C_{\alpha}} Q_{uw} - \sum_{w \in C_{\ell(u)} \setminus \{u\}} Q_{uw} \nonumber \\
	&= \frac{1}{W} \sum_{w \in C_{\alpha}} A_{uw} - \frac{s_u}{2W^2}\sum_{w \in C_{\alpha}} s_w  - \frac{1}{W} \sum_{w \in C_{\ell(u)}} A_{uw} + \frac{s_u}{2W^2}\sum_{w \in C_{\ell(u)}\setminus \{u\}} s_w \nonumber \\ 
	&= \frac{S_u^{\alpha} - S_u^{\ell(u)}}{W} - \frac{s_u  \left(\Sigma_{\alpha} - \Sigma_{\ell(u)} + s_u \right)}{2W^2}\,, \label{eq:delta-u-alpha} 
\end{align}
where we have used that $A_{uu} = 0$. Finally, for a fixed vertex $u$, we define $\bar{\Delta}_u := \max_{{\alpha} \in \zeta_{\alpha}} \Delta_u^{\alpha}$. Note that both $\Delta_u^{\alpha} = \Delta_u^{\alpha}(\ell)$ and $\bar{\Delta}_u = \bar{\Delta}_u(\ell)$ depend on $\ell$, but we will again suppress the $\ell$-dependence unless it is relevant for the statement in question.

\subsection{The Louvain algorithm}\label{sec:Louvain}
The Louvain algorithm alternates between two \emph{phases}. The first phase consists of a number of greedy moves that attempt to increase modularity. When there are no more moves left to make, the second phase contracts communities into single vertices, and then the whole process repeats itself at this new coarse-grained level. The two phases repeat until, at some point, no new moves exist directly at the start of a phase 1. 

Because we will introduce several (quantum) versions of the Louvain algorithm, we will refer to the Louvain algorithm from~\cite{blondel08} as the \emph{original Louvain} algorithm (discussed below), or OL for short. For a precise description of the algorithm, see Algorithm~\ref{alg:C-Louvain}.

\begin{algorithm}[!htb]
  \caption{The Louvain algorithm}
  \label{alg:C-Louvain}
   \begin{algorithmic}[1]
    \Function{Louvain}{Graph $G$, Community set $\mathcal{C}$}
    \State  $\mathcal{C} \gets$ \Call{SinglePartition}{G}
    \Comment assign each node its own community
    \State done $\gets$ False
    \While{not done}
        \State  $\mathcal{C'} \gets$ \Call{MoveNodes}{G, $\mathcal{C}$}
        \Comment get new community assignment
        \State done   $\gets \|\mathcal{C}\| = \|V\|$
        \Comment end when every community consists of one node
        \If{not done}
            \State G $\gets$ \Call{AggregateGraph}{G, $\mathcal{C}$} 
            \State  $\mathcal{C} \gets$ \Call{SinglePartition}{G}
        \EndIf
    \EndWhile
    \State \Return $\mathcal{C}$
    \EndFunction
    \end{algorithmic}

%%%%%%%%%%%%%%%%%%%%%%%%%%%%%%%%%%%%%%%%%%%%%%%%%%%%%%% 

\begin{algorithmic}[1]
%  \caption{MoveNodes}
    \Function{MoveNodes}{Graph G, Community set $\mathcal{C}$}
        \State done $\gets$ False
        \While{not done}
            \State done $\gets$ True
            \ForAll{$u \in V$}
                \State $\bar{\Delta}_u \gets $ $\max_{v \in N_u}\Delta_{uv}$
                \Comment calculate maximum increase of modularity
                \If{$\bar{\Delta}_u > 0$}
                    \State $\bar{v} \gets $ $\argmax_{v \in N_u} \Delta_{uv}$
                    \Comment get corresponding community 
                    \State $\ell(u) \gets \ell(\bar{v})$
                    \Comment reassign $u$ to community of $\bar{v}$
                    \State done $\gets$ False 
                    \Comment terminate when there is no modularity increase
                \EndIf
            \EndFor
        \EndWhile
        
    \EndFunction
\end{algorithmic}

%%%%%%%%%%%%%%%%%%%%%%%%%%%%%%%%%%%%%%%%%%%%%%%%%%%%%%%    

    % \caption{AggregateGraph}
    \begin{algorithmic}[1]
    \Function{AggregateGraph}{Graph G, Community set $\mathcal{C}$}
        \State $V' \gets \{C_a \| C_a \neq \emptyset\}$
        \Comment create new vertex for every nonempty set
        \State $A' \gets \{A'_{ab} |a,b \in V', A'_{ab} = \sum_{u \in C_a, v \in C_b} A_{uv} \}$
        \State $E' \gets \{ (a,b) | a,b \in V', A_{ab}' > 0\}$
        \State \Comment create edges with weight equal to the sum of all weights between vertices in each community 
        \State \Return \Call{Graph}{$V'$, $E'$, $A'$}
    \EndFunction
\end{algorithmic}

%%%%%%%%%%%%%%%%%%%%%%%%%%%%%%%%%%%%%%%%%%%%%%%%%%%%%%%

% \caption{SinglePartition}
\begin{algorithmic}[1]
    \Function{SinglePartition}{Graph G}
        \State  \Return $\{ \{v\}| v\in V\}$
    \EndFunction
   \end{algorithmic}
\end{algorithm}

\subsubsection*{Initialization}
\label{sec:OL_init}

\noindent Initially every vertex is assigned to its own community $\ell(u) = u$. Before beginning, for every $u \in V$, loop over all neighbors $j \in [d_u]$ in order to compute the vertex strengths $s_u$ as well as each $\Sigma_{\ell(u)} = s_u$, and also the total edge weight sum $W = \frac{1}{2} \sum_{uv} A_{uv}$. If not already sorted, we  also sort all adjacency lists during initialization.

\subsubsection*{First phase}
\label{sec:OL_first_phase}

\noindent During the first phase, the algorithm places all vertices in a randomly ordered list. This list is traversed sequentially and, for each vertex encountered, we compute $\bar{\Delta}_u = \max_{\alpha\in \zeta_u} \Delta_u^{\alpha}$. If $\bar{\Delta}_u>0$, $u$ is moved to the community that realises $\argmax_{\alpha \in \zeta_a} \Delta_u^{\alpha}$. After completing a pass through the list, it is reshuffled and the process is repeated. This phase ends when there are no vertices that can be moved to increase the modularity any further, i.e. when $\bar{\Delta}_u \leq 0$ for all $u\in V$.

In order to compute $\bar{\Delta}_u$ for a given a $u\in V$, the algorithm can first construct the list of neighboring community labels $\zeta_u$ as well as a list $L_u = \{(\alpha, S_u^{\alpha}): \alpha \in \zeta_u\}$ of neighboring communities and corresponding sums of edge weights from $u$ to those communities. Now a single loop over $L_u$ is sufficient to compute $\Delta_u^{\alpha}$ for every $(\alpha, S^{\alpha}_u) \in L_u$ and output $\bar{\Delta}_u$ and $\bar{\alpha} = \argmax_{\alpha \in \zeta_u} \Delta_u^{\alpha}$. If $\bar{\Delta}_u > 0$, then $u$ is moved from its original community to the new community $C_{\bar{\alpha}}$. 

As vertices move from one community to the next, the algorithm maintains a list of the sums $\{\Sigma_{\alpha}: \alpha \in [n]\}$. In particular, after moving vertex $u$ from its original community $C_{\beta}$ to its new community $C_{\bar{\alpha}}$, we subtract $s_u$ from $\Sigma_{\beta}$ and add it to $\Sigma_{\bar{\alpha}}$ to ensure that the quantities $\{\Sigma_{\alpha}: \alpha \in [n]\}$ are kept up to date. The algorithm then also updates the label function $\ell$.

\subsubsection*{Second phase}
\label{sec:OL_second_phase}

After the first phase has finished and no vertex move can further increase the modularity, a new coarse-grained graph $G' = (V', E')$ with weighted adjacency matrix $A'$ is constructed. This new coarse-grained graph has as its vertex set the set of (non-empty) communities constructed in the first phase: $V' = \{C_{\alpha}: \alpha \in [n], C_{\alpha} \neq \emptyset\}$. An edge is present in $G'$ between two vertices $C_{\alpha}, C_{\beta} \in V'$ if there is an edge between any two vertices in corresponding communities in $G$, i.e. $E' = \{(C_{\alpha},C_{\beta}): \exists u\in C_{\alpha}, v \in C_{\beta} \text{ such that } (u,v)\in E\}$, and its weight is the sum of all the edge weights of edges between $C_{\alpha}$ and $C_{\beta}$ in $G$, i.e. $A_{\alpha \beta}' = \sum_{u\in C_{\alpha}, v\in C_{\beta}} A_{uv}$. Constructing $V'$, $E'$ and $A'$ can be done in a single loop over all edges of $G$.

\subsection{Complexity of the Louvain algorithm}
\label{sec:classical}
As discussed, the goal of this paper is to compare the performance of the Louvain algorithm to its quantum counterparts using the empirical method outlined in the introduction, and in doing so to tackle some of the obstacles and considerations that one might face in taking such an approach. As we discussed there, this means that we must choose a measure of complexity for our algorithms to use to make our (empirical) comparisons. Concretely, we choose to count the number of calls to the function that computes the change in modularity resulting from moving a vertex from one community to another. In this section we first precisely define our complexity measure, and then analyse the complexity of the classical Louvain algorithm.

\subsubsection{Complexity measure}
\label{sec:runtime_comparison}

We consider how many calls are made to the function that computes the change in modularity resulting from a particular vertex move. I.e. for a particular vertex $u$ and community $\alpha$ that it might move to, we count calls to (an oracle that computes) the function\footnote{When we perform our numerical study in Section~\ref{part:three}, we will count calls to the function $g_{\Delta}$, and each oracle call will correspond to $c_q = 2$ function calls.}
\begin{equation}
    g_\Delta(s_u, \Sigma_\alpha, \Sigma_{\ell(u)}, S_u^\alpha, S_u^{\ell(u)}) = \frac{S_u^\alpha - S_u^{\ell(u)}}{W} - \frac{s_u\left( \Sigma_\alpha - \Sigma_{\ell(u)} - s_u\right)}{2W^2}\,.
\end{equation}
Note that this can be seen counting the number calls to (the gradient of) the modularity $Q$ that we are attempting to maximize, which is often the natural measure of complexity for optimization algorithms.

Counting the number of function calls does not capture every part of the algorithms' complexities. Recall that the classical Louvain algorithm consist of several phases, in which the initialization phase does not require any function calls and the second (coarse-graining) phase similarly does not, even though both require a single loop over all $m$ edges. By taking the number of function calls as a means of comparing the quantum and classical algorithms, we are inherently not taking into account the initialization and second phases. However, in practice the first phase takes up the vast majority of the computation time, and moreover the initialization and second phases are identical for both the classical and quantum algorithms, and hence ignoring them in our comparisons is sensible. Another aspect that is not measured by the number of function calls is the time it takes to compute the list $L_u$ for each vertex $u$ considered. In Section~\ref{sec:complexity_of_louvain} below, we argue that the classical algorithm can be improved upon slightly by keeping all of these lists in memory, and only updating those that change after moving a vertex. The time it takes to update this list is also not captured in the comparison between the quantum and classical algorithms, however, as with the initialization and second phases, these updates involve the same operations for both the quantum and classical algorithms.\footnote{We would have liked to include the time it takes to update the lists also in our analysis. However, working out the precise complexity of doing so is impossible without assuming a particular quantum hardware architecture. In particular, it necessitates the introduction of several new architecture-dependent variables that encompass how read and write times compare between classical and quantum memories, and how these times compare to the cost of computing the function $g_{\Delta}$. Including these quantities as tune-able parameters in the analysis would have made things less clear. By choosing to count only the number of function calls as a means of comparison, we have chosen not to focus on the part of the algorithm that involves updating data structures.}

\

Instead of using the number of function calls to the modularity function, a common complexity measure for graph algorithms is the number of queries to the graph. However, the initialization and second phases of the algorithm already require us to query all edges of the input graph. Beyond this no further queries need to be made, since we can simply query all edges once and then store the results in memory, and thus all of our algorithms have `query complexity' $|E|$. Hence, to meaningfully compare the algorithms we consider in this work empirically, we will simply count the number of calls to the function $g_\Delta$.

\

In the sections that follow we will also consider the \textit{time/gate complexity} of the algorithms, in terms of how many additional elementary operations are required besides the calls to $g_\Delta$. This will be useful in order to compare the worst-case asymptotic run-times of the classical and quantum algorithms, but for the purposes of numerical comparison it is much cleaner to focus only on calls to $g_\Delta$, since the precise number (i.e. including constants) of elementary gates required for various operations quickly becomes architecture-dependent.

\subsubsection{Complexity of Louvain}
\label{sec:complexity_of_louvain} 

For every vertex $u$ visited during the steps of the first phase, there are $\delta_u$ calls to $g_\Delta$ (one for each community $\alpha$ adjacent to $u$) required to compute $\bar{\Delta}_u$ and $\bar{\alpha}$, as well as $O(d_u)$ other operations needed to construct the list $L_u$. The total complexity then depends on how many vertices are visited in the entire first phase of the algorithm. If we suppose the algorithm makes $T$ moves in total, and that on the $k$th move it must inspect $t_k$ vertices before finding one that it can move, then the total number of function calls (queries) required by the algorithm will be
\[
    \sum_{k \in [T]} O\left( \delta_{\max}t_k \right)\,,
\]
and the number of other operations 
\[
    \sum_{k \in [T]} O\left( (\delta_{\max}+d_{\max})t_k \right)\,.
\]
Since the algorithm is heuristic, it is difficult to accurately bound the total number of moves $T$. In Appendix~\ref{app:louvain_total_moves} we show that, in general, $T$ can be upper bounded by a polynomial in $n$, and hence the Louvain algorithm is always a polynomial-time algorithm. In practice, however, $T$ often scales as $O(n \log n)$~\cite{lancichinetti2009community} -- as confirmed also by our numerical data presented in Appendix~\ref{sec:number_moves}.

\

\noindent The run-time (but not the number of function calls) of the original classical algorithm can be improved slightly at the expense of a constant overhead in space complexity, by making use of an additional data structure. Recall that the classical algorithm computes, for each vertex $u$ that it visits, a list $L_u = \{(\alpha, S_u^\alpha) : \alpha \in \zeta_u\}$, and then uses these values (plus the $\Sigma_{\alpha}$'s and $s_u$'s also stored in memory) as input to the function $g_\Delta$. This list takes $\tilde{O}(d_u)$ time to construct (since we must loop over all neighbours of $u$ to compute the appropriate sums), leading to $\tilde{O}((\delta_{\max}+d_{\max})t_k)$ time required for the $k$th step. We can improve this complexity if we store the information contained in $L_u$ for every $u$ separately, and update it as appropriate.

\paragraph{Data structure} In particular, for each $u\in V$ we introduce a `community adjacency list' $\eta_u : [\delta_u] \rightarrow [n] \times \mathbb{R}$, which given an index $j$, returns the label $\alpha$ of the $j$th neighbouring community to $u$, as well as the sum $S_u^{\alpha}$. As shorthand we will often write $\alpha = \eta(j)$, even though $\eta(j)$ actually returns the tuple $(\alpha, S_u^{\alpha})$.
We will keep the list $\eta_u$ sorted by community label (according to some arbitrary but fixed ordering), allowing lookup of $S_u^\alpha$ using label $\alpha$ in $O(\log \delta_u)$ time. Finally, we will reserve a special place in this list to store the quantity $S_u^{\ell(u)}$, and assume that we can access this directly. 
We will refer to the sets $\{s_u: u\in V\}$, $\{\Sigma_{\alpha}: \alpha \in [n]\}$ and $\{\eta_u : u\in V\}$ collectively as the \emph{data structure}.

The data structure therefore allows us to obtain the inputs to $g_\Delta$ all in constant time. Now we concern ourselves with the time required to update it. Suppose that we move vertex $u$ from community $\alpha$ to community $\beta$. In terms of the $\Sigma_{\cdot}$ values, it is clear that only $\Sigma_\alpha$ and $\Sigma_\beta$ change. These are easily updated by subtracting $s_u$ from the former and adding it to the latter, which requires $O(\log n)$ time. 
The only community adjacency lists $\eta_v$ that will change will be for vertices $v$ that are neighbours of $u$: since each entry in any $\eta_v$ stores only sums of weights of edges incident to neighbouring communities of $v$, any sum that doesn't include an edge to $u$ will remain unchanged. Within each $\eta_v$ (for $v \in N_u$), the only sums that will change will be the ones corresponding to the communities that have changed: namely, $S_{v}^\alpha$ and $S_{v}^\beta$. The list $\eta_v$ is sorted by community label, and so we can identify the indices $i$ and $j$ corresponding to communities $\alpha$ and $\beta$ in $O(\log d_{\max})$ time each using binary search. Then we update the tuple $\eta_v(i) = (\alpha, S_{v}^\alpha)$ by subtracting $A_{uv}$ from $S_{v}^\alpha$, and we update the tuple $\eta_v(j) = (\beta, S_{v}^\beta)$ by adding $A_{uv}$ to $S_{v}^\beta$, where each operation will take time $O(\log d_{\max})$. If we find that the new value of $S_{v}^\alpha$ is equal to zero, we remove that tuple from the list, and if the tuple $(\beta, S_{v}^\beta)$ does not already exist, then we create it and insert it into the list at its sorted position. Note that we will only add a new tuple if the list is not already of length $d_v$, and hence the length of the list remains less than or equal to $d_v$. 

For each neighbour $v$ of $u$, these updates therefore take time $O(\log d_{\max})$. Since we do this for every neighbour of $u$, the total time for all updates is $O(d_u \log _{\max}) \leq O(d_{\max} \log d_{\max})$.

\paragraph{Complexity with the data structure} By using the data structure described above, we can eliminate the need to construct the list $L_u$ for each vertex $u$, at the cost of having to update the data structure after every move. In this case, the number of function calls remains the same, but the classical algorithm now takes time 
\[
    \sum_{k=1}^T \tilde{O}\left( \delta_{\max}t_k + d_{\max}\right)\,.
\]
We verify numerically in Appendix~\ref{sec:runtime_OL_LL} that the addition of the data structure does indeed improve the run-time of the algorithm.

\section{Quantum algorithms for community detection}
\label{sec:quantum}
In this section we present a number of quantum variants of the original Louvain algorithm for community detection. Our reason for introducing several quantum algorithms is to later study, in Section~\ref{part:three}, how much of the promised asymptotic (per-step) speedup actually manifests in practice for different variants of the algorithm, and to demonstrate how an empirical comparison between algorithms can reveal significant differences in their run-times that aren't made clear by an asymptotic analysis alone. However, in this section we will only concern ourselves with the usual kind of asymptotic analysis of algorithm complexity.

We begin by introducing a quantum algorithm that mimics the classical algorithm exactly (i.e. by searching for the first good vertex from a randomly ordered list of vertices), and which makes use of a nested Grover search. We then construct a much-simplified variant that forgoes the ordered list and directly applies a nested Grover search to the entire set of vertices. Both of these algorithms also make use of quantum maximum finding to obtain the \textit{best} move available to a particular good vertex. In the end these algorithms are somewhat sub-optimal: the nested Grover searches are performed over sets of varying sizes, but the outer Grover search complexity is limited by the size of the largest set, something that does not happen in the classical case. To overcome this drawback, in Section~\ref{sec:QVTAA} we introduce a slightly more sophisticated quantum algorithm that makes use of the technique of \textit{variable time amplitude amplification}, which allows the subroutine called by a Grover search to have different stopping times. However, since we do not numerically study this algorithm, we defer its details to Appendix~\ref{app:VTAA}. Later in the section we consider dropping the nested Grover search format all together in favour of an asymptotically sub-optimal, but likely practically more efficient, implementation that instead makes use of a classical subroutine, and which might be much more efficient on sparse input graphs. Finally, we present a much-simplified quantum algorithm that performs a single Grover search over the space of \emph{edges} of the graph, in search of one that suggests a good move.

\subsection{Quantum preliminaries}
We will find the following quantum subroutines useful. Later, in Section~\ref{part:three}, we will consider explicit implementations of them as given in~\cite{ourotherpaper}, and take into account their full run-times (i.e. including all constants). For \textbf{QSearch}  (Lemma~\ref{lem:grover} below), this will in particular mean including an extra argument ($N_{\text{samples}}$) to the algorithm that determines how many classical samples are drawn before Grover search is used, but which does not affect the asymptotic runtime. 

\begin{lemma}[Grover's search with an unknown number of marked items~\cite{boyer1998tight}]
\label{lem:grover}
Let $L$ be a list of items, and $t$ the (unknown) number of `marked items'. Let $\mathcal{O}_g \ket{x_i}\ket{0} = \ket{x_i}\ket{g(x_i)}$ be an oracle that provides access to the Boolean function $g : [|L|] \rightarrow \{0,1\}$ that labels the items in the list. Then there exists a quantum algorithm \textbf{QSearch}$(L,\epsilon)$ that finds and returns an index $i$ such that $g(x_i) = 1$ with probability at least $1-\epsilon$ if one exists and requires an expected number $O(\sqrt{N/t} \log(1/\epsilon))$ queries to $\mathcal{O}_g$ and $O(\sqrt{N/t}\log(N/\epsilon))$ other elementary operations. If no such $x_i$ exists, the algorithm confirms this and to do so requires $O(\sqrt{N}\log(1/\epsilon))$ queries to $\mathcal{O}_g$ and $O(\sqrt{N}\log(N/\epsilon))$ other elementary operations. 
\end{lemma}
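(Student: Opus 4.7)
The plan is to follow Boyer, Brassard, H\o{}yer, and Tapp's original argument. Write $\theta \in (0,\pi/2]$ for the angle with $\sin^2\theta = t/N$, let $A$ be the uniform-superposition preparation on $\lceil\log|L|\rceil$ qubits, and let $G = -A\, S_0\, A^\dagger\, S_g$ be the standard Grover iterate (one call to $\mathcal{O}_g$ per application). Applying $j$ iterates to $A\ket{0}$ and measuring in the computational basis yields a marked item with probability $\sin^2\bigl((2j+1)\theta\bigr)$; this is the only quantum ingredient the rest of the argument needs.

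First I would handle the case $t\ge 1$ with $t$ unknown. The subroutine runs in stages $s = 0,1,2,\dots$: at stage $s$ it sets $M_s = \lceil c^s\rceil$ for a fixed constant $c\in(1,2)$, samples $j$ uniformly from $\{0,\dots,M_s-1\}$, runs $j$ Grover iterates on $A\ket{0}$, measures, and verifies the outcome with one further oracle call. The averaging identity
\begin{equation*}
\frac{1}{M}\sum_{j=0}^{M-1}\sin^2\!\bigl((2j+1)\theta\bigr) \;=\; \frac{1}{2} - \frac{\sin(4M\theta)}{4M\sin(2\theta)}
\end{equation*}
shows that once $M_s \geq 1/\sin(2\theta) = \Theta(\sqrt{N/t})$, a marked item is returned with probability at least $1/4$. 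Summing the geometric series gives expected query cost $\sum_{s\le s^\star} O(c^s) = O(c^{s^\star}) = O(\sqrt{N/t})$ for the first successful stage $s^\star$. Amplification to failure probability $\epsilon$ then follows from $O(\log(1/\epsilon))$ independent outer repetitions, yielding the stated $O(\sqrt{N/t}\log(1/\epsilon))$ expected query bound.

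Next I would handle the $t = 0$ case and count the non-query operations. To certify non-existence I fix a hard stopping threshold $s^{\star\star}$ with $c^{s^{\star\star}} = \lceil\sqrt{N}\rceil$; if the subroutine finishes stage $s^{\star\star}$ without finding a marked item, it returns \textsc{None}. When $t \ge 1$, this hard stop triggers with probability at most some fixed $\rho < 1$ per outer round, so $O(\log(1/\epsilon))$ outer repetitions push the false-\textsc{None} probability below $\epsilon$; when $t = 0$ the subroutine always reaches the cutoff, contributing $O(\sqrt{N}\log(1/\epsilon))$ queries. For non-query elementary operations, each Grover iterate costs $O(\log N)$ gates to implement $A\, S_0\, A^\dagger$, plus $O(\log N)$ for indexing, sampling $j$, and running the verification query, which accounts for the $\log(N/\epsilon)$ factor on the gate count in both cases.

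The main technical obstacle is showing that the randomized-iteration strategy actually yields \emph{expected} query count $O(\sqrt{N/t})$ per successful round, without an extra logarithmic overhead in $t$ or $N$. This is exactly what the averaging identity above delivers: constant success probability is attained already at $M_s = \Theta(\sqrt{N/t})$, so the geometric schedule pays essentially only the cost of the final stage and no knowledge of $t$ is required. The remaining bookkeeping -- integer ceilings on $M_s$, the borderline regime $\theta$ near $\pi/4$, and amplification from constant to $1-\epsilon$ success probability -- is routine and does not disturb the asymptotics stated in the lemma.
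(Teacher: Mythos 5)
This lemma is imported verbatim from Boyer--Brassard--H\o{}yer--Tapp and the paper offers no proof of it (its quantitative refinement, with all constants, is deferred to Lemma~6, quoted from the companion paper); your sketch is a faithful reconstruction of the BBHT argument, and in outline it is the right proof: the averaging identity, the threshold $M_s\ge 1/\sin(2\theta)$ giving per-stage success probability $\ge 1/4$, the exponential schedule, the $\sqrt N$ cutoff for certifying $t=0$, and outer repetition for the $1-\epsilon$ guarantee.

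One parameter choice as written would break the claimed bound: you allow the growth factor $c\in(1,2)$, but the expected-cost analysis needs $c<4/3$ (BBHT take $c=6/5$). The point is that $s^\star$ is a random variable, so the expected cost is not ``the cost of the first successful stage'' but $\sum_{s}\Pr[\text{reach stage }s]\cdot O(c^s)$; past the threshold stage $s_0$ the reaching probability decays like $(3/4)^{s-s_0}$, so the tail is $c^{s_0}\sum_{u\ge1}(3c/4)^u$, which converges to $O(c^{s_0})=O(\sqrt{N/t})$ only when $3c/4<1$. For $c\ge 4/3$ the tail is dominated by the $\sqrt N$ cap and the expected query count degrades to $O(\sqrt N)$ rather than $O(\sqrt{N/t})$, so this is a real constraint and not just bookkeeping. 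A second, smaller accounting issue: $O(\sqrt{N/t}\log(1/\epsilon))$ iterations at $O(\log N)$ gates each gives $O(\sqrt{N/t}\log N\log(1/\epsilon))$ elementary operations, which is not the same as the stated $O(\sqrt{N/t}\log(N/\epsilon))=O(\sqrt{N/t}(\log N+\log(1/\epsilon)))$; your derivation therefore does not quite land on the bound as written (the looseness here originates in the lemma statement itself, but your claim that the per-iterate gate cost ``accounts for the $\log(N/\epsilon)$ factor'' should be corrected to the product form).
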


\begin{lemma}[Exact Grover search~\cite{hoyer2000arbitrary}]\label{lem:exact_grover}
    Let $L$ be a list of items, and $t>0$ the \emph{known} number of `marked items'. Let $\mathcal{O}_g \ket{x_i}\ket{0} = \ket{x_i}\ket{g(x_i)}$ be an oracle that provides access to the Boolean function $g : [|L|] \rightarrow \{0,1\}$ that labels the items in the list. Then there exists a quantum algorithm \textbf{ExactQSearch}$(L,t)$ that finds and returns an index $i$ such that $g(x_i) = 1$ with \emph{certainty}. To do so, the algorithm makes $O(\sqrt{N/t})$ queries to $\mathcal{O}_g$ and $O(\sqrt{N}\log(N))$ other elementary operations. 
\end{lemma}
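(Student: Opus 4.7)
The plan is to modify the standard Grover iteration so that, after a carefully chosen number of steps, the state lands exactly on the marked subspace rather than only approximately. First I would set up the familiar two-dimensional invariant subspace spanned by $\ket{M} = \tfrac{1}{\sqrt{t}}\sum_{i \in M}\ket{x_i}$ and $\ket{U} = \tfrac{1}{\sqrt{N-t}}\sum_{i \notin M}\ket{x_i}$, and write the uniform superposition as $\ket{s} = \sin\theta\,\ket{M} + \cos\theta\,\ket{U}$ with $\sin\theta = \sqrt{t/N}$, which is a known quantity since $t$ is given. The usual Grover iterate $G = (2\ket{s}\!\bra{s} - I)(I - 2\Pi_M)$, implemented using $\mathcal{O}_g$ for the oracle-side reflection and $H^{\otimes n}$ plus a zero-reflection for the diffusion, rotates $\ket{s}$ by $2\theta$ per application, giving amplitude $\sin((2k+1)\theta)$ on $\ket{M}$ after $k$ iterations. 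The obstacle is that $(2k+1)\theta$ is rarely exactly $\pi/2$ for integer $k$, so standard Grover alone cannot achieve success probability exactly one.

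Next I would invoke Hoyer's arbitrary-phase generalisation: replace the two $-1$ phases in the reflections by $e^{i\phi}$ and $e^{i\psi}$, producing operators $S_M(\phi) = I - (1 - e^{i\phi})\Pi_M$ and $S_s(\psi) = I - (1 - e^{i\psi})\ket{s}\!\bra{s}$. A short calculation shows that the generalised iterate $G(\phi,\psi) = -S_s(\psi)\,S_M(\phi)$ still preserves the two-dimensional $\{\ket{M},\ket{U}\}$ subspace, and acts there as a $2 \times 2$ unitary whose matrix entries depend continuously on $\phi$ and $\psi$. This gives a one-parameter family of Grover-like operations with a tunable rotation angle, rather than the fixed angle $2\theta$ of the $\phi = \psi = \pi$ case.

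Then the plan is to combine many standard iterations with one final custom-phase iteration to land exactly on $\ket{M}$. Choose $k = \lfloor \pi/(4\theta) - 1/2 \rfloor$, so that after $k$ ordinary Grover iterations the state makes an angle with $\ket{M}$ that lies in $[0, 2\theta)$. Then apply one more iterate $G(\phi,\phi)$ with $\phi$ chosen so that the total rotation angle after $k+1$ iterations is exactly $\pi/2$; the required $\phi$ is determined by a single trigonometric equation in $\theta$ and $k$, which admits a closed-form solution in terms of $\arccos$, and is well-defined precisely because the residual angle after $k$ standard iterations is bounded by the maximum rotation a generalised iterate can realise. After this final step the state is exactly $\ket{M}$, so measuring in the computational basis returns a marked index with certainty.

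For the complexity count, the total number of Grover-type iterations is $k+1 = O(\sqrt{N/t})$; each uses a constant number of calls to $\mathcal{O}_g$ (via phase kickback to implement $S_M(\phi)$) and $O(\log N)$ additional elementary gates for the diffusion, yielding $O(\sqrt{N/t})$ queries and $O(\sqrt{N/t}\log N) = O(\sqrt{N}\log N)$ other operations, as claimed. The main obstacle in the argument is verifying that a single tuned-phase iteration really suffices to cancel the residual angle exactly for every allowed $t$ and $N$; this is the heart of Hoyer's analysis and amounts to checking that the map from phase $\phi$ to total rotation angle is continuous and sweeps over an interval containing the required correction. Everything else is a standard Grover-style calculation.
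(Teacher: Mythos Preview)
The paper does not give its own proof of this lemma; it is stated as a known result and simply cited to H{\o}yer's paper on arbitrary phases in quantum amplitude amplification. Your proof sketch is correct and is precisely the construction from that reference: run $k = \lfloor \pi/(4\theta) - 1/2 \rfloor$ standard Grover iterations to bring the residual angle below $2\theta$, then apply one final generalised iterate with phases $\phi = \psi$ chosen so the state lands exactly on $\ket{M}$. The continuity argument you allude to for the existence of the correct $\phi$ is exactly what H{\o}yer carries out. So there is nothing to compare here --- your proposal \emph{is} the cited proof, and the paper offers no alternative.
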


\begin{lemma}[Quantum maximum-finding \cite{durr1996quantum}]
\label{lem:quantum_max}
Let $L$ be a list of items of length $|L|$, with each item in the list taking a value in the interval $[a,b]$, to which we have coherent access in the form of a unitary that acts on basis states as 
	\[
		\mathcal{O}_L \ket{x}\ket{0} = \ket{x}\ket{L[x]}.
	\]
Then there exists a quantum algorithm \textbf{QMax}$(L,\epsilon)$ that will return $\argmax_x L[x]$ with probability at least $1-\epsilon$ using at most $O(\sqrt{|L|}\log(1/\epsilon)$ queries to $\mathcal{O}_f$ (i.e. to the list $L$) and $O(\sqrt{|L|}\log |L|\log(1/\epsilon))$ elementary operations.
\end{lemma}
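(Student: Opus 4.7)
The plan is to follow the approach of Dürr and Høyer: repeatedly apply Grover-style search to improve a running best element, and then boost the success probability to $1-\epsilon$ by straightforward repetition.

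First I would define, for any candidate index $y \in [|L|]$, a Boolean marking function $g_y(x) := \mathbb{1}[L[x] > L[y]]$. Given the coherent oracle $\mathcal{O}_L$, an oracle $\mathcal{O}_{g_y}$ for $g_y$ can be implemented with $O(1)$ queries to $\mathcal{O}_L$ and $O(\log |L|)$ other elementary operations (for the comparator acting on the $O(\log |L|)$-bit value registers). The core subroutine is then: sample $y$ uniformly at random from $[|L|]$, and repeatedly invoke \textbf{QSearch}$(L, \epsilon')$ from Lemma~\ref{lem:grover} on $g_y$ with a small constant error $\epsilon'$; whenever a witness $x$ with $L[x] > L[y]$ is returned, update $y \leftarrow x$ and continue, stopping when the \textbf{QSearch} invocation declares that no better element exists.

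The central analytical step is the standard rank-based bound showing that this procedure finds $\argmax_x L[x]$ with constant success probability after an expected $O(\sqrt{|L|})$ queries to $\mathcal{O}_{g_y}$, and therefore to $\mathcal{O}_L$. The key ingredients are: (i) if the current running index $y$ has rank $r$ (i.e.\ there are $r-1$ strictly better elements), then one \textbf{QSearch} call costs expected $O(\sqrt{|L|/(r-1)})$ queries by Lemma~\ref{lem:grover}; (ii) conditioned on finding a better element, the rank of the next $y$ is uniformly distributed over $\{1,\dots,r-1\}$, since the initial $y$ was uniformly random and ties among better elements are symmetric; and (iii) summing the expected query counts along the (random) chain of rank-decreasing updates yields a telescoping bound of the form $\sum_{r\geq 2} O(\sqrt{|L|/(r-1)})/r = O(\sqrt{|L|})$. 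A final \textbf{QSearch} call with no marked item costs $O(\sqrt{|L|}\log(1/\epsilon'))$ queries, still $O(\sqrt{|L|})$ for constant $\epsilon'$. The non-query elementary-gate count is larger by the $O(\log|L|)$ factor coming from the comparator and from Lemma~\ref{lem:grover}'s auxiliary operations, giving $O(\sqrt{|L|}\log|L|)$ other elementary operations per run.

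To go from constant success probability to $1-\epsilon$, I would truncate each run at, say, $K \cdot \sqrt{|L|}$ total queries for a sufficiently large constant $K$; Markov's inequality guarantees that a single truncated run succeeds with probability at least some constant $p_0 > 0$. Independently repeating the truncated procedure $r = \lceil \log(1/\epsilon)/\log(1/(1-p_0))\rceil = O(\log(1/\epsilon))$ times and returning the index $y^{\star}$ with the largest $L[y^{\star}]$ among all outputs then yields success probability at least $1-\epsilon$. The total cost is $O(\sqrt{|L|}\log(1/\epsilon))$ queries to $\mathcal{O}_L$ and $O(\sqrt{|L|}\log|L|\log(1/\epsilon))$ other elementary operations, matching the stated bounds.

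The main obstacle I anticipate is the rank-chain expectation calculation in step~(iii): one has to argue carefully that the conditional uniform-rank property really does hold after a random number of \textbf{QSearch} iterations, and that the constant-error probabilities $\epsilon'$ in each Grover call can be absorbed without destroying the $O(\sqrt{|L|})$ expectation. Beyond this, the argument is a direct composition of Lemma~\ref{lem:grover} with textbook probability amplification.
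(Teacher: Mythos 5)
Your proposal is correct and is essentially the standard D\"urr--H{\o}yer argument that the paper itself relies on by citation (the lemma is stated without proof, attributed to \cite{durr1996quantum}): threshold search via \textbf{QSearch} with the rank-$1/r$ visiting probability giving the telescoping $\sum_{r\ge 2} r^{-1}\sqrt{|L|/(r-1)} = O(\sqrt{|L|})$ bound, followed by truncation and $O(\log(1/\epsilon))$-fold repetition. No gaps of substance; the rank-chain step you flag is exactly the induction proved as Lemma~1 of D\"urr--H{\o}yer.
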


~\\
We will also make use of the variable time amplitude amplification (VTAA) algorithm of Ambainis~\cite{ambainis12}. The statement of this result is somewhat more involved, and so we will defer to Appendix~\ref{app:VTAA} for a more formal description of VTAA and its run time in the context of our particular application of it, and discuss the technique informally here.

Consider a quantum algorithm $\mA$ which may stop at one of several times $t_1,\dots,t_m$. To indicate the outcome, $\mA$ has an extra register $O$ with $3$ possible values $0$, $1$, and $2$: $0$ indicates that the computation has stopped but did not reach the desired outcome; $1$ indicates that the computation has stopped and the desired outcome was reached; $2$ indicates that the computation has not stopped yet. The idea behind VTAA is to run multiple branches of computation in superposition, and to amplify those branches that have either stopped and reached the desired outcome ($1$) (e.g. found a marked item), or are still running ($2$).

Let $p_i$ be the probability of the algorithm stopping at time $t_i$ (with either the outcome $0$ or outcome $1$). The average stopping time of $\mA$ (the $l_2$ average) is
\begin{equation}
    T_{\text{avg}} := \sqrt{\sum_i p_i t_i^2}.
    \label{eq:stopping_times}
\end{equation}
Let $T_{\text{max}} = t_m$ be the maximum possible running time of $\mA$, 
\[
    \alpha_{\text{good}}\ket{1}_O\ket{\psi_{\text{good}}} + \alpha_{\text{bad}}\ket{0}_O\ket{\psi_{\text{bad}}}
\]
be the final state of the algorithm once all branches have stopped, and $p_{\text{succ}} = |\alpha_{\text{good}}|^2$ be the probability of obtaining the state $\ket{\psi_{\text{good}}}$ using algorithm $\mA$. Then Ambainis~\cite{ambainis12} shows the following.
\begin{lemma}[Variable time amplitude amplification~\cite{ambainis12}]
\label{lem:VTAA}
    There exists a quantum algorithm $\mA'$ invoking $\mA$ several times, for total time
    \[
        \tilde{O}\left(T_{\max}\log(T_{\max}) + \frac{T_{\text{avg}}}{\sqrt{p_{\text{succ}}}} \log^{1.5} T_{\max} \right)
    \]
that produces a state $\alpha\ket{1}\ket{\psi_{\text{good}}} + \beta\ket{0}\ket{\psi'}$ such that $|\alpha|^2 > 1/2$.  By repeating $\mA'$ $O(\log \frac{1}{\epsilon})$ times, we can obtain $\ket{\psi_{\text{good}}}$ with probability at least $1-\epsilon$.
\end{lemma}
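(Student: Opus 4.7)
The plan is to mimic Ambainis's stratified amplitude amplification construction. The motivation is that a naive amplitude amplification applied to $\mA$ would cost $O(T_{\max}/\sqrt{p_{\text{succ}}})$, since every iteration must rewind $\mA$ in its entirety, paying $T_{\max}$ even for branches that halted long before. The goal is to amplify in stages so that the dominant cost scales with the $\ell_2$-average stopping time $T_{\text{avg}}$ rather than with $T_{\max}$.

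First, I would dyadically bucket the stopping times $t_1, \dots, t_m$ into $k = O(\log T_{\max})$ scales, grouping times lying in $(2^{j-1}, 2^j]$. Define truncated algorithms $\mA^{(j)}$ that run $\mA$ only up to time $2^j$, using the indicator register $O$ to flag whether computation has halted (values $0$, $1$) or is still running (value $2$). Starting from $j = 1$, I would apply amplitude amplification on the subspace $\ket{O = 2}_O$ (the not-yet-halted branches) of the state produced by $\mA^{(j)}$ so that the still-running amplitude is restored to a constant, and then pass this amplified state as input to $\mA^{(j+1)}$, which simply continues $\mA$ beyond time $2^j$. At the final level $j = k$, I would instead amplify on the good-halted subspace $\ket{O = 1}_O$ to obtain the desired output.

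The cost bookkeeping is the main obstacle. One invocation of the level-$j$ stage costs $O(2^j)$, and the number of amplification rounds needed at level $j$ is $O(1/\sqrt{q_j})$, where $q_j$ is the amplitude of the ``still-running or good'' subspace after level $j-1$. Summing the resulting layer costs and applying a Cauchy--Schwarz-type pairing between $\sqrt{p_j}$ and $2^j$, where $p_j$ denotes the total probability of halting in bucket $j$, is the step where the identity $T_{\text{avg}}^2 = \sum_i p_i t_i^2$ is used, yielding the $T_{\text{avg}}/\sqrt{p_{\text{succ}}}$ contribution. The additive $T_{\max}\log(T_{\max})$ term accounts for running $\mA$ fully at the last level, while the extra $\log^{1.5} T_{\max}$ factor absorbs the accumulated amplification errors across the $k$ nested levels (each level must be boosted to success $1 - O(1/k)$ so that the final state error is still constant). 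The delicate part is verifying that the intermediate post-amplification states retain the block structure required for the next layer's amplification to act faithfully; this amounts to checking that the ``running'' subspace is an invariant of the subsequent $\mA^{(j+1)}$ action restricted to those branches.

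Finally, the second claim about obtaining $\ket{\psi_{\text{good}}}$ with probability $1 - \epsilon$ follows from the first by a standard boosting argument: run $\mA'$ independently $O(\log(1/\epsilon))$ times, each time measuring the flag register and keeping the post-measurement state when $1$ is observed. Since each run produces $|\alpha|^2 > 1/2$, a Chernoff bound gives success with probability $\geq 1 - \epsilon$ after $O(\log(1/\epsilon))$ repetitions, matching the claimed overhead.
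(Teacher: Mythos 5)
This lemma is not proved in the paper at all: it is imported verbatim as a black-box result from Ambainis~\cite{ambainis12} (``Then Ambainis shows the following''), so there is no in-paper argument to compare against. Your sketch is a reasonable high-level reconstruction of Ambainis's actual stratified construction -- dyadic truncations $\mA^{(j)}$, per-level amplitude amplification, a Cauchy--Schwarz pairing of $\sqrt{p_j}$ against $2^j$ to trade $T_{\max}$ for $T_{\text{avg}}$, and logarithmic losses from boosting each of the $O(\log T_{\max})$ levels. Two points deserve correction or emphasis. First, you describe the intermediate amplification as acting on the $\ket{O=2}$ (still-running) subspace alone, but in Ambainis's construction each level amplifies the union of the good-halted and still-running branches (flag values $1$ and $2$), discarding only the bad-halted ($0$) branches; your own cost bookkeeping, which refers to the ``still-running or good'' subspace, uses the correct target, so the algorithm description and the analysis are inconsistent with each other. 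Second, the step you call ``delicate'' -- verifying that the post-amplification states retain the block structure needed for the next layer -- is exactly the substance of the conditions (nested subspaces $\mH_i$, the decomposition of $\ket{\psi_i}$ into components lying in $\mH_i$ and $(\mH_i)^\perp$, and the projection identities) that this paper explicitly verifies in Appendix~\ref{app:VTAA} for its particular $\mA_c$ and $\mA_s$; your sketch correctly locates where the real work lies but does not carry it out, which is acceptable only because the lemma is being used as a citation rather than reproved. A minor quibble: the final boosting needs no Chernoff bound -- the probability that $k$ independent runs of $\mA'$ all fail to show flag $1$ is at most $2^{-k}$, so $k = \lceil \log_2(1/\epsilon) \rceil$ suffices by a direct product argument.
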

This is in contrast to the usual amplitude amplification routine, which would take time $O(T_{\max}/\sqrt{p_\text{succ}})$, and hence we see a speedup whenever $T_{\text{avg}}$ is substantially smaller than $T_{\max}$. However, the algorithm $\mA$ must satisfy a number of constraints (in particular, it cannot be adaptive), and so VTAA is not always applicable. This will become clear when we describe our algorithm in Section~\ref{sec:QVTAA}.

\

Finally, we will assume that we have access to quantum read/classical write RAM (QRAM), where a single QRAM operation is considered to be classically writing a bit to the QRAM or making a quantum query (a read operation) to bits stored in QRAM, possibly in superposition. See~\cite{apers2020quantum} for a more detailed discussion. 

\subsection{Quantum community detection}
In the sections that follow we describe our various quantum algorithms for community detection. These algorithms are (roughly in order of increasing simplicity):
\begin{itemize}
    \item \textbf{QLouvain} -- A quantum version of classical Louvain (Section~\ref{sec:quantum_louvain}).
    \item \textbf{SimpleQLouvain} -- A much simplified version of \textbf{QLouvain} that deviates slightly from the behaviour of the original Louvain algorithm (Section~\ref{sec:grover_louvain}).
    \item \textbf{QLouvainSG} and \textbf{SimpleQLouvainSG} -- Versions of both algorithms above that are more efficient if the input graph is sparse (Section~\ref{sec:sparse_graph_versions}).
    \item \textbf{EdgeQLouvain} and \textbf{NodeComQLouvain} -- Two vastly simplified algorithms that deviate substantially from the spirit of the original Louvain algorithm, but nevertheless obtain similar results in practice (Section~\ref{sec:EdgeLouvain}).
\end{itemize}
We also describe an approach based on variable time amplitude amplification in Section~\ref{sec:QVTAA} that yields (asymptotically) more efficient versions of the first four algorithms above. However, these algorithms are much more complicated than those described above, and therefore we have chosen not to simulate these numerically in Section~\ref{sec:numerics}. 

\subsubsection{Quantum louvain}\label{sec:quantum_louvain}
Our first quantum algorithm works by identifying the first vertex in a list for which there exists a good move, and then moves it, just as the classical algorithm does. Using the quantum algorithm \textbf{FindFirst} (introduced below) we obtain in this way a square-root improvement over the per-step classical complexity.

We begin by describing a quantum algorithm that performs a quantum search over a list of vertices in order to identify one for which a good move exists. The algorithm comes with a bound on the \emph{expected} run-time -- which benefits from having more good moves and good vertices available -- and a bound on the \emph{worst-case} run-time, which forgoes the aforementioned benefits. We will use the latter bound in our analysis of the main algorithm, since it is insensitive to the number of marked items, but in fact the run-time would be improved in practice by taking into account the actual number of good vertices.

\begin{lemma}\label{lem:vertex_find}
    There exists a quantum algorithm {\bf VertexFind}($L,\zeta$), which, given a list $L$ of vertices $u_0,\dots,u_{|L|-1}$, returns the identity $i$ of a vertex $u_i$ such that $\bar{\Delta}_{u_i} > 0$ (i.e. a good vertex) with probability $\geq 1-\zeta$ if one exists, and otherwise returns `no vertex exists'. The algorithm requires an expected number of function calls at most
    \[
        O\left(\sqrt{\frac{\delta_{\max}}{f}} \log\left(\frac{{|L|}}{\zeta}\right) \right) = \tilde{O}\left(\sqrt{\frac{\delta_{\max}}{f}} \log\left(\frac{1}{\zeta}\right) \right) \,,
    \]
    and 
    \[
        O\left(\sqrt{\frac{\delta_{\max}}{f}} \log(|L|) \log(\delta_{\max})  \log\left(\frac{{|L|}}{\zeta}\right) \right) = \tilde{O}\left(\sqrt{\frac{\delta_{\max}}{f}} \log\left(\frac{1}{\zeta}\right) \right) \,
    \]
    elementary operations, where $f$ is the fraction of vertices in $L$ that are good (and the $\tilde{O}$ notation hides polylogarithmic factors in $|L|$ and $\delta_{\max}$). If we want to obtain a \emph{worst case} run-time, then there is a variant of the algorithm that behaves the same, but requires in the worst case at most 
    \[
        O\left(\sqrt{\delta_{\max}|L|} \log\left(\frac{{|L|}}{\zeta}\right) \right) = 
        \tilde{O}\left(\sqrt{\delta_{\max}|L|} \log\left(\frac{1}{\zeta}\right) \right)\,
    \]
    function calls and 
    \[
        O\left(\sqrt{\delta_{\max}|L|} \log(|L|)\log(\delta_{\max}) \log\left(\frac{{|L|}}{\zeta}\right) \right) = 
        \tilde{O}\left(\sqrt{\delta_{\max}|L|} \log\left(\frac{1}{\zeta}\right) \right)\,
    \]
    elementary operations.
\end{lemma}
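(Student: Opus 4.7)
Proof proposal. The plan is to build \textbf{VertexFind} as a nested Grover search: an outer \textbf{QSearch} over the list $L$ whose oracle, for a candidate index $i$, calls an inner \textbf{QSearch} over the (at most $\delta_{\max}$) neighboring community labels of $u_i$ to decide whether $u_i$ admits any community $\alpha \in \zeta_{u_i}$ with $\Delta_{u_i}^{\alpha} > 0$. The predicate of the inner search is directly a call to $g_{\Delta}$ (using the values $s_{u_i}, \Sigma_{\alpha}, \Sigma_{\ell(u_i)}, S_{u_i}^{\alpha}, S_{u_i}^{\ell(u_i)}$ read from the data structure), so each inner oracle call counts as one function call. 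The outer search's oracle then outputs $1$ on $i$ iff the inner search found a good community, and \textbf{VertexFind} returns the first such $i$ (or confirms none exists).

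The main obstacle is a clean handling of error propagation, because Lemma~\ref{lem:grover} only gives a bounded-error oracle for the outer search, and the number of inner calls is itself random. I would handle this by fixing the failure parameters as follows. The outer search will make at most $M_{\text{out}} = O(\sqrt{|L|}\log(1/\zeta))$ oracle invocations in the worst case (the ``no solution'' branch of Lemma~\ref{lem:grover}), so I run each inner search with failure probability $\epsilon_{\text{in}} = \Theta(\zeta/M_{\text{out}})$. A union bound then guarantees that the probability that any inner call returns the wrong Boolean is at most $\zeta/2$, and combining with the outer failure probability $\zeta/2$ gives total success probability $\geq 1-\zeta$. Since $\log(1/\epsilon_{\text{in}}) = O(\log(|L|/\zeta))$, each inner call costs $O(\sqrt{\delta_{\max}}\log(|L|/\zeta))$ function calls in the worst case, by applying Lemma~\ref{lem:grover} to the inner list of size at most $\delta_{\max}$.

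For the expected function-call bound, I would combine the outer expected complexity from Lemma~\ref{lem:grover} with the worst-case inner cost. If $t = f|L|$ vertices in $L$ are good, the expected number of outer oracle calls is $O(\sqrt{|L|/t}\log(1/\zeta)) = O(\sqrt{1/f}\log(1/\zeta))$, so the expected total number of function calls is
\begin{equation*}
    O\!\left(\sqrt{\tfrac{1}{f}}\log(1/\zeta)\right) \cdot O\!\left(\sqrt{\delta_{\max}}\log(|L|/\zeta)\right) \;=\; \tilde{O}\!\left(\sqrt{\tfrac{\delta_{\max}}{f}}\log(1/\zeta)\right),
\end{equation*}
which matches the first bound. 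A small subtlety is that the expected outer-call bound from Lemma~\ref{lem:grover} requires at least one marked item, which is why the expected bound is stated under the assumption that good vertices exist; when none exist, one falls back to the worst-case branch.

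For the worst-case bound, I would simply substitute the $O(\sqrt{|L|}\log(1/\zeta))$ bound on outer oracle calls (which holds regardless of how many marked items there are) and multiply by the same inner cost, obtaining $\tilde{O}(\sqrt{\delta_{\max}|L|}\log(1/\zeta))$ function calls. The elementary-operation bounds follow by the same multiplication, after accounting for the $O(\log(|L|))$ and $O(\log(\delta_{\max}))$ overheads per iteration from the arithmetic and diffusion operators given by Lemma~\ref{lem:grover}. This yields the stated $\tilde{O}$ bounds, with the hidden polylogarithmic factors being exactly $\log(|L|)$, $\log(\delta_{\max})$, and one extra factor from the inner amplification $\log(|L|/\zeta)$.
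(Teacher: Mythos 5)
Your proposal is correct and follows essentially the same route as the paper's proof: a nested \textbf{QSearch} in which the inner search's failure probability is set to $\Theta\bigl(\zeta/(\sqrt{|L|}\log(1/\zeta))\bigr)$ so that a union bound over the worst-case number of outer oracle invocations yields overall success probability $\geq 1-\zeta$, with each inner call costed at its slowest branch, $O(\sqrt{\delta_{\max}}\log(|L|/\zeta))$, and the expected (resp.\ worst-case) outer bound multiplied through. The only slip is your remark that \textbf{VertexFind} returns the \emph{first} good index -- Grover search returns an arbitrary marked item, and locating the first one is the job of the separate \textbf{FindFirst} routine -- but this does not affect the complexity argument or the lemma as stated.
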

\begin{proof}
\noindent We apply Grover search to find, for a particular vertex $u$, an integer $j \in [\delta_u]$ such that $\Delta_u^{\eta_u(j)} > 0$ (a `good move'), if one exists. Using this as a subroutine, we apply Grover search now to the list $L$ to find any vertex for which there exists such a neighbouring community (a `good vertex').

Using the data structure described in Section~\ref{sec:complexity_of_louvain}, we can obtain the inputs to $g_\Delta$, which computes the change in modularity resulting from moving a vertex $u$ to a community $\alpha$, in constant time: we can recover from $\eta_u$ the quantity $S_u^\alpha$ (this is just the weight associated to the entry $\eta_u(j)$), and also obtain $S_u^{\ell(u)}$, $s_u$, $\Sigma_a$, and $\Sigma_{\ell(u)}$ directly in $O(1)$ time from the appropriate lists. We will use $\mathcal{A}_{g_\Delta}$ to denote the unitary that implements the (classical) sub-routine for computing $g_\Delta(s_u, \Sigma_\alpha, \Sigma_{\ell(u)}, S_u^\alpha, S_u^{\ell(u)}) =: \Delta_{u}^{\eta_u(j)}$ given $u$ and $j$, and whose action on basis states is 
\[
    \ket{u}\ket{j}\ket{0} \mapsto \ket{u}\ket{j}\ket{\Delta_{u}^{\eta_u(j)}}\, . 
\]
For a fixed vertex $u$, we can find a $j$ such that $\Delta_u^{\eta_u(j)} > 0$ if one exists using the algorithm {\bf QSearch} of Lemma~\ref{lem:grover}, by providing $\mathcal{A}_{g_\Delta}$ as an oracle. The algorithm will require in the worst case $O(\sqrt{\delta_u}\log(1/\epsilon))$ uses of $\mathcal{A}_{g_\Delta}$ (and hence $g_\Delta$) and its inverse, and $O(\sqrt{\delta_u}\log |L| \log(1/\epsilon))$ other operations to find one with probability at least $1-\epsilon$, or to signal that no such $j$ exists as appropriate. We will write $\mathcal{A}_{\bar{\Delta}}$ to denote the unitary that implements this quantum algorithm for a given vertex $u$, i.e. it maps
\[
    \ket{u}\ket{0} \mapsto \ket{u}\ket{\bar{\Delta}_u > 0 ?}
\]
where the last register on the right is 1 if $\bar{\Delta}_u = \max_{j} \Delta_u^{\eta_u(j)} > 0$, and $0$ otherwise.

We then use $\mathcal{A}_{\bar{\Delta}}$ as a subroutine to search for a vertex $u$ such that $\bar{\Delta}_u > 0$, i.e. for which there exists a $j$ such that $\Delta_u^{\eta(j)} > 0$ (a good vertex). This can be achieved via another straightforward application of {\bf QSearch} from Lemma~\ref{lem:grover}, with $\mathcal{A}_{\bar{\Delta}}$ provided as the oracle. If the probability of a randomly chosen vertex $u$ having a good move is $\frac{1}{f}$, $f>0$, then this will require an expected $O(\sqrt{1/f})$ applications of $\mathcal{A}_{\bar{\Delta}}$ and its inverse, and when $f=0$, the worst case, it will require at most $O(\sqrt{|L|})$ applications to signal that no there are no good vertices. Assuming that the sub-routine $\mathcal{A}_{\bar{\Delta}}$ works perfectly, we can boost the success probability of the algorithm from $2/3$ to $1-\epsilon'$ by repeating $O(\log(1/\epsilon'))$ times. However, the subroutine $\mathcal{A}_{\bar{\Delta}}$ succeeds only with probability $\geq 1-\epsilon$. Hence for the outer search algorithm to work correctly we will need that every time the $\mathcal{A}_{\bar{\Delta}}$ subroutine is run, it succeeds. Since $\mathcal{A}_{\bar{\Delta}}$ (and its inverse) will be called at most $O(\sqrt{|L|}\log(1/\epsilon'))$ times, the entire search algorithm will therefore succeed with probability at least
\[
	(1-\epsilon') \cdot (1-\epsilon)^{O(\sqrt{|L|}\log(1/\epsilon'))}.
\]
To ensure that this probability is $\geq 1-\zeta$, we can choose $1/\epsilon' = O(\poly(1/\zeta))$ and $1/\epsilon = \poly(|L|,1/\epsilon')$. In particular, we can set $\epsilon' = \zeta/2$ and $\epsilon = \frac{\zeta/2}{ \sqrt{|L|}\log(1/\epsilon')}$. 

Finally, since the algorithm $\mathcal{A}_{\bar{\Delta}}$ is called in superposition for multiple vertices $u$, the run-time of the outer {\bf QSearch} routine will be limited by its slowest branch\footnote{Note that the routine will really be limited by a known \textit{upper bound} on the time taken by any particular branch. For us, this will be $O(\sqrt{\delta_{\max}} \log(1/\epsilon))$, but this does require us to know $\delta_{\max}$. Luckily, we can keep track of this over time by adding to our data structure, and the overheads for updating it will all at worst be logarithmic in $n$ and linear in $d_{\max}$, similarly to the other updates.}, which requires at most $O(\sqrt{\delta_{\max}} \log(1/\epsilon))$ queries to $g_\Delta$, plus $O(\sqrt{\delta_{\max}}\log (\delta_{\max}) \log(1/\epsilon))$ other operations, meaning that the total expected number of function calls made by the entire algorithm is at most 
\[
    O\left(\sqrt{\frac{\delta_{\max}}{f}} \log\left(\frac{{|L|}}{\zeta}\right) \right)\,,
\]
and the total expected number of other operations is 
\[
    O\left(\sqrt{\frac{\delta_{\max}}{f}} \log(|L|) \log(\delta_{\max})  \log\left(\frac{{|L|}}{\zeta}\right) \right)\,.
\]
Using the worst-case upper bound of {\bf QSearch} from Lemma~\ref{lem:grover}, we can bound the total \emph{worst-case} number of function calls by
\[
    O\left(\sqrt{\delta_{\max}|L|} \log\left(\frac{{|L|}}{\zeta}\right) \right)\,
\]
and the total number of other operations by 
\[
    O\left(\sqrt{\delta_{\max}|L|} \log(|L|)\log(\delta_{\max}) \log\left(\frac{|L|}{\zeta}\right) \right)\,.
\]
\end{proof}

\ 

We will use the algorithm \textbf{VertexFind} as a subroutine to implement the quantum algorithm that searches an ordered list of vertices for the first vertex with a good move available. We can now describe this algorithm in detail.

\begin{lemma}\label{lem:find_first}
    Given an ordered list $L$ of vertices $u_0,\dots,u_{|L|-1}$, there exists a quantum algorithm \textbf{FindFirst}($L,\epsilon$), which, with probability $\geq (1 - \epsilon)$, returns $i = \min_j \{j : \bar{\Delta}_{u_j} > 0 \}$, i.e.~the index of the first good vertex in the list if such a vertex exists, and otherwise returns `no good vertex exists'. The algorithm requires at most 
    \[
        O\left( \sqrt{\delta_{\max}i} \log\left(\frac{|L|}{\epsilon}\right) \right)
    \]
    function calls and $\tilde{O}\left( \sqrt{\delta_{\max}i} \log\left({|L|}/{\epsilon}\right) \right)$ other elementary operations in the case that there does exist a good vertex, and otherwise requires at most 
    \[
        O\left( \sqrt{\delta_{\max}|L|} \log\left(\frac{|L|}{\epsilon}\right) \right)
    \]
    function calls and $\tilde{O}\left( \sqrt{\delta_{\max}|L|} \log\left({|L|}/{\epsilon}\right) \right)$ other elementary operations. 
\end{lemma}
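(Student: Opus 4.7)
The plan is to combine a doubling (exponential) search with a Grover-style binary search, both using \textbf{VertexFind} from Lemma~\ref{lem:vertex_find} as a subroutine. First I would call \textbf{VertexFind} on the prefixes $L_k := (u_0, \dots, u_{2^k - 1})$ of $L$ for $k = 0, 1, 2, \ldots$ until the routine reports the existence of a good vertex, stopping at the smallest $k^\star$ for which this happens (or at $k = \lceil \log_2 |L| \rceil$ if no good vertex exists in $L$ at all). If $i$ denotes the true index of the first good vertex, then $k^\star = \lceil \log_2(i+1) \rceil$, and I then know that this first good vertex lies in the window $[2^{k^\star-1}, 2^{k^\star})$. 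To isolate it exactly I would binary-search inside this window: at each step call \textbf{VertexFind} on the left half, recurse into the left half if a good vertex is reported, otherwise into the right half. The recursion bottoms out at a singleton, which is the answer.

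For the complexity, the worst-case bound of Lemma~\ref{lem:vertex_find} gives that \textbf{VertexFind} on a list of length $N$ makes at most $O(\sqrt{\delta_{\max} N} \log(N/\zeta))$ function calls. Summing over the doubling phase yields a geometric series
\[
    \sum_{j=0}^{k^\star} \sqrt{\delta_{\max} \cdot 2^j}\, \log\!\left(\tfrac{2^j}{\zeta}\right) \;=\; O\!\left(\sqrt{\delta_{\max} \cdot 2^{k^\star}}\, \log\!\left(\tfrac{|L|}{\zeta}\right)\right) \;=\; O\!\left(\sqrt{\delta_{\max}\, i}\, \log\!\left(\tfrac{|L|}{\zeta}\right)\right),
\]
since the sum is dominated by its largest term and $2^{k^\star} \leq 2i$. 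The binary search then makes one call per halving on windows of sizes $2^{k^\star-1}, 2^{k^\star-2}, \ldots, 1$, whose costs form the same kind of geometric sum and are again bounded by $O(\sqrt{\delta_{\max}\, i}\log(|L|/\zeta))$. When no good vertex exists at all, the doubling phase runs until the final prefix of size $\Theta(|L|)$ is queried, giving the worst-case bound $O(\sqrt{\delta_{\max} |L|}\log(|L|/\zeta))$. The corresponding elementary-operation bounds follow from the same summations, carrying along the extra $\log(|L|)\log(\delta_{\max})$ factor supplied by Lemma~\ref{lem:vertex_find}.

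The main subtlety, and the place the factor $\log(|L|/\epsilon)$ ultimately comes from, is controlling the aggregate failure probability across all the \textbf{VertexFind} calls. Each call can fail (either by returning an incorrect index or by producing a spurious ``no'' answer) with probability up to $\zeta$, and the algorithm makes $O(\log |L|)$ calls in total between the two phases. A union bound therefore demands $\zeta = \Theta(\epsilon / \log |L|)$ to guarantee overall success probability at least $1-\epsilon$. Since \textbf{VertexFind}'s cost depends only as $\log(1/\zeta)$ on the per-call failure probability, this substitution inflates each call by only an additional $\log\log |L|$ factor, which is absorbed into the claimed $\log(|L|/\epsilon)$. Correctness of the binary search relies on trusting both positive and negative reports from \textbf{VertexFind}, which is precisely what this union bound secures with probability at least $1-\epsilon$.
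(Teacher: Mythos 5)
Your proposal is correct and follows essentially the same strategy as the paper's proof: an exponential/doubling search using \textbf{VertexFind} to locate a window of size $O(i)$ containing the first good vertex, followed by a binary search within that window, with both phases' costs bounded by geometric sums dominated by their largest term, and with the per-call failure probability set to $\Theta(\epsilon/\log|L|)$ via a union bound over the $O(\log|L|)$ calls. The only (immaterial) difference is that you run the doubling phase on nested prefixes while the paper uses disjoint doubling segments and reuses the index returned by \textbf{VertexFind} as the initial upper bound for the binary search.
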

\begin{proof}
    Let $i$ be the index of the first good vertex in $L$, and let $q$ be such that $2^q$ is the smallest power of $2$ larger than $i$, and for clarity assume that the length of $L$ is a power of $2$ (this is without loss of generality -- we can always pad $L$ and incur at most a constant overhead in the run-time). The algorithm will proceed in two stages: first, we identify the segment of $L$ in which $i$ lies; then once we have identified the segment, we perform a binary search to identify the precise location of $i$ in that segment. 
    
    Using the algorithm \textbf{VertexFind} from Lemma~\ref{lem:vertex_find}, we search over regions of $L$ that double in size each time, in order to identify an upper bound $j$ on $i$ satisfying $i \leq j \leq 2^q$. In particular we repeat the following routine, initialising $l = 0$ and $r = 1$
    \begin{enumerate}
        \item Let $J = u_{l},\dots,u_{r}$ be the sub-list of vertices in $L$ between $l$ and $r$. Run \textbf{VertexFind}($J,\zeta$) to find a good vertex in $J$, or to determine (with probability $\geq 1-\zeta$) that none exists. This requires at most $\tilde{O}(\sqrt{\delta_{\max}|J|}\log(1/\zeta))$ function calls and other elementary operations.
        \item If a good vertex was found at index $j$, then return $l$ and $j$ and stop.
        \item Otherwise set $l \gets r+1$ and $r \gets 2r$. If $l > |L|$, return `no good vertex exists' and stop; otherwise go to step 1.
    \end{enumerate}
    If the above routine fails to find a good vertex, then we can simply output `no good vertex exists' and stop. If instead there is a good vertex at position $i$, then with high probability (specifically $\geq (1-\zeta)^q$) we will detect this, and output an index $j$ that we know to be an upper bound to $i$. It is only an upper bound since one segment of $L$ might contain multiple good vertices and \textbf{VertexFind} will return any one of these, and so all we learn is that the vertex we're looking for is either at that position or before it. Similarly, since \textbf{VertexFind} will have failed to find any good vertex in any preceding segment that doesn't include $i$, we know a lower bound on the position of $i$, namely $l$. 
    
    From Lemma~\ref{lem:vertex_find}, the number of function calls made by \textbf{VertexFind} on a list of size $a$ with failure probability $\zeta$ is $O\left(\sqrt{\delta_{\max}a} \log\left(\frac{a}{\zeta}\right) \right)$, and hence the run-time of the above procedure to find the segment of $L$ containing $i$ is 
    \begin{eqnarray*}
        \sum_{k=0}^{q} O\left(\sqrt{\delta_{\max}2^k} \log\left(\frac{2^k}{\zeta}\right) \right) &\leq& \sum_{k=0}^{q} O\left(\sqrt{\delta_{\max}2^k} \log\left(\frac{i}{\zeta}\right) \right) \\
        &=& O\left(\sqrt{\delta_{\max}2^q} \log\left(\frac{i}{\zeta}\right) \right) \\
        &\leq& O(\sqrt{\delta_{\max}{i}}\log(i/\zeta))\,,
    \end{eqnarray*}
    where the first and last inequalities follow since $q = \lceil \log_2 i \rceil$ (and hence $2^q \leq 2i$). 
    
    \ 
    
    Once we have lower and upper bounds on the value of $i$, we can perform a binary search to find $i$ precisely. The procedure is the following, initialising $r=j$:
    \begin{enumerate}
        \item Set $c = \lceil \frac{l+r}{2} \rceil$, and let $J = u_{l},\dots,u_{c}$ be the sub-list of $L$ containing the left half of the vertices indexed between $l$ and $r$. If $|J| = 1$, classically check whether the vertex it contains is good or not. If it is, return $l$, otherwise return $l+1$.
        \item Run \textbf{VertexFind}($J,\zeta'$) to attempt to find a vertex in $J$, which requires at most \\ $O\left(\sqrt{\delta_{\max}(c-l)} \log\left(\frac{(c-l)}{\zeta'}\right) \right)$ function calls and other elementary operations. 
        \item If a marked vertex is found at position $l \leq j \leq c$, then set $r\gets j$. Otherwise, set $l \gets c$. Repeat from step $1$ above.
    \end{enumerate}
    This procedure (which is just binary search on $L$ with a quantum subroutine) will return the index of the left-most good vertex with probability $\geq (1-\zeta')^{\lceil \log(a) \rceil}$ where $a$ is the size of the segment of $L$ identified by the preceding routine. 
    
    Once again, we are running \textbf{VertexFind} (with failure probability now $\zeta'$) on lists that halve in size each time, starting with one that is of size $a/2$. Hence, the total run-time of this part of the algorithm is at most 
    \begin{eqnarray*}
        \sum_{k=0}^{\lceil \log(a/2) \rceil} O\left( \sqrt{\delta_{\max}2^k} \log\left(\frac{2^k}{\zeta'}\right) \right) 
        &\leq& \sum_{k=0}^{\lceil \log(a/2) \rceil} O\left( \sqrt{\delta_{\max}2^k} \log\left(\frac{a}{\zeta'}\right) \right) \\
        &=& O\left( \sqrt{\delta_{\max}a} \log\left(\frac{a}{\zeta'}\right) \right)\,.
    \end{eqnarray*}
    Finally, we note that the segment containing $i$ is of size at most $i$, and hence we find that the run-time of this part of the algorithm is also at most
    \[
        O\left( \sqrt{\delta_{\max}i} \log\left(\frac{i}{\zeta'}\right) \right)\,.
    \]
    
    \ 
    
    It remains to choose the failure probabilities $\zeta$ and $\zeta'$. We require that both parts of the algorithm succeed with probability $\geq \sqrt{(1-\epsilon)}$ each. In order to achieve this, we require for the first part that $(1-\zeta)^q \geq \sqrt{(1-\epsilon)}$, and for the second part that $(1-\zeta')^{\lceil \log(i) \rceil} = (1-\zeta')^q \geq \sqrt{(1-\epsilon)}$. Both conditions can be satisfied by choosing $\zeta = \zeta' = \Omega(\epsilon/q) = \Omega(\epsilon/\log(|L|))$, yielding our final run-time.
\end{proof}
Note that we made use of the worst-case complexities for \textbf{VertexFind} in the above analysis, in particular using the variant of the algorithm that is not faster even when there are more good vertices available, and hence it is likely that in practice the algorithm will be much faster. 

\ 

Finally, we can use the algorithm \textbf{FindFirst} to construct a quantum version of the Louvain algorithm. Recall that the classical algorithm constructs a randomly ordered list of all vertices, locates the first good vertex in this list, and then moves it. Then it chooses the next good vertex, and so on, repeating this process until no good vertices are found in the remainder of the list. The corresponding quantum algorithm is precisely the same as the classical Louvain algorithm, except that the step in which the classical algorithm looks for the next good vertex in the list of vertices is replaced by a single call to the \textbf{FindFirst} algorithm of Lemma~\ref{lem:find_first}.

To see how long this algorithm takes, suppose the classical algorithm makes $T$ moves, and that the $k$th move necessitated inspecting $t_k$ vertices before finding one that could be moved. Then as we saw in Section~\ref{sec:complexity_of_louvain} the classical algorithm will make at most
    \[
        \sum_{k \in [T]} O(\delta_{\max} t_k)\,,
    \]
    calls to $g_\Delta$, and use 
    \[
        \sum_{k \in [T]} \tilde{O}(\delta_{\max} t_k + d_{\max})\,
    \]
other operations.\footnote{Here we assumed that the algorithm makes use of the additional data structure described in Section~\ref{sec:complexity_of_louvain}, which in particular allows us to obtain the inputs to $g_\Delta$ in constant time, in exchange for a $O(d_{\max})$-time update step after moving a vertex.} We will also use this data structure in the quantum version of the algorithm. Finally, note that this run-time is somewhat pessimistic -- the $\delta_{\max}$ could be replaced with the average number of neighbouring communities amongst all vertices inspected during the $k$th step, which in general should be smaller. The quantum algorithm cannot take advantage of this fact, however, and is really limited by $\delta_{\max}$.

\begin{theorem}
    There exists a quantum algorithm \textbf{QLouvain} that, with probability $\geq 2/3$, behaves identically to the Louvain algorithm and requires at most
    \[
        \sum_{k \in [T]}\tilde{O}(\sqrt{\delta_{\max}t_k} )\,
    \]
    calls to $g_\Delta$ and 
    \[
        \sum_{k \in [T]}\tilde{O}(\sqrt{\delta_{\max}t_k} + d_{\max})\,
    \]
    other elementary operations.
\end{theorem}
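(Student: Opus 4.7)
The plan is to construct \textbf{QLouvain} as a straightforward quantisation of the classical Louvain algorithm: we follow the same outer loop (randomly order vertices, then sweep through them, moving the first good vertex we find), but replace the classical subroutine that locates the next good vertex in the ordered list with a call to \textbf{FindFirst} from Lemma~\ref{lem:find_first}. Since \textbf{FindFirst} only tells us \emph{which} vertex to move but not \emph{where} to move it, we will follow each such call by a call to \textbf{QMax} (Lemma~\ref{lem:quantum_max}) on the list of neighbouring communities of the returned vertex $u$ in order to identify $\bar{\alpha} = \argmax_{\alpha \in \zeta_u} \Delta_u^{\alpha}$, after which we perform the move classically and update the data structure of Section~\ref{sec:complexity_of_louvain} in time $\tilde{O}(d_{\max})$, exactly as in the classical case.

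Step by step: during the $k$th move, suppose the classical algorithm had to inspect $t_k$ vertices in the ordered list before finding a good one. We invoke \textbf{FindFirst} on the remaining portion of the list with failure probability $\epsilon_k$; by Lemma~\ref{lem:find_first} this costs $\tilde{O}(\sqrt{\delta_{\max} t_k}\log(1/\epsilon_k))$ queries to $g_\Delta$ and the same number (up to polylog factors) of other elementary operations. The returned vertex $u$ has $\delta_u \leq \delta_{\max}$ neighbouring communities, so the subsequent \textbf{QMax} call costs $\tilde{O}(\sqrt{\delta_{\max}}\log(1/\epsilon_k))$ queries. Thus a single move costs $\tilde{O}(\sqrt{\delta_{\max} t_k}\log(1/\epsilon_k)) + \tilde{O}(d_{\max})$, where the $d_{\max}$ term accounts for the data structure update.

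For correctness, we need every one of the (at most) $2T$ quantum subroutine calls across the whole first phase to succeed, so that the sequence of vertices moved by \textbf{QLouvain} coincides exactly with that of the classical algorithm. A union bound shows that setting each $\epsilon_k = O(1/T)$ suffices for overall success probability $\geq 2/3$. Since $T$ is bounded by a polynomial in $n$ (by the discussion in Section~\ref{sec:complexity_of_louvain} and Appendix~\ref{app:louvain_total_moves}), the amplification overhead is $\log(1/\epsilon_k) = O(\log T) = O(\log n)$, which is absorbed into the $\tilde{O}$. Summing over all $T$ moves then gives the stated bounds
\[
    \sum_{k\in[T]} \tilde{O}\bigl(\sqrt{\delta_{\max} t_k}\bigr) \quad\text{and}\quad \sum_{k\in[T]} \tilde{O}\bigl(\sqrt{\delta_{\max} t_k} + d_{\max}\bigr).
\]

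The main subtlety, and what I expect to be the chief obstacle, is precisely this success-probability bookkeeping: because the quantum subroutines are invoked many times in sequence and we demand that \emph{all} of them succeed (so that the algorithm's trajectory matches the classical one exactly), the naive amplification overhead grows with $T$. The argument works out cleanly only because $T = \poly(n)$, so the amplification factor is $O(\log n)$ and disappears into the $\tilde{O}$. Everything else (oracle construction for $g_\Delta$ using the data structure, sorted adjacency lists, the $\tilde{O}(d_{\max})$ update cost per move) is already in place from Section~\ref{sec:complexity_of_louvain} and the proofs of Lemmas~\ref{lem:find_first} and~\ref{lem:quantum_max}.
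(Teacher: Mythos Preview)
Your proposal is correct and follows essentially the same approach as the paper: replace the classical search for the next good vertex by a call to \textbf{FindFirst}, follow it with \textbf{QMax} to determine the best move, update the data structure in $\tilde{O}(d_{\max})$, and set all failure probabilities to $O(1/T) = O(1/\poly(n))$ so that the union bound gives overall success probability $\geq 2/3$ with only an $O(\log n)$ amplification overhead absorbed into the $\tilde{O}$. The paper's proof is identical in structure and detail.
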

\begin{proof}
    We replace the part of the classical algorithm that searches for the next good vertex with a single call to \textbf{FindFirst}. Once we identify a good vertex, we can use the \textbf{QMax} algorithm of Lemma~\ref{lem:quantum_max} to obtain the \emph{best} move available for that vertex, with probability at least $1-\epsilon$, using at most $O(\sqrt{\delta_{\max}}\log(1/\epsilon))$ function calls and $O(\sqrt{\delta_{\max}}\log(\delta_{\max})\log(1/\epsilon))$ other operations. After making the move, we have to update the data structure used by the \textbf{VertexFind} subroutine, which incurs a time-cost of $O(d_{\max})$. 

    To obtain the quantum run-time, we first need to choose settings for the failure probabilities of the \textbf{FindFirst} and maximum-finding subroutines. In particular, we require that all calls to both routines succeed with probability $\geq \sqrt{2/3}$ each, and so we can choose the failure probability of both to be $\epsilon = O(1/T) = O(1/\poly(n))$. Hence, the quantum version of Louvain will require at most 
    \[
        \sum_{k \in [T]}O(\sqrt{\delta_{\max}t_k}\log(n))\,
    \]
    calls to $g_\Delta$ and 
    \[
        \sum_{k \in [T]}\tilde{O}(\sqrt{\delta_{\max}t_k} + d_{\max})\,
    \]
    other elementary operations, yielding a square-root worst-case improvement over the classical algorithm. Since the quantum algorithm mimics (with constant probability) the behaviour of the classical one, it will produce exactly the same output. 
\end{proof}

\subsubsection{Simple quantum louvain}
\label{sec:grover_louvain}

Rather than finding the \emph{first} marked item in a list, which requires repeated Grover searches using a bisection method, for a quantum computer it is more natural to simply find \emph{any} marked item, which requires only a single application of Grover search. Motivated by this observation, in this section we describe a slightly different version of the Louvain algorithm that has a much simpler quantum analogue.

Concretely, the simpler algorithm (i) searches over the list of all vertices until it finds a good vertex by sampling vertices at (uniformly) random \emph{with replacement}; (ii) once a good vertex is found, we move it and go back to (i). This process is repeated until no good vertices can be found. The quantum version of this algorithm, which we call \textbf{SimpleQLouvain}, is as follows:
\begin{enumerate}
    \item Call \textbf{VertexFind}($L,\delta$) with $L$ the list of all vertices in the graph, and $\delta$ to be determined. If there are any vertices in the graph for which a good move is available, this will return one at random. Otherwise, it will signal that none exist and we can end this phase of the algorithm.
    \item Assuming that \textbf{VertexFind}($L,\delta$) returned a vertex, we use the \textbf{QMax} algorithm of Lemma~\ref{lem:quantum_max} to obtain the \emph{best} move available to that vertex. 
    \item We move the vertex, update the data structure, and then repeat from Step~1. 
\end{enumerate}
We note that this algorithm is subtly different to the original Louvain algorithm: here we search \emph{with replacement} over the vertices of the graph, whereas the original  algorithm searchers \emph{without replacement} by fixing a randomly ordered list of vertices and sequentially searching through it. We verify numerically in Section~\ref{sec:numerics} that this difference does not qualitatively change the behaviour of the algorithm in any significant way.

\

For a $T$-step run, in order for \textbf{SimpleQLouvain} to succeed (i.e.~find a good vertex whenever one exists) with probability $\geq 2/3$, we require that $(1-\delta)^T \geq 2/3$, which can be achieved by setting $\delta = 1/O(T) = 1/O(\poly(n))$. In that case, if there are an $f_k$ fraction of good vertices in the graph after having already made $k-1$ moves, the expected number of function calls (and other operations) of \textbf{VertexFind} will be at most $\tilde{O}(\sqrt{\delta_{\max}/f_k})$ as per Lemma~\ref{lem:vertex_find}. Finally, we can choose the failure probability $\delta'$ of the quantum maximum-finding routine to be $\delta'=\delta$. Hence, the overall algorithm will require an expected number of at most
\[
    \sum_{k \in [T]} \tilde{O} \left( \sqrt{\frac{\delta_{\max}}{f_k}} \right)\,
\]
calls to $g_{\Delta}$ and 
\[
    \sum_{k \in [T]} \tilde{O} \left( \sqrt{\frac{\delta_{\max}}{f_k}} + d_{\max} \right)\,
\]
other operations. 

In contrast, the expected number of function calls made by the equivalent classical algorithm (the one that also searches with replacement) is
\begin{equation}
	\sum_{k \in [T]} O\left( \frac{\delta_{\max}}{f_k} \right),
	\label{eq:classical_complexity_with_replacement}
\end{equation}
and so the quantum algorithm is asymptotically more efficient for any step of that algorithm (both in terms of calls to $g_\Delta$ and in terms of other operations). In general, the final stages of the algorithm will have $1/f_k = O(n)$, and in these steps the quantum speedup is quite large. 

Finally, we note that if we do not make use of the data structure for the quantum algorithm, then the number of function calls remains the same, but the number of other operations becomes $\sum_{k \in [T]} \tilde{O}\left(\frac{\sqrt{\delta_{\max}}+d_{\max}}{\sqrt{f_k}} + d_{\max} \right)$.

\

\noindent In practice, the $\delta_{\max}$ appearing in the classical run-time in Eq.~\eqref{eq:classical_complexity_with_replacement} is overly pessimistic: it will in fact be closer to the average number\footnote{But not exactly: it is actually the average over subsets of vertices containing precisely one good vertex. If the good vertices have many adjacent communities then this average will be biased towards this. We discuss this in more detail in Section~\ref{sec:QVTAA}.} of adjacent communities, $\delta_{\avg}$, since the algorithm visits vertices one by one, computing $\bar{\Delta}_u$ for each in time $O(\delta_u)$. On the other hand the quantum algorithm really is limited by the maximum number of adjacent communities due to our use of Grover search, and hence in practice could find itself being slower. We observe in Section~\ref{sec:numerics} that in fact this is indeed the case, and so the `worst-case' quantum speedup that we find via an asymptotic analysis often doesn't materialise in practice. In Section~\ref{sec:QVTAA} we describe a more sophisticated quantum algorithm that makes use of the technique of variable time amplitude amplification to remove the dependency on $\delta_{\max}$, in an effort to overcome this limitation.

\subsubsection{Trading a square-Root for a log factor for sparse Graphs}\label{sec:sparse_graph_versions}
In the quantum algorithms described above, the run-time contains a $\log$ factor that could in practice be quite large. For example, the run-time of \textbf{VertexFind}($L$,$\zeta$), which finds a good vertex in a list $L$ (or confirms that there aren't any) with probability at least $1-\zeta$ is $O\left( \sqrt{\frac{\delta _{\max}}{f}} \log\left( \frac{|L|}{\zeta  } \right) \right)$, where $f$ is the fraction of vertices in $L$ that are good. The $\log(|L|)$ overhead arises because the quantum algorithm performs a Grover search over the vertices in $L$, using another Grover search as a subroutine. In order for the outer search to succeed with probability at least $1-\zeta$, the inner Grover search has to succeed with a much larger probability, namely $\approx 1-\frac{\zeta}{\sqrt{|L|}}$ (since the outer search will make in the worst-case $O(\sqrt{|L|})$ calls to the inner search routine). In practice this might be a substantial overhead, especially if $\log(|L|)$ is large relative to $\delta_{\max}$. In our algorithms, the list $L$ will often be of size $\Theta(n)$, and hence for very sparse graphs, for example when $\delta _{\max} \leq d_{\max} = O(\log n)$, this overhead will be large enough to negate the square-root speedup that we obtain in terms of $\delta_{\max}$. 

Hence, for such sparse graphs it will often make sense to replace the inner Grover search with a purely classical routine that succeeds with certainty. We will use the suffix `\textbf{SG}' to signify that the inner loop over the neighbouring communities is classical. In this case we can construct an alternative version of \textbf{VertexFind}, in which the time taken to find a good vertex in $L$ with probability $\geq 1-\zeta$ is now $O\left( \frac{\delta_{\max}}{\sqrt{f}} \log\left( \frac{1}{\zeta} \right) \right)$. Using this variant of \textbf{VertexFind} we can then construct different versions of the above quantum algorithms that might perform better on sparse graphs. It is straightforward to check that the alternative run-times of these new algorithms designed for sparse graphs will be the following.
\begin{itemize}
    \item \textbf{VertexFindSG}($L$,$\zeta$): 
        \begin{itemize}
            \item Expected number of function calls at most 
            \[
                O\left(\frac{\delta_{\max}}{\sqrt{f}} \log\left(\frac{1}{\zeta}\right) \right)\,.
            \]
            \item Worst-case number of function calls 
            \[
                O\left({\delta_{\max}}{\sqrt{|L|}} \log\left(\frac{1}{\zeta}\right) \right)\,.
            \]
        \end{itemize}
    \item \textbf{FindFirstSG}($L$,$\epsilon$): 
        \begin{itemize}
            \item Worst-case number of function calls 
            \[
                O\left({\delta_{\max}}{\sqrt{|L|}} \log\left(\frac{\log(|L|)}{\epsilon}\right) \right)\,.
            \]
        \end{itemize}
    \item \textbf{QLouvainSG}: 
        \begin{itemize}
            \item Worst-case number of function calls, if classical Louvain makes $T$ moves, with $t_k$ vertices inspected during move $k$:
            \[
                \sum_{k \in [T]} O\left({\delta_{\max}}{\sqrt{t_k}} \log\left(T\log(n)\right) \right)\,.
            \]
        \end{itemize}
    \item \textbf{SimpleQLouvainSG}: 
        \begin{itemize}
            \item Expected number of function calls, if the algorithm makes $T$ moves, with $f_k$ the fraction of good vertices available during move $k$:
            \[
                \sum_{k \in [T]} O\left(\frac{\delta_{\max}}{\sqrt{f_k}} \log\left(T\right) \right)\,.
            \]
        \end{itemize}
\end{itemize}
Hence, the $\log(n)$ factor present in \textbf{QLouvain} becomes a $\log\log(n)$ factor, although the overhead of $\log(T)$ is still present in the new versions of both that algorithm and of \textbf{SimpleQLouvain}.

\subsubsection{Quantum Louvain via variable time amplitude amplification}
\label{sec:QVTAA}

An unsatisfactory element of the algorithms from the previous sections is that the subroutine that computes $\bar{\Delta}_u$ takes a different amount of time for each $u$, but the outer Grover search of \textbf{VertexFind} is limited by its slowest branch and hence its run-time depends on $\delta_{\max}$, in contrast to the classical algorithm whose run-time depends on a value closer to $\delta_{\avg}$. For many families of graphs (e.g. power-law graphs), this discrepancy could be significant -- i.e. it might not be unlikely that $\sqrt{\delta_{\max}} > \delta_{\avg}$. In this section we describe a more sophisticated quantum algorithm, \textbf{VertexFindVTAA}($L$,$\zeta$), that searches for good vertices from a list $L$, and whose run-time is sensitive to the fact that most vertices will not have a number of neighbouring communities close to the maximum. Similarly to the previous sections, we can then use this quantum algorithm as a subroutine to construct quantum algorithms for community detection. 

Our main technical tool is the \emph{variable time amplitude amplification} algorithm of Ambainis~\cite{ambainis12}. Using this as a subroutine, we show
\begin{restatable}{theorem}{VTAA}
\label{theo:VTAA}
Given a list $L$ of vertices such that a fraction $f>0$ of them are good, and the unitaries $\mA_c$ and $\mA_s$ defined in Eqs.~\eqref{eq:A_s} and~\eqref{eq:A_c}, we can use variable time amplitude amplification to construct a quantum algorithm \textbf{VertexFindVTAA}($L,\zeta$) that makes an expected
\[
    O\left(\left(\delta_{\max} \log(\delta_{\max}) + \frac{t_{\text{avg}}^\text{q}}{\sqrt{f}} \log^{1.5}\delta_{\max}\right) \log(1/\zeta)\right)
\]
calls to $g_\Delta$, where
\[
    t_{\avg}^{\text{q}} = \sqrt{\sum_{i=1}^{\delta_{\max}} p_i i^2},
\]
and that returns the identity of a good vertex and the best move available to it with probability $\geq 1-\zeta$. If there is no good vertex, the algorithm will signal this and requires at most 
\[
    O\left(\left(\delta_{\max} \log(\delta_{\max}) + t_{\text{avg}}^\text{q}\sqrt{|L|} \log^{1.5}\delta_{\max}\right) \log(1/\zeta)\right)
\]
queries to do so.
\end{restatable}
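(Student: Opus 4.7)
The plan is to instantiate the variable time amplitude amplification framework of Lemma~\ref{lem:VTAA} with a ``per-vertex'' inner algorithm whose running time equals $\delta_u$, the number of neighbouring communities of the sampled vertex. The inner routine is the straightforward deterministic evaluation of $\bar{\Delta}_u$ and $\bar{\alpha}_u$ by looping over all $\alpha\in\zeta_u$, using the data structure of Section~\ref{sec:complexity_of_louvain} to obtain the inputs to $g_\Delta$ in $O(1)$ time; reversible simulation promotes this into a quantum subroutine that runs coherently in superposition over $u\in L$. VTAA then exploits the fact that typically $\delta_u \ll \delta_{\max}$, which is precisely the advantage it offers over plain amplitude amplification.

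First I would construct the variable-stopping-time algorithm $\mA$. The input register holds a uniform superposition over the vertex labels in $L$. Conditioned on $u$, $\mA$ invokes $g_\Delta$ once per $\alpha\in\zeta_u$, accumulating $\bar{\Delta}_u$ and $\bar{\alpha}_u = \argmax_{\alpha\in\zeta_u}\Delta_u^{\alpha}$ in ancilla registers. The VTAA clock register $O$ advances one tick per community processed; it holds value $2$ (still running) until step $\delta_u$, at which point it switches to $1$ if $\bar{\Delta}_u>0$ and to $0$ otherwise. The unitaries $\mA_c$ and $\mA_s$ from Eqs.~\eqref{eq:A_s}--\eqref{eq:A_c} are then the standard VTAA checkpointed and full-composition wrappers around this per-vertex computation.

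Next I would read off the three VTAA parameters. The maximum stopping time is $T_{\max}=\delta_{\max}$; grouping vertices by their number of neighbouring communities gives the stopping-time distribution $p_i = |\{u\in L : \delta_u = i\}|/|L|$, so that
\[
    T_{\avg}
    \;=\; \sqrt{\sum_{i=1}^{\delta_{\max}} p_i\, i^{2}}
    \;=\; t_{\avg}^{\text{q}},
\]
and the probability of measuring $O=1$ after running $\mA$ is exactly $p_{\text{succ}} = f$. Substituting into Lemma~\ref{lem:VTAA} gives one amplified round of cost $\tilde{O}\bigl(\delta_{\max}\log\delta_{\max} + (t_{\avg}^{\text{q}}/\sqrt{f})\log^{1.5}\delta_{\max}\bigr)$ queries to $g_\Delta$, after which measurement of the output register yields a good vertex together with its best move $\bar{\alpha}_u$. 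Repeating $O(\log(1/\zeta))$ times per the final clause of Lemma~\ref{lem:VTAA} boosts the success probability to $1-\zeta$ and establishes the first bound.

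For the ``no good vertex'' case ($f=0$), VTAA as stated does not directly supply a certificate because $p_{\text{succ}}$ enters in the denominator, so I would cap the outer amplitude-amplification iteration count at the Grover-style worst case $O(\sqrt{|L|})$; this replaces $1/\sqrt{f}$ by $\sqrt{|L|}$ in the VTAA bound and yields the claimed second complexity, with a cheap classical verification of the returned index distinguishing ``found a good vertex'' from ``none exists''. The main obstacle I anticipate is verifying the technical preconditions of Ambainis's VTAA framework for our specific $\mA$: one has to check that $\mA$ is non-adaptive in the sense he requires (each branch's transcript and stopping time depend only on $u$), and that our clock register is maintained compatibly with the $O(\log T_{\max})$ internal levels of amplification inside VTAA. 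Once that bookkeeping is in place, the stated query counts follow by direct substitution into Lemma~\ref{lem:VTAA}.
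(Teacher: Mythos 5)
Your proposal is correct and follows essentially the same route as the paper: a coherent, non-adaptive sequential scan over each vertex's neighbouring communities as the variable-stopping-time subroutine, with $T_{\max}=\delta_{\max}$, $T_{\avg}=t_{\avg}^{\text{q}}$, $p_{\text{succ}}=f$, substitution into Lemma~\ref{lem:VTAA}, and $O(\log(1/\zeta))$ repetitions. Be aware that the ``bookkeeping'' you defer -- verifying Ambainis's preconditions (the nested subspaces $\mH_i\subseteq\mH_{i+1}$, the decomposition of each intermediate state into flag-$0$/$1$ components in $\mH_i$ and a flag-$2$ component in $(\mH_i)^\perp$, and the projection consistency $P_{\mH_i}\ket{\psi_{i+1,b}}=\ket{\psi_{i,b}}$) -- is in fact the bulk of the paper's proof, though your construction does satisfy them.
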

\noindent We defer to Appendix~\ref{app:VTAA} for details of the algorithm and the proof of the theorem. 

\

\noindent Note that by using this version of the \textbf{VertexFind} algorithm, we lose the square-root improvement of the dependence on $\delta_{\max}$ that we obtained with the simpler Grover-based quantum algorithm. In exchange for the square-root speed up, the dependence on $\delta_{\max}$ is improved to a dependence on something closer to $\delta_{\avg}$. The reason for losing this speed up is because we exchanged a quantum search over the neighbours of each vertex with a classical, sequential one. One might wonder whether we could retain the square-root speedup for this part of the algorithm by replacing the classical algorithm $\cA = \cA_c \cdots \cA_c \cA_s$ with a quantum subroutine, say, $\cQ = \cQ_q \cdots \cQ_q$. However, this does not seem possible since existing quantum algorithms for search with an unknown number of marked items, such as the Grover search of Lemma~\ref{lem:grover}, cannot be separated into fixed `steps' $\cQ_q$ that satisfy the conditions from~\cite{ambainis2010quantum} (and described in the proof of the theorem above) without assuming, for example, that every vertex has the same number of good moves -- that is, to use VTAA with a subroutine $\cQ$, the algorithm $\cQ$ cannot be \emph{adaptive}, and must instead act identically on every branch of the superposition to which it is applied (in our case, on each vertex $u\in V$). It is an interesting open question whether the techniques from~\cite{ambainis2010quantum} can be extended to such an adaptive setting, particularly in the case where the subroutine used in VTAA is a quantum search over an unknown number of marked items.

Finally, we remark that, due to its very complex nature, it was extremely difficult to precisely pin down the exact number (including constants) of function calls required by the VTAA-based quantum algorithm.\footnote{In fact, we suspect that it is not possible to do so (without substantial work) from the description given in~\cite{ambainis2010quantum} alone.} For this reason we did not numerically simulate the algorithm in order to compare it to our other quantum (and classical) algorithms for community detection.

\subsubsection{Quantum Louvain algorithms utilizing different search-spaces}
\label{sec:EdgeLouvain}

In this section we consider a vastly simplified algorithm, called \textbf{EdgeQLouvain} which only utilizes a single Grover search over a large search space, rather than searching over multiple search spaces (vertices and their neighbours).\footnote{We also considered a similar algorithm that instead searches over (vertex, neighbouring-community) pairs. This algorithm gave very similar results to \textbf{EQL}, and therefore we restrict our attention only on \textbf{EQL}.} The algorithm differs somewhat in spirit to the original Louvain algorithm described in Section~\ref{sec:classical}, however as we show in Section~\ref{sec:numerics}, tends to yield similar results in terms of the modularity it obtains. 

\textbf{EdgeQLouvain}, or \textbf{EQL} for short, searches over directed edges $(u,v)$ of the graph for one that gives an increase in modularity if $u$ is moved to community $\ell(v)$. Upon finding one, it moves $u$ greedily to a neighboring community (but not necessarily to $\ell(v)$ itself). This approach has the two advantages that, in the quantum case, the Grover search does not need to make use of another nested Grover search over neighboring communities, and likewise there will now be only an additive dependency on either one of $\delta_{\max}$ or $\delta_{\avg}$, nullifying the two issues that we have encountered with our algorithms thus far. 

Given the edge set $E$ of the input graph for the Louvain algorithm, let $E_d = \{(u,v): \{u, v\} \in E \}$ be the set of directed edges obtained by replacing every undirected edge $\{u,v\} \in E$ by both $(u,v)$ and $(v,u)$, making $|E_d| = 2|E|$. As usual, we assume that we have access to the data structure described in Section~\ref{sec:complexity_of_louvain}. The three phases of the algorithm operate as follows:

\ 

\noindent \textit{Initialization} -- Exactly the same initialization procedure as in OL (original Louvain), see Section~\ref{sec:OL_init} for details. 

\ 

\noindent \textit{First phase} -- Use \textbf{QSearch} to search over all edges $(u,v)$ in search of one that yields a good move. As an oracle we provide the unitary that, for a pair $(u,v)$, computes whether $\Delta_{u}^{\ell(v)}$ is positive or not, which can be done using $O(1)$ function calls to $g_\Delta$ and $O(\log \delta_u)$ other operations (to obtain the inputs to $g_\Delta$ we need to perform a binary search over the community adjacency list of $u$ to find the entry corresponding to the community of $v$, whilst the other inputs can be obtained in constant time).

Instead of moving $u$ to $\ell(v)$, we find the \emph{best} neighbouring community of $u$, $\bar{\alpha} = \argmax_{\alpha \in \zeta_u} \Delta_u^{\alpha}$, by using quantum maximum finding over all moves to neighbouring communities of $u$. We then update the data structure as we do for OL in time $\sum_{v \in N_u} O(\log \delta_v)$ (see Section~\ref{sec:classical}). 

\ 

\noindent  \textit{Second phase} --  Again this is identical to the second phase of OL; see Section~\ref{sec:OL_second_phase}.

\ 

By using \textbf{QSearch}, we can find a pair $(u,v)$ yielding a good move with an expected 
\[
    O \left(\frac{1}{\sqrt{h_k}} \log(1/\delta)\right)
\]
number of calls to $g_\Delta$, and 
\[
    \tilde{O} \left(\frac{1}{\sqrt{h_k}} \log(1/\delta)\right)
\]
other operations, where $h_k$ is the fraction of edges that yield a good move during the $k$th step. If no such pair exists then the algorithm will signal this after making at most $O(\sqrt{|E|}\log(1/\delta))$ queries to $g_\Delta$. 

Once we have found such a vertex, with probability $\geq 1-\epsilon$ we can find the best move available using the \textbf{QMax} algorithm of Lemma~\ref{lem:quantum_max}, which will require at most $O(\sqrt{\delta_{\max}} \log(1/\epsilon))$ calls to $g_\Delta$ and $O(\sqrt{\delta_{\max}} \log(\delta_{\max})\log(1/\epsilon))$ other operations. If the algorithm makes $T$ moves in total, we will need to choose $\epsilon$ and $\delta$ such that all calls to either \textbf{QSearch} or quantum maximum-finding will succeed with sufficiently high likelihood that the probability that any one of them fails is at most $2/3$, which can be satisfied by choosing $\epsilon=\delta=1/O(T)=1/O(\poly n)$. Hence, the algorithm will make an expected number of function calls at most
\[
    \sum_{k \in [T]} \tilde{O} \left(\frac{1}{\sqrt{h_k}} + \sqrt{\delta_{\max}} \right)
\]
and 
\[
    \sum_{k \in [T]} \tilde{O} \left(\frac{1}{\sqrt{h_k}} + \sqrt{\delta_{\max}} + d_{\max} \right)
\]
other operations.

The complexity of this algorithm appears to be favourable compared to the algorithms from the previous sections, and has the additional advantage of being very simple and therefore incurring smaller logarithmic overheads. As we show in Section~\ref{sec:numerics}, it also behaves similarly to the original Louvain algorithm in practice, whilst being the fastest amongst all quantum algorithms that we evaluated.

\section{Estimating the run-times of quantum algorithms for community detection}
\label{part:three}

In this section and the next we use the tools and methodology of~\cite{ourotherpaper} to empirically estimate the run-times (more precisely the number of queries to the (gradient of) the modularity function) of our quantum algorithms for a variety of inputs, and use these estimates to compare their performances, to each other and to their classical counterparts. This allows us to estimate how much of the per-step speedup suggested by the asymptotic analyses in Section~\ref{sec:quantum} manifests in the final behaviour of the algorithms, for a range of inputs. We find for all algorithms that some speedup does make it out, though to varying degrees. Moreover we observe that the algorithms that promise the greatest speedups through an asymptotic analysis are not necessarily the ones that achieve the best speedups `in practice'. In our view, this demonstrates the usefulness of this sort of analysis over a purely asymptotic, worst-case one for designing efficient quantum algorithms to use for practical tasks.

In the sections that follow we describe explicitly our approach to simulating our quantum algorithms and estimating their expected run-times. We will focus on \textbf{QLouvain}, \textbf{SimpleQLouvain} and \textbf{EdgeQLouvain} (including both their original and sparse-graph versions), all of which fit into the framework of Algorithm~\ref{alg:general} introduced in Section~\ref{sec:introduction}. We deliberately chose to forgo simulating the algorithms that make use of variable time amplitude amplification (VTAA) as a subroutine, not only because the nature of VTAA makes it difficult to do so, but also because the expected speedup will only be a constant given that the Louvain algorithm is predominantly applied to sparse graphs.

\subsection{Complexity bounds}\label{sec:complexity_bounds}
The first step is to obtain tight bounds (including all constants etc.)~on the complexities of the quantum sub-routines that we make use of. We begin by recalling the complexity bounds obtained in~\cite{ourotherpaper} for the two quantum sub-routines that we use here: Grover search with an unknown number of marked items (\textbf{QSearch}, Lemma~\ref{lem:grover}), and quantum maximum-finding (\textbf{QMax}, Lemma~\ref{lem:quantum_max}).

\paragraph{Expected complexity of Grover search} As we mentioned in Section~\ref{sec:prelims}, when considering the full run-time, including constants, of \textbf{QSearch}, there is an extra hyper-parameter $N_{\text{samples}}$ used to determine the number of classical samples that are drawn before Grover search is used. Then the worst-case expected complexity of \textbf{QSearch} is as follows

\begin{lemma}[Worst-case expected complexity of \textbf{QSearch}, {[Lemma 4,~\cite{ourotherpaper}]}] 
\label{lem:QSearch}
Let $L$ be a list, $g: L \rightarrow \{0,1\}$ a Boolean function, $N_{\text{samples}}$ a non-negative integer and $\epsilon > 0$, and write $t = |g^{-1}(1)|$ for the (unknown) number of marked items of $L$. Then, $\textbf{QSearch}(L, N_{\text{samples}},\epsilon)$ finds and returns an item $x \in L$ such that $g(x) = 1$ with probability at least $1-\epsilon$ if one exists using an expected number of queries to $g$ that is given by
\begin{equation}\label{eq:e_qsearch}
    E_{\textbf{QSearch}}(|L|,t,N_{\text{samples}},\epsilon) = \frac{|L|}{t}\left(1 - \left(1-\frac{t}{|L|}\right)^{N_{\text{samples}}}\right) + \left(1-\frac{t}{|L|}\right)^{N_{\text{samples}}} c_q Q_{\text{Grover}}(|L|,t) \, ,
\end{equation}
where
\begin{equation}
    Q_{\text{Grover}}(|L|,t)
    \leq F(|L|,t) \left(1 + \frac{1}{1 - \frac{F(|L|,t)}{\alpha \sqrt{|L|}}} \right) \, ,
\label{eq:QGrover_ubound_methodology}
\end{equation}
with
\begin{equation}\label{eq:flt}
    F(|L|,t) =
    \begin{cases}
        \frac{9}{4}\frac{|L|}{\sqrt{(|L| -t )t}} + \left\lceil\log_{\frac{6}{5}}\left(\frac{|L|}{2\sqrt{(|L| -t )t}} \right) \right\rceil - 3 \leq \frac{\alpha \sqrt{L|}}{3 \sqrt{t}}   &\text{for} \quad 1 \leq t < \frac{|L|}{4} \\
        2.0344 &\text{for} \quad \frac{|L|}{4} \leq t \leq |L|. 
    \end{cases}
\end{equation}
If no marked item exists, then the expected number of queries to $g$ equals the number of queries needed in the worst case (denoted by $W_{\textbf{QSearch}}(|L|, N_{\text{samples}},\epsilon)$), which is given by
\begin{equation}
    E_{\textbf{QSearch}}(|L|,0,N_{\text{samples}},\epsilon) = W_{\textbf{QSearch}}(|L|,N_{\text{samples}},\epsilon) \leq N_{\text{samples}} + \alpha c_q \ceil{\log_3(1/\epsilon)}) \sqrt{|L|} \, . \label{eq:worst_qsearch} \\
\end{equation} 
In the formulas above, $c_q$ is the number of queries to $g$ required to implement the oracle $\mO_g \ket{x}\ket{0} = \ket{x}\ket{g(x)}$, and $\alpha = 9.2$.
\end{lemma}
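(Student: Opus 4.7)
The plan is to analyze the obvious two-stage algorithm implicit in the statement: first draw up to $N_{\text{samples}}$ uniformly random items from $L$ and query $g$ on each classically, returning immediately if any is marked; if this pre-processing fails to find a marked element, fall back to the Boyer--Brassard--H\o yer--Tapp (BBHT) quantum search routine, repeated enough times to drive the total failure probability below $\epsilon$. Decomposing the expected query count over the two branches weighted by the probability that each is actually executed will produce the two summands of~\eqref{eq:e_qsearch}.

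For the classical pre-processing, let $X$ denote the number of queries actually performed (at most $N_{\text{samples}}$) and $p = t/|L|$ the probability that a single random draw is marked. Since draws are i.i.d.~and we stop on the first success, $P(X \geq k) = (1-p)^{k-1}$ for $1 \leq k \leq N_{\text{samples}}$. The tail-sum identity then gives
\begin{equation*}
    \mathbb{E}[X] \;=\; \sum_{k=0}^{N_{\text{samples}}-1}(1-p)^{k} \;=\; \frac{1-(1-p)^{N_{\text{samples}}}}{p},
\end{equation*}
which is exactly the first term of~\eqref{eq:e_qsearch}. The probability of entering the quantum fallback is precisely $(1-p)^{N_{\text{samples}}}$, so the additional expected cost contributed by the fallback is that probability times the cost of one invocation of the Grover routine.

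To bound the cost of the quantum fallback I would reproduce the BBHT analysis of unknown-$t$ search: grow the Grover iteration count geometrically by a factor of $6/5$ until a marked item is hit, yielding an expected $F(|L|,t)$ Grover iterations per full attempt, split into the $t < |L|/4$ and $t \geq |L|/4$ regimes of~\eqref{eq:flt}. Any individual attempt can, however, overshoot its internal iteration budget; capping each attempt at $\alpha\sqrt{|L|}$ iterations and restarting on failure gives a geometric series in the number of restarts, whose sum contributes the correction factor $\bigl(1 + 1/(1 - F(|L|,t)/(\alpha\sqrt{|L|}))\bigr)$ appearing in~\eqref{eq:QGrover_ubound_methodology}. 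Converting Grover oracle calls into queries to $g$ multiplies by $c_q$, completing the second summand.

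Finally, the $t=0$ case must be argued separately because one has to certify absence of marked items rather than locate one. Here I would run BBHT $\lceil\log_3(1/\epsilon)\rceil$ times with iteration budget $\alpha\sqrt{|L|}$ per run, using that each run outputs a spurious ``found'' signal with probability at most $1/3$, so that $k$ independent runs drive the false-positive probability below $3^{-k}$; $N_{\text{samples}}$ classical queries plus $\alpha c_q \lceil\log_3(1/\epsilon)\rceil \sqrt{|L|}$ quantum queries then gives~\eqref{eq:worst_qsearch}. The main obstacle will be pinning down the precise constants $\alpha = 9.2$ and the exact closed form of $F(|L|,t)$: this requires a careful quantitative reproduction of the BBHT amplitude-amplification bounds together with the geometric-series bookkeeping for restarts, rather than any new conceptual ingredient.
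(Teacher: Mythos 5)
The paper you are working from does not actually prove this lemma: it is imported verbatim as Lemma 4 of the companion work \cite{ourotherpaper}, so there is no in-paper proof to compare against. That said, your reconstruction matches the algorithm the formulas are clearly describing: the first summand of~\eqref{eq:e_qsearch} is exactly the expected number of i.i.d.\ classical draws until the first success, truncated at $N_{\text{samples}}$, and the second summand is the probability $(1-t/|L|)^{N_{\text{samples}}}$ of falling through to the quantum stage times the cost of a capped-and-restarted BBHT search, with the factor $\bigl(1 + 1/(1 - F/(\alpha\sqrt{|L|}))\bigr)$ arising from the geometric series over restarts and $c_q$ from implementing the oracle. Your tail-sum computation of the classical term is correct, and the overall decomposition is the right one.

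One small correction: in the $t=0$ (and more generally the ``certify absence'') branch, the relevant failure mode is a \emph{false negative} when $t>0$, not a spurious ``found'' signal --- any candidate returned by Grover is verified with a classical query to $g$, so false positives cannot occur. The $\lceil\log_3(1/\epsilon)\rceil$ repetitions are there so that, when a marked item does exist, at least one capped run finds it with probability $\geq 1-\epsilon$; when $t=0$ all repetitions necessarily run to their full budget of $\alpha\sqrt{|L|}$ iterations, which is what makes~\eqref{eq:worst_qsearch} the deterministic worst case. As you note, pinning down $\alpha = 9.2$ and the exact form of $F(|L|,t)$ requires reproducing the quantitative BBHT bookkeeping, which is precisely the content of the cited companion paper rather than anything provable from what is stated here.
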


\paragraph{Worst-case complexity of Grover search} Using a modified version of \textbf{QSearch} (which we call \ZQ, described in~\cite{ourotherpaper} and based on the algorithm in~\cite{zalka1999grover}), we can obtain an algorithm with better complexity in the case of no marked items.
\begin{lemma}[worst-case complexity of \ZQ, {[Lemma 5,~\cite{ourotherpaper}]}]
Let $L$ be a list of items, $g:L \rightarrow \{0,1\}$ a Boolean function and $\epsilon > 0$, and write $c_q$ for the number of queries to $g$ required to implement the oracle $\mO_g \ket{x}\ket{0} = \ket{x}\ket{g(x)}$. Then, with probability of failure at most $\epsilon$, \ZQ requires at most
\begin{equation}\label{eq:qsearch_max}
    W_{\textbf{QSearch}_{\text{Zalka}}}(|L|,\epsilon) := c_q \left(5\ceil*{\frac{\ln (1/\epsilon)}{2\ln(4/3)}} + \pi \sqrt{|L|} \sqrt{\ceil*{\frac{\ln (1/\epsilon)}{2\ln(4/3)}}} \right)\,
\end{equation}
queries to $g$ to find a marked item of $L$, or otherwise to report that there is none. 
\end{lemma}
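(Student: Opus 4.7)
The plan is to combine a Grover-style quantum search with a $k$-fold amplitude amplification and an oracle-based verification step, choosing $k$ large enough to push the worst-case failure probability below $\epsilon$ while keeping the leading quantum cost at the Brassard--Hoyer--Mosca--Tapp level of $\sqrt{|L|\cdot k}$ rather than the naive $k\sqrt{|L|}$.

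First I would set $k := \ceil*{\ln(1/\epsilon)/(2\ln(4/3))}$, the smallest integer satisfying $(3/4)^{2k} \leq \epsilon$. The algorithm runs the Zalka-style amplified Grover procedure whose inner step (a short Grover rotation of fixed length) succeeds with probability at least a fixed constant whenever at least one marked item exists, so that the failure probability for the full $k$-round amplified procedure is at most $(3/4)^{2k}\leq\epsilon$. After the quantum step returns a candidate, a bounded number of queries (at most $5$ per round, accounting for both the final checks inside the Grover subroutine and the explicit verification via $\mO_g$) is spent confirming whether the candidate is marked. If it is, the algorithm outputs it; otherwise, if no round ever produces a marked candidate, the algorithm reports that no marked item exists.

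For the query complexity, the verification overhead contributes the $5k$ term directly. The leading Grover term would naively be $\pi k\sqrt{|L|}$ if we simply repeated the Grover search $k$ times independently, but by wrapping the $k$ rounds inside a single coherent amplitude-amplification scheme in the style of~\cite{zalka1999grover} the scaling improves to $\pi\sqrt{|L|}\sqrt{k}$, which is the key technical device. After multiplying both contributions by $c_q$ (the per-query cost of implementing $\mO_g$), we recover the bound~\eqref{eq:qsearch_max}. In the no-marked-item case, the verification step prevents false positives, so the algorithm's failure probability is zero, and the worst-case cost is still bounded by the same expression since at most $k$ rounds are ever executed.

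The main obstacle will be pinning down the exact constants $5$, $\pi$, and $2\ln(4/3)$, since each depends sensitively on the precise choice of Grover iteration counts, the per-round success probability of the inner routine, and the precise cost of the verification step. Since this lemma is stated as a restatement of Lemma 5 in~\cite{ourotherpaper}, the cleanest route is to invoke that analysis directly; rederiving it from scratch requires carefully tracking every oracle call through the nested Grover and amplification stages, together with the exact-Grover iteration counts from Zalka's original analysis, and verifying that ceilings and rounding do not spoil the claimed constants.
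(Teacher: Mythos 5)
The paper itself gives no proof of this lemma: it is imported verbatim as Lemma~5 of~\cite{ourotherpaper}, so there is nothing in this manuscript to check your derivation against, and your final recommendation --- invoke the companion paper's analysis directly --- is in fact exactly what the paper does. That said, your reconstruction of the underlying mechanism is sound in its essentials: the reason the bound scales as $\sqrt{|L|}\sqrt{k}$ with $k=\ceil*{\ln(1/\epsilon)/(2\ln(4/3))}$ rather than $k\sqrt{|L|}$ is precisely the Zalka-style observation that driving the failure probability of quantum search down to $\epsilon$ costs $O(\sqrt{|L|\log(1/\epsilon)})$ queries when the error reduction is built into a single longer amplitude-amplification procedure instead of $\log(1/\epsilon)$ independent Grover repetitions, and your choice of $k$ as the least integer with $(3/4)^{2k}\le\epsilon$ is consistent with the form of the ceiling in the statement. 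The verification step via $\mO_g$ correctly explains why the no-marked-item case has zero failure probability and why the worst-case cost is attained there. The only caveat is that your attributions of the specific constants --- the $5$ per round, the $\pi$ in front of $\sqrt{|L|}$, and the base $4/3$ --- are educated guesses rather than derivations; you acknowledge this, and since the lemma is a black-box import here, that is an acceptable place to stop, but a self-contained proof would require reproducing the iteration-count bookkeeping from~\cite{ourotherpaper} and~\cite{zalka1999grover} that you have only gestured at.
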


\paragraph{Quantum maximum-finding} For maximum finding, we have the following result from~\cite{ourotherpaper}.
\begin{lemma}[Expected complexity of \textbf{QMax}, {[Corollary 1,~\cite{ourotherpaper}]}]
Let $L$ be a list of items of length $|L|$ and $R : L \rightarrow \mathbb{R}$ a function that assigns a value to each item. Let $f_i$ be the marking function
\[
    f_i(j) = 
    \begin{cases}
        1 &\text{if} \quad  R(j) > R(i) \\
        0 &\text{otherwise} \, .
    \end{cases}
\]
Then the expected number of queries to $f_i$ (for any $i$) required for \textbf{QMax} to find the maximum of $L$ with success probability at least $1-\epsilon$ is $\ceil{\log_3(1/\epsilon)}3 E_{\qmi}(|L|)$, where 
\begin{equation}
    E_{\qmi}(|L|) \leq c_q \sum_{t=1}^{|L|-1} \frac{F(|L|,t)}{t+1} \, ,
    \label{eq:EQMax-inf}
\end{equation}
with $F(|L|,t)$ given by \eqref{eq:flt}, and where $c_q$ is the number of queries to $f_i$ required to implement oracle access to $f_i$ (which we assume to be the same for all $i$).
\end{lemma}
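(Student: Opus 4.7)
The algorithm I would analyse is the standard Dürr–Høyer quantum maximum-finding procedure: sample an initial threshold index $y \in L$ uniformly at random; then repeatedly invoke $\textbf{QSearch}$ with marking function $f_y$ to obtain some $y'$ with $R(y') > R(y)$, updating $y \gets y'$, until the search reports that no larger element exists. Assuming that each inner call to $\textbf{QSearch}$ returns a uniformly random marked item (a property of the underlying BBHT subroutine), the sequence of threshold values $v_0 = R(y_0), v_1, v_2, \dots$ forms a Markov chain on $L$ where $v_0$ is uniform and, given $v_i$, $v_{i+1}$ is uniform on $\{x \in L : R(x) > v_i\}$.

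The first key step is a purely probabilistic claim: the element of rank $r$ from the top (that is, the element with exactly $r-1$ strictly larger entries in $L$) appears as some $v_i$ with probability exactly $1/r$. I would prove this via the exchangeability argument showing that $(v_0, v_1, \dots)$ has the same distribution as the sequence of left-to-right maxima of a uniformly random permutation of $L$. For any fixed element of rank $r$ from the top, the probability that it is a left-to-right maximum equals the probability that it precedes all $r-1$ strictly larger elements in the permutation, which by symmetry is $1/r$.

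The second step combines this with the expected-cost bound of Lemma~\ref{lem:QSearch} for $\textbf{QSearch}$ on a list of length $|L|$ with $t \geq 1$ marked items, namely $c_q F(|L|,t)$, up to the additive $N_{\text{samples}}$ contribution which we set to zero. By linearity of expectation, the expected number of queries across all inner searches that actually return a strictly larger element is
\[
\sum_{r=2}^{|L|} \frac{1}{r} \cdot c_q F(|L|, r-1) \;=\; c_q \sum_{t=1}^{|L|-1} \frac{F(|L|, t)}{t+1},
\]
which is the claimed bound on $E_{\qmi}(|L|)$. The single terminating call at $t=0$ (which verifies that no larger element remains) is not performed explicitly; instead, I would run the algorithm with a query budget of $3 E_{\qmi}(|L|)$, so that Markov's inequality forces the current threshold to coincide with the true maximum with probability at least $2/3$ when the budget is exhausted.

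Finally, to reach failure probability at most $\epsilon$, I would run $\ceil*{\log_3(1/\epsilon)}$ independent copies of this truncated procedure with fresh initial thresholds and return the index whose recorded value is largest. The probability that every copy fails to reach the true maximum is at most $3^{-\ceil*{\log_3(1/\epsilon)}} \leq \epsilon$, and the total expected query count is $\ceil*{\log_3(1/\epsilon)} \cdot 3 E_{\qmi}(|L|)$, matching the statement. The main obstacle is the exchangeability identification with random-permutation left-to-right maxima, which gives the clean $1/r$ visit probabilities; once that is secured, the rest is a bookkeeping exercise combining Lemma~\ref{lem:QSearch} with Markov's inequality and independent repetition.
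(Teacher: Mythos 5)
This lemma is imported from the companion paper [Corollary 1, \cite{ourotherpaper}] and is not proved here, but your reconstruction is the canonical D\"urr--H{\o}yer analysis that the stated bound comes from: the record-values/left-to-right-maxima argument giving visit probability $1/r$ for the rank-$r$ element, linearity of expectation against the per-threshold cost $c_q F(|L|,t)$ of the non-terminating search (which is why $F(|L|,t)$ rather than the full $Q_{\text{Grover}}(|L|,t)$ appears, consistent with the $\infty$ subscript in $E_{\qmi}$), and then Markov's inequality with budget $3E_{\qmi}(|L|)$ plus $\ceil{\log_3(1/\epsilon)}$ independent repetitions. This matches the stated formula exactly and is correct.
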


\ 

\noindent Using these bounds, we proceed to obtain bounds on the expected complexities of the main quantum subroutines used by our quantum community detection algorithms -- \textbf{VertexFind} and \textbf{FindFirst}. Afterwards, in Section~\ref{sec:simulation_details} we describe precisely how we simulate our quantum community-detection algorithms, including what we implement classically and what information we gather along the way, as well as what accuracy and hyperparameter settings we use.

\subsubsection{Expected complexity of \textbf{VertexFind}}\label{sec:expected_vertexfind}

To start with, we bound the expected complexity of the \textbf{VertexFind} algorithm when it is run on a list $L$ containing $|L|$ vertices, $t$ of which are good vertices, and when it is required to fail with probability at most $\zeta$. Following the analysis in the proof of Lemma~\ref{lem:vertex_find}, we run \textbf{QSearch}($L$,$N_{\text{samples}}$,$\epsilon$), giving it access to a subroutine \ZQ($J$,$\epsilon'$), where $J$ will be a list of length at most $\delta_{\max}$, and $\epsilon$ and $\epsilon'$ are parameters that are determined by $\zeta$ and (in the case of $\epsilon'$) the worst-case complexity of the outer \textbf{QSearch}($L$,$N_{\text{samples}}$,$\epsilon$) routine. Note that, in order to implement the oracle required for Grover search, the subroutine \ZQ($J$,$\epsilon'$) and its inverse will each be run once per query. 

On a list $L$ of size $|L|$ and with failure probability at most $\epsilon$, the \emph{outer} \textbf{QSearch} routine that we use requires in the worst case (i.e. when $t=0$) at most $W_{\textbf{QSearch}}(|L|,N_{\text{samples}},\epsilon)$ queries to its oracle/subroutine (see Eq.~\eqref{eq:worst_qsearch}). When there are $t > 0$ marked items, it requires an expected $E_{\textbf{QSearch}}(|L|,t,N_{\text{samples}},\epsilon)$ queries (see Eq.~\eqref{eq:e_qsearch}). The \emph{inner} \ZQ routine is slightly different, and it requires in the worst case at most $W_{\textbf{QSearch}_{\text{Zalka}}}(|J|,\epsilon')$ queries on a list $J$ of size $|J|$ and with failure probability at most $\epsilon'$ (see Eq. \eqref{eq:qsearch_max}).

Finally, we need to set the failure probabilities appropriately to align them with the overall failure probability $\zeta$ for \textbf{VertexFind}. From Lemma~\ref{lem:vertex_find} we find that to achieve a success probability $\geq 1-\zeta$, we can set $\epsilon = \zeta/2$ and $\epsilon' = \frac{\zeta}{2W_{\textbf{QSearch}}(|L|,N_{\text{samples}},\epsilon)}$. Putting everything together, the expected complexity of the entire \textbf{VertexFind} algorithm will be at most
\begin{eqnarray}
\hspace{-2.3cm}
    &E_{\textbf{VertexFind}}(|L|,t,N_{\text{samples}},\zeta) \rule{10.5cm}{0cm} \nonumber \\
    &= E_{\textbf{QSearch}}(|L|,t,N_{\text{samples}},\zeta/2)  \cdot 2W_{\textbf{QSearch}_{\text{Zalka}}}\left(\delta_{\max},\frac{\zeta}{2W_{\textbf{QSearch}}(|L|,N_{\text{samples}},\zeta/2)}\right) \nonumber \label{eq:e_vertexfind} \\
    % &=& 2.94 E_{\textbf{QSearch}}(|L|,t,\zeta/2) \cdot \sqrt{\delta_{\max}\ln\left(\frac{2W_{\textbf{QSearch}}(|L|,\zeta/2)}{\zeta}\right)}\,
    &\leq E_{\textbf{QSearch}}(|L|,t,N_{\text{samples}},\zeta/2)
    \cdot 2c_q \Bigg[5 \left\lceil 
    {\ln\left( 
        \frac{2W_{\textbf{QSearch}}(|L|,N_{\text{samples}},\zeta/2)}{\zeta} 
    \right)}/{(2 \ln (4/3))} \right\rceil   \nonumber\\
    &\phantom{=}
    \phantom{E_{\textbf{QSearch}}(|L|,t,\zeta/2)\cdot\Bigg(c } 
    + \pi \sqrt{\delta_{\max}} \sqrt{ \ceil*{ {\ln\left(\frac{ 2W_{\textbf{QSearch}}(|L|,N_{\text{samples}},\zeta/2)}{\zeta} \right)}\big/{(2 \ln(4/3))} } }\Bigg]\,.
\end{eqnarray}
where the expression for $E_{\textbf{QSearch}}$ can be found in Eq. \eqref{eq:e_qsearch} and the expression for $W_{\textbf{QSearch}}$ in Eq. \eqref{eq:worst_qsearch}.

\subsubsection{Expected complexity of \textbf{VertexFindSG}}\label{sec:expected_vertexfind}
Next, we bound the expected complexity of the sparse-graph version of \textbf{VertexFind} (described in Section~\ref{sec:sparse_graph_versions}) when it is run on a list $L$ containing $|L|$ vertices, $t$ of which are good, and when it is required to fail with probability at most $\zeta$. This time the inner \ZQ routine is eliminated, and hence the complexity of this algorithm depends only on a single application of \textbf{QSearch}, given access to a classical sub-routine that makes $\delta_{\max}$ queries per call. Similarly to the above case, this classical algorithm must be run twice in order to implement the oracle required for the \textbf{QSearch} routine. The expected complexity of the entire \textbf{VertexFind} algorithm will now be at most
\begin{eqnarray}
\hspace{-1cm}
    E_{\textbf{VertexFindSG}}(|L|,t,N_{\text{samples}},\zeta) &=& E_{\textbf{QSearch}}(|L|,t,N_{\text{samples}},\zeta) \cdot 2 \delta_{\max}\,. \label{eq:e_vertexfindSG}
\end{eqnarray}

\subsubsection{Expected complexity of \textbf{FindFirst} and \textbf{FindFirstSG}}\label{sec:findfirst_worst_case}
The \textbf{FindFirst} routine makes a number of repeated calls to \textbf{VertexFind}, whose expected complexity on a list $L$ with $t$ marked items, and failure probability at most $\zeta$, is given by $E_{\textbf{VertexFind}}(|L|,t,N_{\text{samples}},\zeta)$ from Eq. \eqref{eq:e_vertexfind}. In order to choose the correct setting of $\zeta$ to ensure that \textbf{FindFirst} fails with probability at most $\mu$, we need to know the maximum number of times that \textbf{VertexFind} might be called by \textbf{FindFirst}. The worst case is when there is a single marked item lying at the very end of the list $L$. In this case, \textbf{FindFirst} will make $\lceil\log_2 |L| \rceil$ calls to \textbf{VertexFind}, on sets of increasing size, followed by a binary search that will require another $\lceil\log_2 \frac{|L|}{2} \rceil$ calls. In total \textbf{FindFirst} will call \textbf{VertexFind} at most $2\lceil\log_2 |L| \rceil - 1$ times, and hence to ensure that it fails with probability $\leq \mu$ we must ensure that every call to \textbf{VertexFind} is made with failure probability at most $\zeta \leq \frac{\mu}{2\lceil\log_2 |L| \rceil}$. Precisely the same analysis holds for \textbf{FindFirstSG}, which will make calls instead to \textbf{VertexFindSG}, again with failure probability at most $\zeta \leq \frac{\mu}{2\lceil\log_2 |L| \rceil}$.

\subsection{Simulation details}
\label{sec:simulation_details}
In order to simulate any of our quantum community detection algorithms, we run a corresponding classical version of the algorithm of interest, and collect the information necessary to estimate how long the quantum algorithm would have taken in expectation. The classical algorithms we use are all based on an implementation of Louvain in Python by Aynaud~\cite{python_louvain}.

During Phase 1 of (any version of) the Louvain algorithm (the part that is most time consuming and also benefits from a quantum speedup), we obtain our estimates for the complexities of our quantum algorithms by applying the upper bounds on the expected complexities of the \textbf{QSearch} and \textbf{VertexFind} algorithms given in Section~\ref{sec:complexity_bounds} above. Our upper bounds depend on four parameters: the size $|L|$ of the list $L$ to which the subroutines are applied, the desired failure probability $\zeta$ of the algorithm, the number $t$ of marked vertices in the list, and the choice of hyper-parameter $N_{\text{samples}}$. In the sections below, we discuss how to obtain values for these parameters. In particular:
\begin{itemize}
    \item The sizes $|L|$ of the lists will be available during the course of the simulation.
    \item The number $t$ of marked vertices, however, is not immediately accessible -- we discuss below in Section~\ref{sec:num_of_good_vertices} how to obtain this value during the execution of the classical Louvain algorithm.
    \item The failure probabilities $\zeta$ for each sub-routine are determined by the overall acceptable failure probability for the entire algorithm, which requires knowing how many times each subroutine will be called. Of course, this information is not known ahead of time, and we discuss this in Section~\ref{sec:num_moves}.
    \item $N_{\text{samples}}$ is a hyper-parameter that changes the efficiency of the algorithms. Its optimal setting is discussed in Section~\ref{sec:hyper_parameters}
\end{itemize}

Finally, in order to be able to estimate the number of queries used by the different quantum algorithms introduced in Section~\ref{sec:quantum}, the corresponding implementations of the classical Louvain algorithm differ in places to that of Aynaud's. The differences in each case concern how precisely we find a vertex to move to a new community, and we discuss our implementations in Section~\ref{sec:finding_marked_vertex}.

\subsubsection{Computing the number of marked items}
\label{sec:num_of_good_vertices}
Whenever we want to know how many queries a call to \textbf{QSearch} uses, we need to know the number of marked items. Below we describe our methods for obtaining this value while we simulate the quantum algorithm.

For the Louvain algorithm, the number of marked items can be kept track of exactly with the help of an additional data structure in the form of a set. We call this set $V_{\text{marked}} = \{u\in V | \exists \alpha \text{ such that } \Delta^{\alpha}_u > 0\}$, the set of marked vertices, i.e. those that have a neighbouring community they can move to in order to increase the modularity. With this set we can easily compute $t = |V_{\text{marked}}|$. The simplest way to obtain $V_{\text{\text{marked}}}$ is by performing an exhaustive search over the entire list of vertices and explicitly checking for every vertex if it is marked or not (i.e. has a good move). This exhaustive search should be repeated after every move, since a move will change which vertices are marked and which are not. An exhaustive search after every move is tractable for small graphs, but quickly becomes intractable as the number of vertices increases.  

Instead of recreating the list $V_\text{marked}$ from scratch after every move, it can be initialized at the start and then updated after every move. This can be done by searching through the set of vertices that a particular move could possibly affect. More precisely, let $u$ be the vertex that was moved from community $\alpha$ to community $\beta$ in the previous step. The only vertices that can be flipped from marked to non-marked and vise versa are those that belong to either community $\alpha$ or $\beta$ or are neighbours of these communities. More specifically, the vertices contained in $V_{\text{changeable}} = N_{C_\alpha} \cup N_{C_\beta}$, where $N_{C_\alpha} = \{w \in N_v| v \in C_{\alpha}\}$ and likewise for $N_{C_\beta}$. This set of vertices is relatively small compared to the set of all nodes $|V| >> |V_{\text{changeable}}|$, which gives a fast update procedure for $V_\text{marked}$. After every move we construct the set $V_{\text{changeable}}$ as described above. For all vertices in $V_{\text{changeable}}$ we calculate $\Delta_u$. If $\Delta_u >0$, and if $u$ is not in $V_{\text{marked}}$, we add $u$ to $V_{\text{marked}}$. Similarly, if $\Delta_u \leq 0$, and $u$ is in $V_{\text{marked}}$, we remove $u$ from $V_\text{marked}$.

The added benefit of creating and maintaining this list, besides explicitly knowing $t$, is that we can directly sample from it to get a marked vertex. This in fact speeds up the search process of Louvain tremendously\footnote{One could think that this would be a faster method for classical Louvain as search is done in a constant number of steps. It turns out that this is not the case, as updating the data structure still requires a lot of steps.}. At every move we can now use $t=|V_\text{marked}|$ as an input to the bounds for the number of queries that our quantum algorithms would have made. This method is especially fast when the input graphs are sparse.

When searching over edges, as is done in \textbf{EdgeQLouvain}, we use the same procedure, but for marked edges rather than vertices.

It should be noted that, as discussed in~\cite{ourotherpaper}, there are also sampling methods to estimate the number of marked items. We found that in the particular case of the Louvain algorithm, the sampling methods actually were not more efficient that simply keeping track of the aforementioned data structure, and therefore we have not included our results obtained through sampling in this paper. We do find that the number of function calls obtained using either the exact method described above or the sampling method described in~\cite{ourotherpaper} give total query counts that are quantitatively the same.

\subsubsection{Classical simulation of the various Louvain algorithms}\label{sec:finding_marked_vertex}

The program that simulates \textbf{QLouvain} runs the original Louvain with the addition of a subroutine to keep track of the number of queries. This subroutine is called whenever the first marked vertex is found. It estimates the number of function calls that \textbf{QLouvain} would have made to find this vertex. To do so, it simulates the behaviour of \textbf{FindFirst} ,as explained in \ref{sec:finding_marked_vertex}, by searching over lists with varying sizes. For every list $L$ with length $|L|$ smaller than a certain $|L|_{switch}$, a hyper parameter described in Section~\ref{sec:hyper_parameters}, \textbf{FindFirst} uses classical search to find the marked item, else it uses \textbf{VertexFind}. The parameters that are needed to calculate the number queries that \textbf{VertexFind} makes are set as follows: $|L|$ is the size of the respective list, $\zeta$ (the precision) is set as described in Section~\ref{sec:num_moves}, and $t$ is calculated directly from the list by the use of an exhaustive search. Adding these estimates together gives an estimate  of the queries that \textbf{FindFirst} uses per move. The total number of queries is the sum of queries made at every move. During a single run we calculate the number of queries made by both \textbf{VertexFind} and \textbf{VertexFindSG}, using the same parameters, so that we get an estimate for both \textbf{QLouvain} and \textbf{QLouvainSG} in one pass.

\textbf{SimpleQLouvain} uses the subroutine \textbf{VertexFind} on the set of all vertices to find a random marked one. For this we can use our method described in Section~\ref{sec:num_of_good_vertices} to obtain a marked vertex \textit{and} an explicit calculation of $t$. After every move we compute the number of function calls that \textbf{VertexFind} would have made by using the following parameters: $|L| = |V|$ the number of vertices in the graph, $\zeta$ (the precision) is set as described in Section~\ref{sec:num_moves}. In the same simulation we also calculate upper bounds for \textbf{VertexFindSG}, using the same parameters, so that again we get an estimate of the number of function calls for both \textbf{SimpleQLouvain} and \textbf{SimpleQLouvainSG} in a single pass.

\textbf{EdgeQLouvain} uses the subroutine \textbf{Qsearch} on the set of edges to find a marked edge. Similar to \textbf{SimpleQLouvain}, this can be simulated by keeping a list of all marked edges, as described in Section~\ref{sec:num_of_good_vertices}. After every move we estimate the number of queries \textbf{Qsearch} would have made using the parameters: $L = 2 |E|$ ($|L| = 2 |V| \delta_{\text{avg}}$), because the search is over directed edges as described in Section~\ref{sec:EdgeLouvain}, $\zeta$ (the precision) is set as described in \ref{sec:num_moves}, and $t$ is obtained using the methods described in Section~\ref{sec:num_of_good_vertices}.

Finally, once we have identified a good vertex $u$ found by either \textbf{VertexFind}, \textbf{FindFirst} or \textbf{QSearch} (in case of \textbf{EdgeQLouvain}), we have to determine to what neighboring community $u$ should be moved by computing $\argmax_{\alpha \in \zeta_u}(\Delta_u^{\alpha})$. We can find the maximum either classically, or with the quantum maximum-finding subroutine of Lemma~\ref{lem:quantum_max}. Since we keep track of the neighbouring communities in our data-structure, we known $\delta_u$, and therefore we can decide beforehand whether, in expectation, it would be faster to either perform classical maximum-finding, or quantum maximum-finding. For the sparse graphs simulated in the results section, it turns out that classical maximum-finding always uses fewer queries, and quantum maximum-finding is therefore never used.

\subsubsection{Setting the failure probabilities of the subroutines}
\label{sec:num_moves}

One piece of information that we will not be able to obtain without running the entire algorithm first is the total number of moves that the algorithm will make, which determines the maximum number of times each subroutine might be called, which in turn determines the maximum acceptable failure probability for those subroutines. Of course, this is not a problem unique to our use-case, but to any speedup of such a heuristic algorithm. One option is to use the trivial upper bound from Appendix~\ref{app:louvain_total_moves} of $O(\poly(n))$ total moves, however this is almost certainly a huge over-estimate, and in fact in practice it is observed that the Louvain algorithm makes only about $O(n\log n)$ moves before stopping~\cite{lancichinetti2009community}. In Appendix~\ref{sec:number_moves} we present data that further justifies using such an estimated bound on the number of moves.

The only practical way to deal with this issue is to decide beforehand on an upper bound $M$ to the number of moves $T$, perhaps based on empirical observations, and then use this to set the failure probabilities of the various subroutines. If the actual number of moves goes beyond $M$, then we conclude that we can no longer guarantee that the algorithm ran successfully. However, since the algorithm is anyway a heuristic, and the success of the algorithm is determined on more of a qualitative level than a quantitative one, some small number of failures (say, in choosing the `wrong' vertex to move, or not finding the true maximum amongst all possible moves for a single vertex) may be entirely acceptable. In Section~\ref{sec:numerics} we give empirically determined estimates for $M$ based on the size of the input graph, and then use these to derive sensible failure probabilities for the main quantum subroutines used for each algorithm. These then determine the failure probabilities for the other subroutines, as described in Section~\ref{sec:complexity_bounds}.

\subsubsection{Hyper-parameter settings}
\label{sec:hyper_parameters}

\paragraph{Optimal number of $N_{\text{samples}}$} As discussed, the number of classical samples used in the implementation of \textbf{QSearch} is a hyper-parameter that can be tuned to optimize the run-time depending on the size of the list $L$, and what fraction $f$ of the items in $L$ are marked. Classical sampling requires fewer queries than Grover search does when a large fraction of the items is marked, whereas it is more efficient to not use classical sampling at all when a small number of them is marked. In~\cite{ourotherpaper}, it was found that (with $f_0$ being the value of the fraction for which the expected number of queries for Grover search and classical sampling are equal)
\begin{itemize}
    \item For $|L| \leq 260$, classical sampling always requires fewer queries.
    \item For $|L| \geq 260$, the point $1/f_0$ for which Grover search becomes more efficient than classical sampling is plotted as a function of $|L|$ in Fig.~\ref{fig:classical_vs_Grover_1f}.
    \begin{figure}[htb]
        \centering
        \centering
        \includegraphics[scale=0.65]{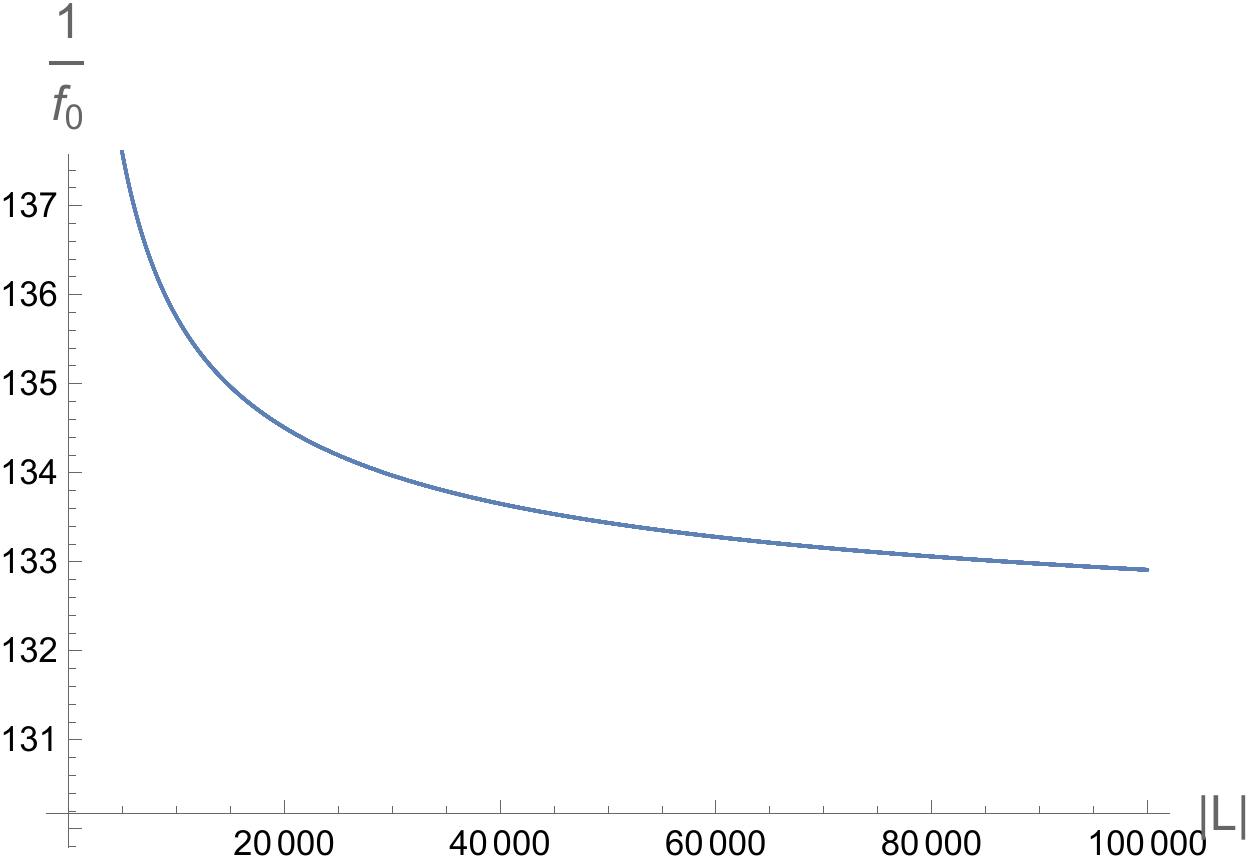}
        \caption{The value of $1/f_0$ as a function of the list length $|L|$ that marks the point beyond which, in expectation, Grover search requires fewer queries than sampling classically does. In the limit $|L| \rightarrow \infty$, there is a horizontal asymptote at $1/f_0 \rightarrow 131.665$.}
        \label{fig:classical_vs_Grover_1f}
    \end{figure}
\end{itemize}
For (any variant of) the Louvain algorithm, we know from numerical results that the number of good vertices roughly decreases monotonically during a single iteration of the first phase of the algorithm. We can use this knowledge to set a criterion for $1/f$ beyond which it becomes more efficient to skip the classical sampling step all together and set $N_{\text{samples}} = 0$. Since the number of marked items decreases monotonically only approximately, we allow ourselves some wiggle room. As a consequence, in our simulations, we have chosen to use the following settings\footnote{The plots in the results section are insensitive to fine-tuning $N_{\text{samples}}$ beyond the point that we have done.}
\begin{itemize}
    \item We set $N_{\text{samples}} = 130$ at the start of phase 1. (An extra analyses of \cite{ourotherpaper} suggests setting $N_{\text{samples}}$ slightly lower than $\frac{1}{f_0}$ )
    \item The moment we draw 130  consecutive samples without finding a marked item (implying $1/f \gtrapprox 130$), we set $N_{\text{samples}} = 0$ and use only Grover search from this point on.
\end{itemize}

\paragraph{Optimizing the number of classical samples in \textbf{FindFirst}}
\textbf{FindFirst} searches through sets of varying sizes to detect if they contain a marked element. If the size of the set is sufficiently small it is more efficient to classically search through the set (from start to finish) than to use the quantum \textbf{VertexFind} subroutine. This introduces another hyper-parameter $|L|_{\text{switch}}$ that determines when to switch from using classical search to \textbf{VertexFind}, depending on the list size. In our numerical results we found that classical search was in fact always faster than using \textbf{VertexFind} for the graph sizes we studied. This made it impossible to choose a good setting for this hyper-parameter.

To allow for a comparison to be made between classically traversing the list and using \textbf{VertexFind} within \textbf{FindFirst}, we choose to set $|L|_{\text{switch}}$ to a finite number: $|L|_{\text{switch}} = 512$. I.e., sets with less than $512$ vertices were searched through classically and for sets with more items than $512$ vertices we used \textbf{VertexFind}. The inner loop of \textbf{QLouvain}, searching through the neighbouring communities of a node, was always performed using \textbf{Qsearch} even when we searched through the set classically.

\section{Numerical results}
\label{sec:numerics}
For all numerical results in this section we assume that we require the failure probability of the entire quantum algorithm to be a small constant ($10^{-5}$). We assume that the algorithms make no more than $n \log(n) =: M$ moves (see Section~\ref{sec:num_moves} and Appendix~\ref{sec:number_moves} for justifications for choosing this number of moves), and that the algorithm fails if any subroutine fails within any of these moves. Hence, we set the failure probabilities $\epsilon$ of the main (i.e. outermost) quantum subroutines for each algorithm to
\begin{align}
    \left( 10^{-5}\right)^{\frac{1}{M}} \geq  \frac{10^{-5}}{M}=\frac{10^{-5}}{n \log(n)}=:\epsilon.
\end{align}
Finally, we always use the additional data-structure introduced in Sec.~\ref{sec:num_of_good_vertices} to keep track of the number of good vertices exactly.

\subsection{Artificial data-sets}
\label{sec:query_quantum_classical}
Since we are interested in the run0time scaling as well as the absolute query counts of the classical and quantum Louvain algorithms, we introduce in this section two methods for generating benchmark networks of arbitrary size. One method relies on the well-known LFR method \cite{Lancichinetti2008}, in which node degrees and community sizes are sampled according to power law distributions\footnote{Distributions not uncommon for real-world networks.} with exponents $\tau_1$ and $\tau_2$, respectively. The other input parameters are the total number of nodes $n$, the average degree $\langle d \rangle$ (or alternatively one can set the minimum degree $d_\text{min}$), the maximum degree $d_\text{max}$ (by default set to $n$), minimum and maximum community sizes $S_\text{min}$ and $S_\text{max}$ (by default set to $d_\text{min}$ and $d_\text{max}$, respectively) and a mixing parameter $\mu$ which specifies the fraction of neighbours of a node that do not belong to the node's own community. We will refer to graphs of this type as `LFR-type graphs' and use the \textit{NetworkX} implementation~\cite{Hagberg2008} as a network generator.

In the second graph generation method, which we will call `FCS-type graphs', we fix the community size $S$ and adopt an algorithm similar to that described in Ref.~\cite{Traag2019}. The graph parameters are now $n$, $\langle d \rangle$ and $S$, and the edges are drawn uniformly at random from all possible edges.  A description of our algorithmic implementation (which runs in time $\mO(|E|)$) for FCS-type graph generation is given in Appendix~\ref{sec:random_graphs}. 

\begin{figure}[htb]
    \centering
    \includegraphics[width=\linewidth]{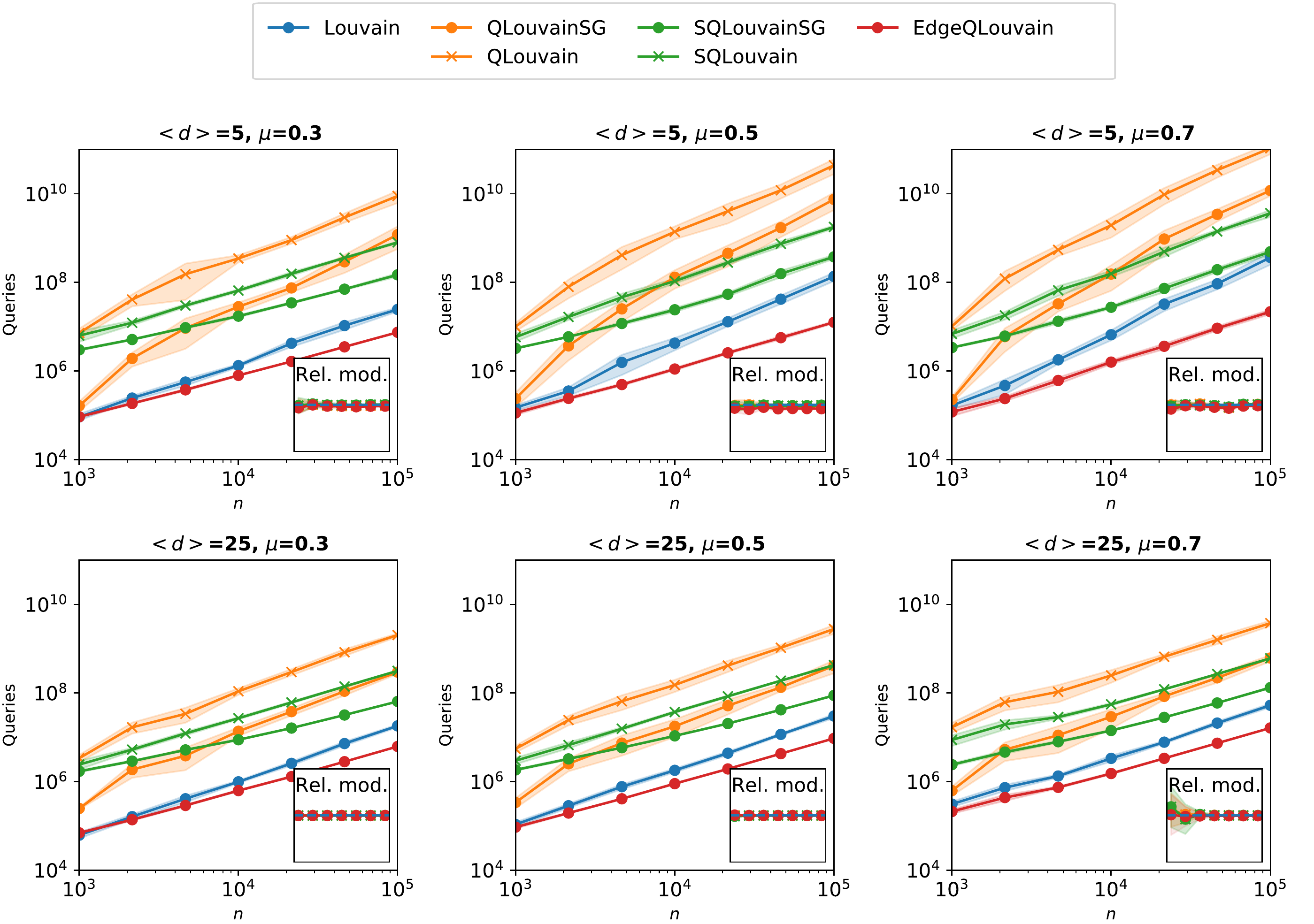}
    \caption{Numerical results for the query counts of the classical and quantum Louvain algorithms from Section~\ref{sec:quantum} on FCS-type graphs with a fixed community size $S=50$. The average degree is either $\langle d \rangle=5$ (top) or $\langle d \rangle=25$ (bottom). The horizontal axis indicates the total number of nodes $n$ and the vertical axis the number of queries made to the function $g_{\Delta}$. Each data point corresponds to the average across 10 randomly generated graphs and the shaded area represents one standard deviation. In every sub-figure the bottom-right box plots the modularity relative to the one obtained with original Louvain (indicated with the dashed blue line) as function of $n$ (logarithmic vertical axis). For the modularity, the limits of the y-axis are set at $\pm 10 \%$ relative difference to Louvain.}
    \label{fig:plots_overview_FCS}
\end{figure}

\begin{figure}[htb]
    \centering
    \includegraphics[width=\linewidth]{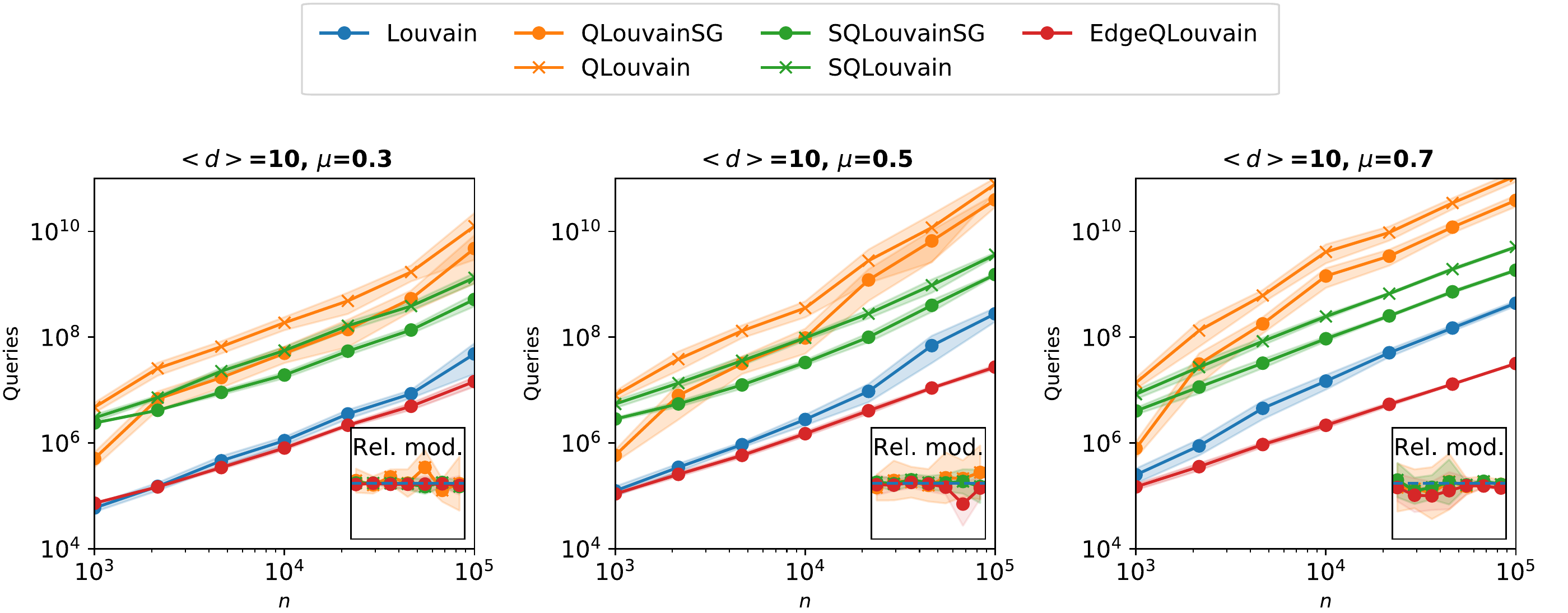}
    \caption{Numerical results for the query counts of the proposed classical and quantum Louvain algorithms on LFR-type graphs with parameters $\tau_1 = 3$, $\tau_2=2$, $\langle d\rangle=10$, $d_\text{max}=100$ and $S_\text{max}=100$. See the caption of Fig. \ref{fig:plots_overview_FCS} for a detailed description of the contents of individual plots and figures.}
    \label{fig:plots_overview_LFR}
\end{figure}

\subsubsection{Absolute query counts}
\label{sssec:absolute_query_counts}
Figs.~\ref{fig:plots_overview_FCS} and~\ref{fig:plots_overview_LFR} show the estimated average number of queries made and the modularities obtained by the classical and all of our quantum Louvain algorithms on FCS- and LFR-type networks with up to $10^5$ nodes. In terms of absolute number of queries, we find that the sparse variants of \textbf{QLouvain} and \textbf{SimpleQLouvain} generally outperform their non-sparse counterparts and that all four are outperformed by the much simpler \textbf{EdgeQLouvain} algorithm. In fact, only \textbf{EdgeQLouvain} was able to achieve an observable speed-up over classical Louvain within the limits of our data sets -- for $n>2000$ it achieves this for all studied graph types and parameter combinations. 

We also find that the quantum algorithms perform better relative to their classical counterpart when $\mu$ is large, which corresponds to graphs with relatively little community structure. An explanation for this is that for those graphs the fraction of good vertices could in general be smaller compared to graphs with high community structure, which corresponds to the regime where \textbf{QSearch} outperforms classical sampling. For the obtained modularities we found that for FCS-type graphs the relative difference in behaviour to the original Louvain algorithm is very small. The differences are more profound for the LFR-type graphs, in particular for \textbf{EdgeQLouvain}.

\subsubsection{Estimating average case polynomial speed-ups}
In Section~\ref{sec:quantum} we showed analytically that, for a large collection of graph configurations, quantum algorithms for community-detection can achieve a polynomial speed-up over the classical Louvain algorithm on which they are based. In this section we use the data of Section~\ref{sssec:absolute_query_counts} to estimate this polynomial speed-up, in expectation, for different graph configurations. Table~\ref{tab:speedup_scaling} shows the coefficients corresponding to the linear fits of log-log plots of the data for the different algorithms as shown in Figs.~\ref{fig:plots_overview_FCS} and~\ref{fig:plots_overview_LFR}. Recall that the ratio of the obtained coefficient as compared to Louvain is upper bounded by $2$ -- achieving this value would correspond to the full quadratic speed-up. For all studied graph configurations, \textbf{SimpleQLouvain} (both sparse and non-sparse) and \textbf{EdgeQLouvain} show (varying) polynomial speed-ups over Louvain. For FCS-type graphs, \textbf{SimpleLouvainSG} achieves the overall best scaling and for LFR-type graphs the best scaling is achieved by \textbf{EdgeQLouvain}. \textbf{QLouvain} almost always has a scaling that is at most comparable to Louvain, but this is mostly an artefact of the algorithm itself: for relatively small instances it predominantly uses classical routines; as $n$ increases it begins to use more Grover steps -- however, these Grover steps are still performed on relatively small lists. We expect that for much larger $n$ one will also observe an asymptotic speed-up for \textbf{QLouvain} and \textbf{QLouvainSG}, though of course these sizes of $n$ might not occur in practice.
\begin{table}[]
\centering
\begin{adjustbox}{width=1.1\textwidth}
\begin{tabular}{cl|l|l|l|l|l|l|l|}
\cline{3-8}
 & \multicolumn{1}{c|}{} & \multicolumn{6}{c|}{\cellcolor[HTML]{C0C0C0}Degree weighted poly-fit $^\text{(Estimated polynomial speed-up)}$} \\  \cline{2-8} 
\multicolumn{1}{c|}{} & \multicolumn{1}{c|}{\cellcolor[HTML]{EFEFEF}Configuration} & \cellcolor[HTML]{EFEFEF}OL & \cellcolor[HTML]{EFEFEF} QLSG& \cellcolor[HTML]{EFEFEF} QL& \cellcolor[HTML]{EFEFEF} SQLSG & \cellcolor[HTML]{EFEFEF}SQL & \cellcolor[HTML]{EFEFEF}EQL  \\ \hline 
\multicolumn{1}{|c|}{\cellcolor[HTML]{EFEFEF}}  & $\langle d \rangle = 5$, $\mu=0.3$ & \ccb{1.23}& \ccrrr{1.76 $^{(0.70)}$}& \ccr{1.45 $^{(0.85)}$}& \ccggg{0.86 $^{(1.43)}$}& \ccg{1.07 $^{(1.15)}$}& \ccgg{0.96 $^{(1.28)}$}\\ \cline{2-8}
\multicolumn{1}{|c|}{\cellcolor[HTML]{EFEFEF}}  & $\langle d \rangle = 5$, $\mu=0.5$ & \ccb{1.5}& \ccrr{2.07 $^{(0.72)}$}& \ccr{1.69 $^{(0.89)}$}& \ccggg{1.07 $^{(1.40)}$}& \ccgg{1.23 $^{(1.22)}$}& \ccggg{1.03 $^{(1.46)}$}\\ \cline{2-8}
\multicolumn{1}{|c|}{\cellcolor[HTML]{EFEFEF}}  & $\langle d \rangle = 5$, $\mu=0.7$ & \ccb{1.71}& \ccrr{2.17 $^{(0.79)}$}& \ccr{1.89 $^{(0.90)}$}& \ccggg{1.12 $^{(1.53)}$}& \ccgg{1.37 $^{(1.25)}$}& \ccggg{1.15 $^{(1.49)}$}\\ \cline{2-8}
\multicolumn{1}{|c|}{\cellcolor[HTML]{EFEFEF}}  & $\langle d \rangle = 25$, $\mu=0.3$ & \ccb{1.24}& \ccr{1.44 $^{(0.86)}$}& \ccr{1.34 $^{(0.93)}$}& \ccggg{0.80 $^{(1.55)}$}& \ccg{1.06 $^{(1.17)}$}& \ccgg{0.98 $^{(1.27)}$}\\ \cline{2-8}
\multicolumn{1}{|c|}{\cellcolor[HTML]{EFEFEF}}  & $\langle d \rangle = 25$, $\mu=0.5$ & \ccb{1.21}& \ccr{1.42 $^{(0.85)}$}& \ccr{1.29 $^{(0.94)}$}& \ccggg{0.85 $^{(1.42)}$}& \ccg{1.08 $^{(1.12)}$}& \ccgg{1.01 $^{(1.20)}$}\\ \cline{2-8}
\multicolumn{1}{|c|}{\multirow{-6}{*}{\cellcolor[HTML]{EFEFEF}\rotatebox{90}{FCS}}}   & $\langle d \rangle = 25$, $\mu=0.7$ & \ccb{1.13}& \ccrr{1.37 $^{(0.82)}$}& \ccb{1.14 $^{(0.99)}$}& \ccgg{0.87 $^{(1.30)}$}& \ccgg{0.92 $^{(1.23)}$}& \ccg{0.95 $^{(1.19)}$}\\ \hline
\multicolumn{1}{|c|}{\cellcolor[HTML]{EFEFEF}}  & $\langle d \rangle = 10$, $\mu=0.3$ & \ccb{1.43}& \ccrr{1.81 $^{(0.79)}$}& \ccr{1.62 $^{(0.88)}$}& \ccg{1.27 $^{(1.13)}$}& \ccb{1.38 $^{(1.04)}$}& \ccg{1.21 $^{(1.18)}$}\\ \cline{2-8}
\multicolumn{1}{|c|}{\cellcolor[HTML]{EFEFEF}}  & $\langle d \rangle = 10$, $\mu=0.5$ & \ccb{1.75}& \ccrr{2.31 $^{(0.76)}$}& \ccr{1.95 $^{(0.90)}$}& \ccg{1.48 $^{(1.18)}$}& \ccg{1.47 $^{(1.19)}$}& \ccgg{1.27 $^{(1.38)}$}\\ \cline{2-8}
\multicolumn{1}{|c|}{\multirow{-3}{*}{\cellcolor[HTML]{EFEFEF}\rotatebox{90}{LFR}}}& $\langle d \rangle = 10$, $\mu=0.7$ & \ccb{1.63}& \ccrr{2.03 $^{(0.80)}$}& \ccr{1.81 $^{(0.90)}$}& \ccg{1.42 $^{(1.15)}$}& \ccg{1.44 $^{(1.13)}$}& \ccgg{1.19 $^{(1.37)}$}\\  \hline
\end{tabular}
\quad
\begin{tabular}{c}
{\cellcolor[HTML]{C0C0C0} \makecell{Speed-up\\ factor}}\\
\rowcolor[HTML]{F35955} 
{$<0.71$} \\
\rowcolor[HTML]{FF8F8C} 
$0.71$ - $0.83$ \\
\rowcolor[HTML]{FFCCC9} 
$0.83$ - $0.95$ \\
\rowcolor[HTML]{DAE8FC} 
 $0.95$ - $1.05$\\
\rowcolor[HTML]{9AFF99} 
{$1.05$ - $1.20$ } \\
\rowcolor[HTML]{87DC86} 
$1.20$ - $1.40$ \\
\rowcolor[HTML]{4EBC4E} 
$>1.40$
\end{tabular}
\end{adjustbox}
    \caption{Estimated polynomial degrees of the expected number of queries for the quantum algorithms and original Louvain on the FSC and LFR-type graphs. The central number in each cell corresponds the estimated polynomial degree obtained from a weighted fit of the form $a n + b$, with the weights set at $\log{n}$, using the log-log data from Figures~\ref{fig:plots_overview_FCS} and~\ref{fig:plots_overview_LFR}. The number in the top-right corner (in parentheses) estimates the polynomial speed-up, and is defined as the ratio of the estimated polynomial degree of Louvain and the respective quantum algorithm.} 
\label{tab:speedup_scaling}
\end{table}

\subsection{Real-world data-sets}
Since analysis based on artificial networks only has limited value in predicting performance on actual real-world networks, we have also performed runs on large data-sets available at~\cite{snapnets} and~\cite{Rossi2015}. Table~\ref{tab:real_queries} shows the obtained modularities and total query count for a selection of our quantum algorithms. Both \textbf{QLouvain} and \textbf{QLouvainSG} are not considered here, as we found they were always outperformed by  \textbf{SimpleQLouvain} and \textbf{SimpleQLouvainSG} on the artificial data sets. We find that for these instances \textbf{EdgeQLouvain} is able to achieve a modest speedup over \textbf{OL}, as well as obtaining slightly better modularities on average.  Similar to the results obtained for artificial networks in Section~\ref{sec:query_quantum_classical}, \textbf{SimpleQLouvainSG} is not able to achieve a speed-up on graphs with sizes of the orders of magnitude considered. Interestingly, in all but one case \textbf{SimpleQLouvain} outperformed \textbf{SimpleQLouvainSG} even though the real data sets have a very low average degree and hence are relatively sparse. This is due to the fact that the considered networks have in fact sometimes a very large maximum degree $d_\text{max}$, which is the relevant graph parameter that determines the query complexity: both \textbf{SimpleQLouvain} and \textbf{SimpleQLouvainSG} scale with $\delta_{\text{max}}$,  however with the former having a quadratic improvement over the latter. The fact that the average degree is much lower than $d_\text{max}$ in these data sets suggests that Variable time amplitude amplification might give a significant improvement (see Appendix~\ref{app:VTAA}). Also, we found that all simulations of our quantum algorithms were not able to finish running within 5 days for the largest considered data set (IMDB). Therefore, further improvements need to be made to our simulation process to be able to study even larger problem instances.

\begin{table}[H]
\centering
\begin{adjustbox}{width=1\textwidth}
\begin{tabular}{|l|l|l|l|l|l|l|l|l|l|l|l|l|}
\hline
\rowcolor[HTML]{EFEFEF} 
 &  & &  & \multicolumn{3}{l|}{\cellcolor[HTML]{EFEFEF}Modularity} & \multicolumn{4}{l|}{\cellcolor[HTML]{EFEFEF}Total queries ($\times 10^7$)} \\ \hline
\rowcolor[HTML]{EFEFEF} 
 & Nodes & Edges & $d_\text{max}$ & OL & SQL(SG) & EQL & OL & SQL & SQLSG  & EQL\\ \hline
Academia \cite{snapnets} & 200k & 1M & 10693 & 0.6447  & 0.6456  & 0.6353 & 3.31 & 267  & 1180 & 1.08  \\ \hline
DBLP \cite{snapnets} & 317k & 1M & 343 &  0.8206  &  0.8210  & 0.8223  & 3.81 & 234 & 309 & 1.34  \\ \hline
Amazon \cite{snapnets} & 335k & 925k & 549 &  0.9262 & 0.9263  & 0.9264   & 2.09 & 195 & 118   & 1.21  \\ \hline
Youtube \cite{snapnets} & 496k & 2M & 25409 & 0.6825 &  0.678  & $-$ & 3.35 & 743 & 6487 & $-$  \\ \hline
IMDB \cite{Rossi2015} & 896K & 4M  & 1590 &  0.6872 & $-$  &  $-$ & 13.1 & $-$ & $-$ & $-$\\ \hline
%YouTube \cite{snapnets} & 1.1M & 3M & 0.7188   & $-$ & $-$ & $0.6947^*$ & 6.96 & $-$ & $-$ & $5.40^*$\\ \hline
\end{tabular}
\end{adjustbox}
\caption{Numerical results for the total amount of queries the classical and selected quantum Louvain algorithms make on real-world data-sets, averaged over five different runs.  Entries with `$-$' timed out as they took longer than 5 days to compute.}
\label{tab:real_queries}
\end{table}
\subsection{Conclusion and discussion}
\label{sec:conclusion}

In this paper we considered the framework of~\cite{ourotherpaper} for estimating the run-times of quantum algorithms that achieve modest polynomial speedups over their classical counterparts. As suggested there, in many cases, a traditional asymptotic analysis of the quantum algorithm is not informative enough to make decisions about whether, or for what input sizes, they might achieve a speedup over the best classical algorithm. In some cases, this is because a representative run-time cannot be obtained via such an analysis -- something that is particularly true for (classical and quantum) heuristic algorithms. In others, it may be because the quantum algorithm is particularly sensitive to the input on which it is run, or just that the run-time obtained via an ordinary complexity analysis is not representative of the algorithm's true run-time. 

\ 

\noindent To evaluate the usefulness of the approach we outlined in~\cite{ourotherpaper}, we applied it here to a particular use-case of practical interest: community detection in large networks. Taking as a starting point a popular classical algorithm (the Louvain algorithm), we designed several quantum algorithms, each promising to give \textit{some} speedup over the original. Using the bounds derived in~\cite{ourotherpaper}, we obtained bounds for the quantum subroutines used by these algorithms, and then estimated the complexities of each algorithm for a number of randomly generated graphs, as well as some large real-world ones. We found that the algorithms whose analytically-derived asymptotic complexities were favourable were \textit{not} the algorithms that obtained the lowest complexities in practice, nor the ones that scaled most favourably as a function of input size. This was perhaps not unsurprising, but does demonstrate that an analysis that goes beyond the usual asymptotic complexity one is necessary if one wishes to know whether a particular quantum algorithm could give a speedup for a task of practical interest, on an input representative of the ones it will receive in practice. 

Our main observation when estimating the run-times of our quantum algorithms was that there is a large overhead associated with success probability amplification of quantum sub-routines that is not present in the classical case, and that this can negate the speedup for even very large problem instances. This overhead is made more pronounced by the behaviour of Grover search when there are no marked items in the list: to verify that this is indeed the case, the quantum algorithm must perform a reasonably large number of repetitions of this Grover search sub-routine. Hence, in algorithms where we expect many of lists being searched over to in fact be empty (which was the case for one of our quantum algorithms, \textbf{QLouvain}), the quantum algorithm is often very slow in practice. Interestingly, this behaviour is not made apparent by the usual asymptotic run-time analysis of the algorithm. 

Finally, we note that this kind of empirical run-time analysis is particularly useful for evaluation of quantum speedups of classical heuristics, such as those we consider in this work. In these cases, even if we know that the quantum algorithm achieves a per-step speedup over the classical algorithm, we will not know how much of this speedup survives when the algorithm is run to convergence, suggesting the need for an empirical approach to run-time estimation.

\

\noindent In the context of evaluating the potential of quantum algorithms in real-world settings, we argue that it would be useful to make the sorts of analyses that we perform in this work more commonplace. One way to do this would be to include the option of simulating certain quantum algorithms, in the sense of this paper, within any of the existing quantum programming languages. Most of the work to do this would lie in proving good bounds on the run-times of various quantum primitives. Already for a couple of simple primitive we found this to be an extensive endeavour, but the upshot is that this work would only need to be performed once. With such tools at their disposal, we imagine that it would be easier for quantum algorithms designers to tailor their algorithms to particular tasks and datasets, and to more rapidly prototype ideas for quantum speedups without first undertaking an in-depth mathematical study.

\paragraph{Funding information}
CC was supported by QuantERA project QuantAlgo 680-91-034, with further funding provided by QuSoft and CWI. MF and JW were supported by the Dutch Ministry of Economic Affairs and Climate Policy (EZK), as part of the Quantum Delta NL programme. IN was supported by the DisQover project:  a  collaboration between QuSoft and ABN AMRO, and recieved funding from ABN AMRO and CWI. 

\paragraph{Acknowledgements} We would like to thank Harry Buhrman, Arjan Cornelissen, and Ian Marshall for helpful discussions, and Joran van Apeldoorn for providing tips on using Grover search to find the first item in a list. We Also thank Ton Poppe and Edo van Uitert from ABN AMRO for suggesting to study the use-case of community detection. The numerics of Section~\ref{sec:numerics} were carried out on the Dutch national e-infrastructure with the support of the SURF Cooperative.

\appendix

\section{The number of moves made by the Louvain algorithm}
\label{app:number_of_moves_main_section}

In this section we investigate the number of moves performed by the variants of the Louvain algorithm discussed in this paper as a function of the number of nodes $n$ of the input graph. To start with, we provide a loose upper bound on the number of moves in Section~\ref{app:louvain_total_moves}. Next, we numerically investigate the number of moves performed by the Louvain algorithm on actual datasets in Section~\ref{sec:number_moves}.

\subsection{A bound on the total number of moves}\label{app:louvain_total_moves}
First, we point out that there is a trivial upper bound on how long the Louvain algorithm takes to finish. Write $T$ for the maximum possible number of moves that can be made before the (first phase of the) algorithm terminates. Since modularity is trivially bounded between two constants:
\[
    |Q| \leq \frac{1}{2W} \sum_{u,v \in V} A_{uv} + \frac{1}{4W^2}\sum_{u \in V} s_u \sum_{v\in V} s_v \leq 2,
\]
and each move must strictly increase the modularity, it suffices to bound the smallest amount by which $Q$ can increase after a single move. 

Recall that the change of modularity when moving a vertex $u$ from community $C_{\ell(u)}$ to community $C_{a}$ is
\[
	\Delta_u^a = \frac{S_u^{a} - S_u^{\ell(u)}}{W} - \frac{s_u  \left(\Sigma_{a} - \Sigma_{\ell(u)} + s_u \right)}{2W^2}.
\]
The values in the numerators of the terms are sums over weights of edges, and $W$ is the sum of all weights in the graph. Recall that we have $n = |V|$ vertices. If the weights on the edges are integers with $O(\log n)$-bit representations, then it is clear that the smallest non-zero value of $\Delta_u^a$ for $u \in V$ and $a \in [n]$ is $\Delta_{\min} = 1/W^2 = \frac{1}{|E|^2 \cdot O(\poly(n))}$. For an unweighted graph, this is just $\frac{1}{|E|^2}$. Hence, the maximum number of moves $T$ that can be made before there are no more moves that can increase modularity is $2 / \Delta_{min} = |E|^2 \cdot O(\poly(n)) = O(\poly(n))$. For an unweighted graph, we have in particular $T = O(|E|^2)$. 

Since $\frac{1}{p_k} \leq n$ (for every $k$), we can upper bound the time complexity of the Louvain algorithm by
\begin{itemize}
    \item $\mO(nd|E|^2)$ for unweighted graphs, and
    \item $\mO(\poly(n))$ for weighted graphs with weights expressed with $\log(n)$ bits.
\end{itemize}
We conclude that the Louvain algorithm is at worst a polynomial-time algorithm, although its run-time in practice will depend on the particular problem instance. In practice, the Louvain algorithm is mostly used for large $n$, small (constant) $d$ sparse graphs\footnote{See references in the Introduction.}.

\subsection{The number of moves for actual datasets}
\label{sec:number_moves}

\begin{figure}[htb]
    \centering
    \includegraphics[width=0.6\linewidth]{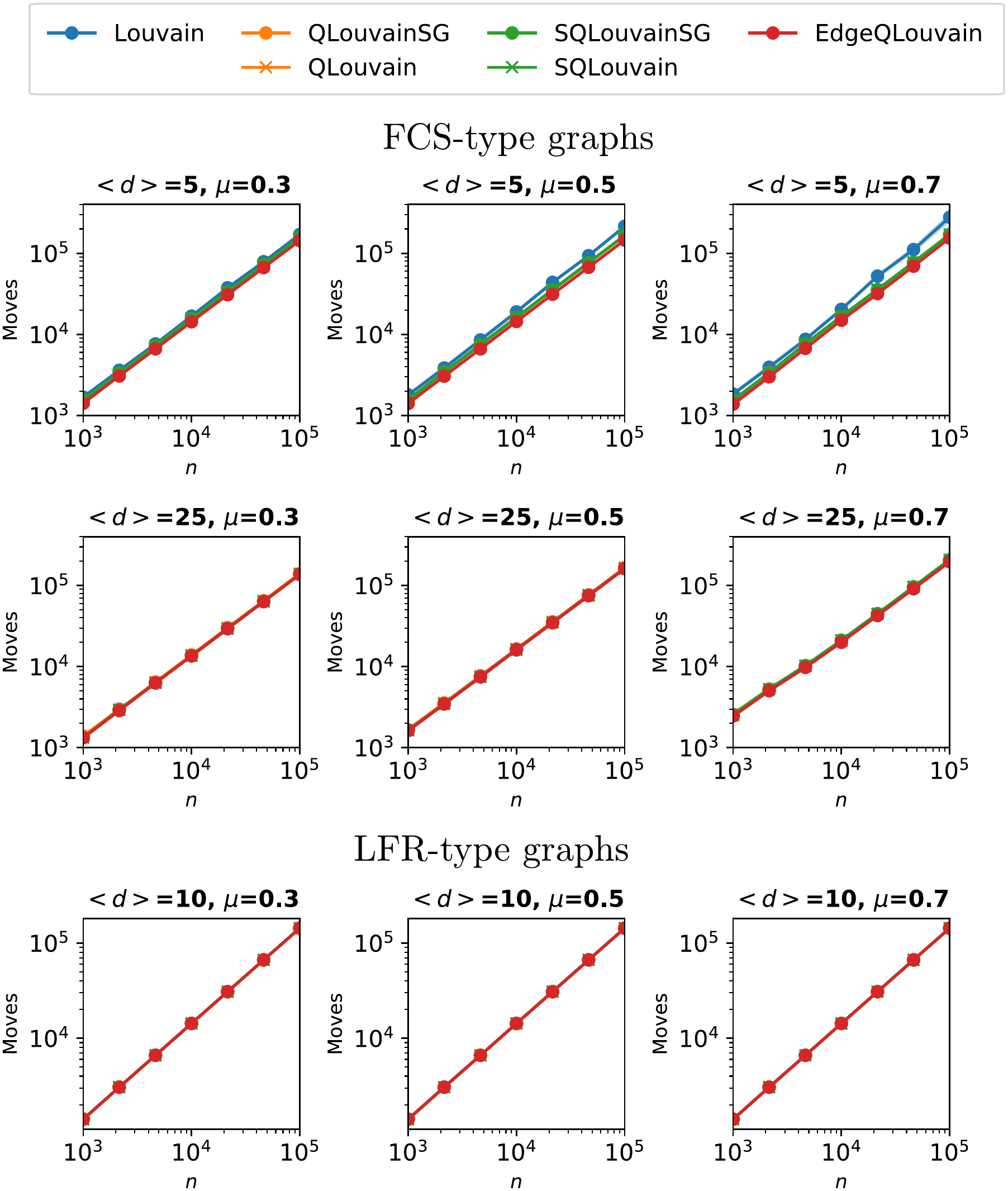}
    \caption{The average number of moves of all considered algorithms as a function of the graph size $n$.}
    \label{fig:moves_overview}
\end{figure}

Next, we numerically investigate the number of moves performed by the variants of the Louvain algorithm discussed in this paper as a function of the number of nodes $n$ of the input graph. Our results can be found in Fig.~\ref{fig:moves_overview}.

Based on Fig.~\ref{fig:moves_overview} we find that for all quantum algorithms the amount of moves scale similarly in graph size $n$ as compared to the Louvain algorithm, with the biggest differences occurring for graphs with a lower average degree $\langle d \rangle$. In agreement with~\cite{lancichinetti2009community}, the data backs up the claim that the number of moves is $\mO(n \log(n))$.

\section{Numerical results on the data structure}
\label{sec:runtime_OL_LL}

In Section~\ref{sec:complexity_of_louvain} we mentioned that adding an extra data structure allows for a run time speedup of OL (original Louvain). In this section we show numerical results supporting this claim.  

We compare the run time between OL and OL with extra data structure using a python implementation of Louvain given by \cite{python_louvain}. For OL we use exactly this implementation of louvain. For OL with extra data structure we changed the code to incorporate the data structure as described in  Section~\ref{sec:complexity_of_louvain}. Run time tests are done in real time by tracking how long the algorithm takes to converge. The algorithms should in principle converge to a similar solutions since they differ only in their internal randomness. We also do a memory test, to see how much extra memory is used by introducing the extra data structure. This memory test registers the peak usage of memory. 

\begin{figure}[htbp]
    \centering
    \includegraphics[width=\linewidth ]{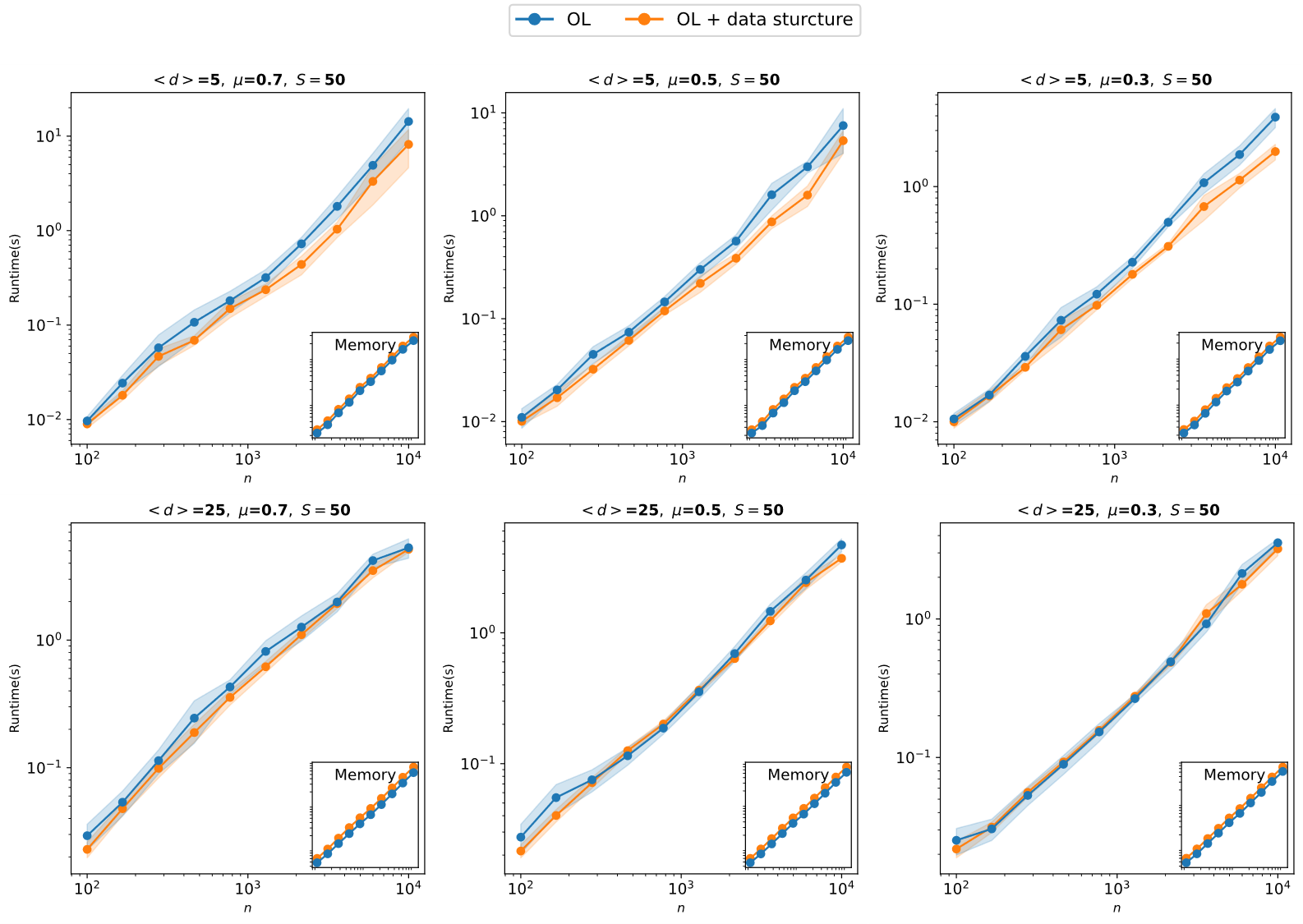}
    \caption{Numerical results for run time comparison between original Louvain and original Louvain with extra data structure. The algorithms were tested on FCS-type graphs with fixed community size $S=50$, and average degrees $\langle d\rangle = 5$ (top) and $\langle d\rangle = 25$ (bottom). From left to right community structure is increased by decreasing $\mu$. The horizontal axis indicates the amount of nodes $n$ and the vertical axis the total run time (top). Each point corresponds to an average over 10 randomly generated graphs and the shaded area represents the standard deviation. The box in the bottom right shows peak memory usage (the number of bits).}
    \label{fig:runtime}
\end{figure}

The algorithms are tested on FCS- and LFR-type graphs as generated by the algorithm in Appendix~\ref{sec:random_graphs} and the NetworkX implementation~\cite{Hagberg2008}. Every instance is run on $10$ graphs, averaged and shown in Figure \ref{fig:runtime}. As predicted, OL with data structure slightly outperforms OL in run time when the average degree is low ($\langle d\rangle = 5$). This run time advantage is lost when the average degree increases ($\langle d\rangle = 25$). There is an extra constant overhead in memory when the extra data structure is introduced, as expected.

\section{Quantum Louvain with variable time amplitude amplification}\label{app:VTAA}
In this appendix we describe in detail our quantum algorithm for community detection based on the technique of variable time amplitude amplification, and give a proof of its correctness and run-time. In~\cite{ambainis12}, Ambainis describes a version of amplitude amplification for the situation where the subroutine used by the quantum algorithm takes a different time to finish for each branch. This algorithm goes by the name variable time amplitude amplification (VTAA). Intuitively, VTAA executes amplitude amplification with a subroutine that can have different stopping times, with a final run-time that depends on some average of the individual stopping times, rather than being limited by the slowest branch as is the case in the algorithm presented in the previous section. We note that it is also possible to use variable time \emph{Grover search}\footnote{Here, we have a collection of $n$ items $x_1,\dots,x_n$ and we would like to find an $i : x_i = 1$. Let $t_i$ be the number of time steps required to evaluate each $x_i$. Then variable time Grover search can find an $i$ in time $\tilde{O}(\sqrt{t_1 + t_2 + \cdots + t_n})$. If there are multiple $i$'s satisfying $x_i=1$, then the algorithm actually becomes \emph{slower} (by a constant factor).}~\cite{ambainis2010quantum} to obtain this behaviour, but here the algorithm assumes that there is only a single marked item and the run-time does not improve when there are multiple marked items, and hence the run-time of the algorithm will generally be quite poor, taking a time that is roughly $\tilde{O}(\sqrt{\delta_{\avg} |L|})$ per move compared to the classical $O(\delta_{\avg} / f)$, with $k$ the fraction of good vertices present in $L$.

\ 

\noindent Here we use VTAA with a classical subroutine $\mA$ that checks the neighbouring communities of a vertex one by one, requiring $\delta_u$ calls to the unitary for $g_\Delta$ and other operations for each vertex $u$. Let $L$ be a list of vertices, and let $n \geq |L|$ be an upper bound on its size. The algorithm $\mA$ acts on four registers, a $\log(n)$-sized vertex register, two $\log(\delta_{\max})$-sized neighbor-index registers, and 2-qubit flag register that can take the values 0, 1 and 2. The flag states correspond to: 0 \emph{found no neighbor to move to}, 1 \emph{found a neighbor to move to}, and 2 \emph{still searching}. $\mA$ consists of the sequence of unitaries: $\mA = \underbrace{\mA_c \cdots  \mA_c}_{\delta_{\max} \,\, \text{times}} \mA_s$, where
\begin{equation}\label{eq:A_s}
    A_s \ket{0} \ket{0} \ket{0} \ket{0} = \frac{1}{\sqrt{|L|}} \sum_{u\in L} \ket{u} \ket{1} \ket{0} \ket{2}
\end{equation}
sets up the initial state, and then each of the remaining $\mA_c$'s sequentially check the neighboring communities of all vertices in superposition. The third register, which is initially set to 0, keeps track of the neighboring community index that currently maximises $\Delta_u^{\eta_u(j)}$. Here we use the convention that $\Delta_u^{\eta_u(0)} = 0$. Now, $\mA_c$ acts on basis states as
\begin{equation}\label{eq:A_c}
    \mA_c \ket{u} \ket{j} \ket{j_{\max}} \ket{f} := 
    \begin{cases}
        \ket{u} \ket{j} \ket{j} \ket{2} & \text{if} \quad f = 2, \quad j < \delta_u \quad \text{and} \quad \Delta_u^{\eta_u(j)} > \Delta_u^{\eta_u(j_{\max})}  \\
        \ket{u} \ket{j+1} \ket{j_{\max}} \ket{2} & \text{if} \quad f = 2, \quad j < \delta_u \quad \text{and} \quad \Delta_u^{\eta_u(j)} \leq \Delta_u^{\eta_u(j_{\max})}  \\
        \ket{u} \ket{j} \ket{j} \ket{1} & \text{if} \quad f = 2, \quad j = \delta_u \quad \text{and} \quad \Delta_u^{\eta_u(j)} > \Delta_u^{\eta_u(j_{\max})}  \\
        \ket{u} \ket{j} \ket{j_{\max}} \ket{1} & \text{if} \quad f = 2, \quad j = \delta_u, \quad \Delta_u^{\eta_u(j)} \leq \Delta_u^{\eta_u(j_{\max})}  \\
        & \text{and} \quad  \Delta_u^{\eta_u(j_{\max})} > 0 \\
        \ket{u} \ket{j} \ket{j_{\max}} \ket{0} & \text{if} \quad f = 2, \quad j = \delta_u, \quad \Delta_u^{\eta_u(j)} \leq \Delta_u^{\eta_u(j_{\max})}  \\
        & \text{and} \quad  \Delta_u^{\eta_u(j_{\max})} = 0 \\
        \ket{u} \ket{j} \ket{j_{\max}}  \ket{f} & \text{if} \quad f = 0 \quad \text{or} \quad f = 1,
    \end{cases}
\end{equation}
where $\eta_u(j) \in \zeta_u$ is the $j$-th neighboring community of $u$. 

The algorithm $\mA_c$ is just a coherent implementation of a classical algorithm that, given a vertex $u$, computes $\Delta_u^{\alpha}$ for the neighboring communities $\alpha \in \zeta_u$ of $u$ one by one. After visiting all neighboring communities, $\mA_c$ stops and puts in the fourth register either a $0$, signifying that no neighbouring communities of $u$ offer a good move, or otherwise puts a $1$. Because different vertices will have a different number of neighboring communities, $\mA_c$ has several different stopping times. 

If we were to run $\mA$ on the all zeros state, we would obtain the final state
\begin{eqnarray}
    \mA \ket{0}\ket{0}\ket{0}\ket{0} &=& \frac{1}{\sqrt{|L|}} \sum_{u \in L} \ket{u}\ket{\delta_u}\ket{j_u}\ket{\bar{\Delta}_u > 0?} \\
    &=& \alpha_0 \ket{\psi_0}\ket{0} + \alpha_1 \ket{\psi_1}\ket{1} =: \ket{\psi_{\text{final}}}\,,
\end{eqnarray}
where $\ket{\psi_b}$ is the state on the first three registers corresponding to the branch of the superposition in which the flag register is in state $\ket{b}$. The goal of amplitude amplification is to amplify the part of the state in which the flag register is in state $\ket{1}$, i.e. to amplify the amplitude $|\alpha_1|$ to $\geq 1/\sqrt{2}$, since this part of the superposition contains the indices of vertices for which a good move is available. The goal of \emph{variable time} amplitude amplification is to achieve this amplification using a number of applications of the constituent unitaries $\cA_c$ that takes into account the different stopping times for $\cA$ on different branches.

Hence, our approach is to use the algorithm of Lemma~\ref{lem:VTAA}, in which case the set of different stopping times $\{t_1,\ldots, t_m\}$ appearing in Eq.~\eqref{eq:stopping_times} for $\mA_c$ will be the set of numbers of neighboring communities $\{\delta_u: u\in L\}$, and $T_{\max} = \delta_{\max}$ (this requires us to know $\delta_{\max}$ before applying the algorithm, but we note that this can easily be kept track of and updated after every vertex move). In order to be able to use VTAA, we need to check that $\mA_c$ meets the necessary requirements outlined in~\cite{ambainis2010quantum}. To this end, for $i \in [\delta_{\max}]$ define $\mH_i := \text{Span} \left( \{ \ket{u} \ket{j} \ket{*}: u\in L, 1\leq j \leq i \} \right)$, where $\ket{*}$ means there is no condition on the $j_{\max}$ register. Note that the subspaces $\mH_i$ do not involve the flag register. Then we can prove Theorem~\ref{theo:VTAA}, which is restated below for convenience.

\VTAA*

\begin{proof}
We will use the unitaries $\mA_c$ and $\mA_s$ defined above to construct the algorithm. In order to be able to apply Lemma~\ref{lem:VTAA}, we first must check that $\mA_c$ and $\mA_s$ satisfy the various conditions described in~\cite{ambainis12}. In particular, we need to check that the following hold:
\begin{enumerate}
    \item For $i \in [\delta_{\max}-1]$, $\mH_i \subseteq \mH_{i+1}$. 
    \item For $i \in [\delta_{\max}]$, we should have that the state $\ket{\psi_i}$ obtained after $i$ applications of $\mA_c$ can be expressed as 
    \[
        \ket{\psi_i} = \underbrace{\mA_c \cdots  \mA_c}_{i \,\, \text{times}} \mA_s \ket{0} \ket{0} \ket{0} =
        \alpha_{i,0} \ket{\psi_{i,0}} \ket{0} + \alpha_{i,1} \ket{\psi_{i,1}} \ket{1} + \alpha_{i,2} \ket{\psi_{i,2}} \ket{2},
    \]
    where $\ket{\psi_{i,0}} \in \mH_i$, $\ket{\psi_{i,1}} \in \mH_i$, and $\ket{\psi_{i,2}} \in (\mH_i)^\perp$.
    \item For $i\in [\delta_{\max}]$ and $P_{\mH_i}$ the projector onto space $\mH_i$, we have
    \begin{equation}
        P_{\mH_i} \ket{\psi_{i+1,0}} = \ket{\psi_{i,0}} \quad \text{and}
        \quad P_{\mH_i} \ket{\psi_{i+1,1}} = \ket{\psi_{i,1}}.
        \label{eq:AcProj}
    \end{equation}
\end{enumerate}

These conditions clearly hold for $\mA_s$, and so we will focus on the unitary $\mA_c$. Condition 1 holds by definition of the subspaces $\{\mH_i\}_{i\in [\delta_{\max}]}$. In order to verify that condition 2 holds, given $i \in [\delta_{\max}]$, we observe the following.
\begin{itemize}
    \item $\ket{\psi_{i,0}}$ is a superposition over vertices $u$ for which $\delta_u \leq i$ and $\Delta_u^{\eta_u(j_{\max})} = 0$. In particular, for every $u$ in the superposition, its neighboring community index is set to $\delta_u \leq i$, and  hence $\ket{\psi_{i,0}} \in \mH_i$.  
    \item $\ket{\psi_{i,1}}$ is a superposition over vertices for which $\delta_u \leq i$ such that $\bar{\Delta}_u = \Delta_u^{\eta_u(j_{\max})} > 0$. In particular, for every vertex $u$ in the superposition, its neighbor index is set to $\delta_u \leq i$ and hence $\ket{\psi_{i,1}} \in \mH_i$.  
    \item  $\ket{\psi_{i,2}}$ is a superposition over vertices $u$ for which $\delta_u > i$ (otherwise the flag would have been set to 0 or 1). In particular, all vertices $u$ in the superposition have their neighbor index set to $i+1$, and therefore$\ket{\psi_{i,2}} \in (\mH_i)^\perp$. 
\end{itemize}
For condition 3, we notice that when we apply $\mA_c$ to $\ket{\psi_i}$, $\mA_c$ only acts on $\ket{\psi_{i,2}}$, and then sets some flags for vertices $u$ in the superposition $\ket{\psi_{i,2}}$ to 0 or 1. The vertices for which the flag was set to 0 or 1 all have their neighbor index set to $i+1$. Thus, when we apply the projection opertor $P_{\mH_i}$ to $\ket{\psi_{i+1,0}}$ or $\ket{\psi_{i+1,1}}$, these newly added vertices project to 0, and therefore Eq.~\eqref{eq:AcProj} is satisfied.

\ 

Having verified that our subroutines $\cA_c$ and $\cA_s$ can be used inside VTAA, we turn our attention to the complexity of the resulting algorithm. Recall that $f$ is the fraction of vertices in $L$ that have a good move available (i.e. $\bar{\Delta} > 0$), and also let $p_i$ be the probability that, for a randomly chosen vertex $u$, the number of neighbouring communities of $u$ is $i$. Then the $l_2$ average over stopping times of $\cA_c$ is 
\[
    t^q_{\avg} = \sqrt{\sum_{i=1}^{\delta_{\max}} p_i i^2}\,
\]
and hence, using Lemma~\ref{lem:VTAA}, we can apply VTAA directly to $\cA_c$ and $\cA_s$ to obtain an algorithm that produces the state
\[
    \ket{\psi_{\delta_{\max}}} = \alpha_{\delta_{\max},0} \ket{\psi_{\delta_{\max},0}} \ket{0} + \alpha_{\delta_{\max},1} \ket{\psi_{\delta_{\max},1}} \ket{1}
\]
with $|\alpha_{\delta_{\max}, 1}|^2 \geq \frac{1}{2}$, by invoking $\cA_c$ and its inverse at most 
\[
    O\left(\delta_{\max} \log(\delta_{\max}) + \frac{t_{\text{avg}}^\text{q}}{\sqrt{f_k}} \log^{1.5}\delta_{\max}\right)
\]
times. By measuring the second register, we will project the first register onto $\ket{\psi_{\delta_{\max},0}}$ with probability $\geq 1/2$, at which point we can measure it to obtain the identity of a good vertex, and the best move available to it, selected at random from the set of all good vertices. By repeating this process $O(\log (1/\epsilon))$ times, we will obtain a good vertex with probability $\geq 1-\epsilon$.

Finally, we note that every application of $\cA_c$ requires $O(1)$ function calls to $g_\Delta$ (whilst $\cA_s$ doesn't require any), and hence the number of function calls made by the algorithm is the same (up to constant multiplicative overhead) as the number of uses of $\cA_c$ and its inverse. 
\end{proof}

Using \textbf{VertexFindVTAA}, we can proceed to construct new VTAA-based versions of the algorithms for community detection described above. As an example, we can construct an analogue to \textbf{SimpleQLouvain}, whose run-time will now become
\[
    \sum_{k \in [T]} \tilde{O}\left(\delta_{\max} + \frac{t_{\text{avg}}^\text{q}}{\sqrt{f_k}} \right)\,,
\]
where the $t_{\text{avg}}^\text{q}$ is as in Theorem~\ref{theo:VTAA}, and we choose $\zeta \leq 1/(3T)$ as the failure probability of \textbf{VertexFindVTAA}, to ensure that every step of the algorithm succeeds with high probability. 

\ 

In contrast, a classical algorithm that searches for good vertices with replacement will make 
\[
    t_{\text{avg}}^\text{c} = \sum_{i=1}^{\delta_{\max}} p_i i
    \label{eq:delta_avg}
\]
function calls per move on average, leading to a classical run-time of 
\begin{equation}
    \sum_{k \in [T]} O\left(\frac{\delta_{\avg}}{f_k} \right).
\end{equation}

The $t^q_{\avg}$ appearing in the quantum complexity is the `2-norm average' of the stopping times, rather than the 1-norm average of Eq.~\eqref{eq:delta_avg} that appears in the classical complexity. If the number of neighbouring communities is constant, then $t^{\text{q}}_{\avg} = \delta_{\avg}$.

\section{Generation of FCS-type graphs}

Algorithm~\ref{alg:FCS} describes the algorithm we use to generate FCS-type random graphs.

\label{sec:random_graphs}
\begin{algorithm}[H]
  \caption{FCS-type graph generation}
  \label{alg:graph_generation1}
   \begin{algorithmic}[1]
    \Function{GraphGenerationFCS}{$n$, $S$, $\mu$, $\langle d\rangle$}
    \State Initialize a graph $G=(V,E)$ where $V=\{1,\dots,n\}$ and $E=\emptyset$. Define community labels $L=\{1,\dots,\lceil n/S  \rceil \}$. 
    \State Set $l_u=u\mod{S}$ for all $u\in \{1,\dots,\lfloor n/S \rfloor\}$, and set $l_u = \lceil n/S \rceil $ for all $u\notin \{1,\dots,\lfloor n/S \rfloor\}$.  Let $V_l\subset V$ be the set of nodes in community $l$. 
    \State Set $k = \langle d\rangle n$ as the counter of the remaining edges to be added. 
    \While{$k>0$}
        \State  pick $l\in L$ randomly uniform
        \State  pick $u\in V_l$ randomly uniform, with probability $1 - \mu$ pick $v$ from $V_l$ uniformly at random , and with probability $\mu$ pick $v$ from  $V\setminus V_l$ uniformly at random.
        \If{$(u,v) \notin E$}
            \State $E \gets E \cup \{(u,v)\} $
            \State  $k \gets k-1$
        \EndIf
    \EndWhile
    \State \Return $\mathcal{G}=(V,E)$
    \EndFunction
    \end{algorithmic}
    \label{alg:FCS}
\end{algorithm}

\bibliography{references.bib}
\bibliographystyle{plain}

\end{document}